%% file: main.tex
\documentclass[12pt]{article}
\usepackage[utf8]{inputenc}
\usepackage{graphicx}
\usepackage{amsmath}
\usepackage{amsthm}
\usepackage{natbib}
\usepackage{url}
\newtheorem{claim}{Claim}
\newtheorem{lemma}{Lemma}
\newtheorem{assum}{Assumption}

\begin{document}

\title{New numerical methods for calculating statistical equilibria of two-dimensional turbulent flows, strictly based on the Miller-Robert-Sommeria theory}
\author{K Ryono\footnote{Current affiliation: Research Institute for Applied Mechanics, Kyushu University.}, K Ishioka\\Graduate School of Science, Kyoto University, \\
Kitashirakawa-Oiwake-cho, Sakyo-ku, Kyoto 606-8502, Japan}
\date{Accepted: 3 October 2022}

\begin{abstract}
New numerical methods are proposed for the mixing entropy maximization problem in the context of Miller-Robert-Sommeria's statistical mechanics theory of two-dimensional turbulence, particularly in the case of spherical geometry. Two of the methods are for the canonical problem; the other is for the microcanonical problem. The methods are based on the original MRS theory and thus take into account all Casimir invariants. Compared to the methods proposed in previous studies, our new methods make it easier to detect multiple statistical equilibria and to search for solutions with broken zonal symmetry. The methods are applied to a zonally symmetric initial vorticity distribution which is barotropically unstable. Two statistical equilibria are obtained, one of which has a wave-like structure with zonal wavenumber 1, and the other has a wave-like structure with zonal wavenumber 2. While the former is the maximum point of the mixing entropy, the wavenumber 2 structure of the latter is nearly the same as the structure that appears in the end state of the time integration of the vorticity equation. The new methods allow for efficient computation of statistical equilibria for initial vorticity distributions consisting of many levels of vorticity patches without losing information about all the conserved quantities. This means that the statistical equilibria can be obtained from an arbitrary initial vorticity distribution, which allows for the application of statistical mechanics to interpret a wide variety of flow patterns appearing in geophysical fluids.
\end{abstract}
\noindent{\it Keywords\/}: Two-dimensional turbulence, Miller-Robert-Sommeria theory, Statistical equilibrium, Mixing entropy, Casimir invariants
\maketitle

\textit{This is the Accepted Manuscript version of an article accepted for publication in Fluid Dynamics Research. IOP Publishing Ltd is not responsible for
 any errors or omissions in this version of the manuscript or any version derived from it. The Version of Record is available online at \url{https://doi.org
/10.1088/1873-7005/ac9713}.}

\input{Ryono_Ishioka_main_R2}

\clearpage

\setcounter{section}{0}
\setcounter{figure}{0}
\setcounter{table}{0}
\setcounter{equation}{0}

\renewcommand{\thesection}{S\arabic{section}}
\renewcommand{\thefigure}{S\arabic{figure}}
\renewcommand{\thetable}{S\arabic{table}}
\renewcommand{\theequation}{S\arabic{equation}}

\begin{center}
    \large \bf Supplementary Material for ``New numerical methods for calculating statistical equilibria of two-dimensional turbulent flows, strictly based on the Miller-Robert-Sommeria theory''
\end{center}
\vspace*{5mm}
\input{Ryono_Ishioka_supplement_R2}

\end{document}

%% file: Ryono_Ishioka_main_R2.tex
\section{Introduction}
The effort to introduce equilibrium statistical mechanics into the dynamics of two-dimensional fluids began with the pioneering work of \cite{onsager1949statistical}. The negative temperature state found by Onsager was investigated in more detail by \cite{joyce1973negative} and \cite{montgomery1974statistical}. In the negative temperature state, vortices of the same sign cluster together. Based on research since the 1980s, this can be regarded as corresponding to the formation of coherent vortices in decaying two-dimensional turbulence discovered by \cite{mcwilliams1984emergence}. In studies prior to the 1970s, models of statistical mechanics were based on the dynamics of point vortices; however, \cite{miller1990statistical} and \cite{robert1991statistical} independently presented theories applicable to continuous vorticity fields. Following their theory (Miller-Robert-Sommeria theory or MRS theory), a number of studies have attempted to interpret characteristic flow structures arising in (nearly) two-dimensional systems as statistical mechanical equilibria. These efforts have been particularly evident in the study of flows on the Earth and other planets, including Jupiter's Great Red Spot and jets \citep[e.g.,][]{miller1992statistical,michel1994statistical,turkington2001statistical,bouchet2002emergence,chavanis2005statistical}, the polar vortex \citep{prieto2001analytical,yasuda2017new}, oceanic jets and ocean rings \citep{venaille2011oceanic}, and the collapse of typhoon eyewalls \citep{schubert1999polygonal}. 

Notably, however, methods for computing the statistical equilibrium of individual systems are not yet well established. The statistical equilibrium state is defined as the redistribution of the initial vorticity where the mixing entropy is maximized, under the constraints of conservation laws. To compute the equilibrium, it is necessary to solve an optimization problem under nonlinear and nonconvex constraints, a mathematically difficult problem. In particular, there may be more than one point of local maximum of the mixing entropy, and it is necessary to compare the values of their mixing entropy. To date, a variety of numerical methods have been proposed, including the relaxation method by \cite{robert1992relaxation}, the iterative method by \cite{turkington1996statistical}, and the continuation method by \cite{thess1994inertial}. These are ingenious methods for finding a critical point of mixing entropy. However, the relaxation method tends to be computationally expensive and might not be practical unless the initial vorticity distribution is relatively simple (e.g., two or three levels of vorticity patches). The iterative method is fast, but the computed equilibria are determined in a discontinuous and uncontrolled way \citep{bouchet2012statistical}, which makes it difficult to produce a global picture of the statistical equilibria. Furthermore, it is not always easy to set appropriate initial points from which the iteration converges. The continuation method was devised for a relatively special system, and it is doubtful that it can be applied to general initial vorticity distributions in a realistic computational time. In addition, it is not easy to detect multiple critical points with this method.

For geophysical applications, it would seem natural to consider the MRS theory for a two-dimensional sphere. Moreover, in pure fluid dynamics, two-dimensional turbulence on a sphere is an interesting subject, and it has been confirmed in numerical experiments that unique structures appear on a rotating sphere, such as jets at high latitudes \citep[e.g.,][]{yoden1993numerical}. Whether such structures can be interpreted from the viewpoint of statistical mechanics is an interesting question. In flow domains with some symmetry, such as a two-dimensional sphere, there can be both critical points with zonal symmetry and critical points with broken zonal symmetry. If the initial flow has a zonal symmetry, it is difficult to reach the non-zonal critical points computationally using the above-mentioned methods. For example, \cite{prieto2001analytical} applied MRS theory to a zonal initial flow on a sphere that mimics a polar night jet and computed a statistical equilibrium with zonal symmetry using the iterative method. However, computing multiple non-zonal equilibria with this method would appear to be quite difficult. Even if we were to extend the relaxation method of \cite{robert1992relaxation} on a sphere, the zonal symmetry of the initial vorticity field cannot be broken and a non-zonal critical point cannot be reached. This is due to the nature of the relaxation method, which is to mix the initial vorticity field in a time-evolving manner so that the mixing entropy increases. \cite{ishioka1998} devised a numerical method that can be applied to general vorticity distributions which was successful in computing a symmetry breaking solution. However, due to its large computational cost, the application of this method is limited to the computation of very coarse spatial resolutions.

Largely due to these technical limitations, when MRS theory is applied to geophysical flows, simplified versions of the original mixing entropy maximization problem have often been used to compute the statistical equilibria. For example, \cite{chavanis1996classification} proposed a method in which a linear relationship was assumed between vorticity and the stream function in equilibrium by considering a strong mixing limit, which relates to energy-enstrophy theory \citep{kraichnan1975statistical,kraichnan1980two}.
\cite{herbert2012statistical} and \cite{herbert2013additional} also used this simplified version of the mixing entropy maximization problem in their studies of the statistical equilibria of flows on a spherical domain. As another simplification, a variational problem of the stream function was considered to determine the shape of jets appearing in statistical equilibria \citep[e.g.,][]{bouchet2002emergence,venaille2011oceanic}. (For more details on the relationships between these simplified methods and the original mixing entropy maximization problem, see \cite{bouchet2008simpler}.) It should be noted, however, that the results obtained by these simplified methods do not, in general, satisfy the conservation of the vorticity distribution determined by the initial vorticity field. In other words, most of the infinitely many Casimir invariants of inviscid incompressible two-dimensional fluids are not conserved. Although these simplifications have had some success in qualitatively evaluating the equilibrium state of the flow, the initial vorticity distribution needs to be taken into account if we want a more quantitative treatment of individual flows.

In the present manuscript, we propose new numerical methods to compute the statistical equilibria of flows on a sphere. These methods are able to find the local maxima of the mixing entropy maximization problem in its original form, as presented by Miller, Robert and Sommeria, and thus produce statistical equilibria that conserve the vorticity distribution determined by the initial vorticity field. In other words, the resulting solutions conserve all Casimir invariants from a microscopic viewpoint. Furthermore, even if the given initial vorticity distribution is a general one, the statistical equilibria can be computed by using the new methods in a realistic computational time. The new methods also allow for finding multiple local maxima of the mixing entropy and make it possible to compute equilibria with broken zonal symmetry. The remainder of the paper is organized as follows. In section \ref{MRS_theory}, we give a short description of the MRS theory and define two types of problems: canonical problems and microcanonical problems. The proposed numerical methods are described in section \ref{methods}. In section \ref{example}, we show an example of the application of the proposed new methods to a specific initial flow and compare the equilibrium state of the flow resulting from the time integration with the computed statistical equilibrium. A discussion of the results and their implications is provided in section \ref{discussion}, and conclusions are presented in section \ref{conclusion}. Definitions of the spherical harmonics and the associated Legendre functions, matters related to numerical integration, and notes on implementation and computational complexity are given in the two appendices. The mathematical arguments underlying the numerical methods in the study are detailed in the Supplement to the main manuscript.
 
\section{MRS theory and definitions of MRS statistical equilibria}\label{MRS_theory} 
In this section, we briefly explain the statistical mechanics theory (Miller-Robert-Sommeria theory or MRS theory) of \cite{miller1990statistical} and \cite{robert1991statistical}. We also define two problems: the microcanonical problem and the canonical problem. In solving each of these problems, we obtain the statistical equilibria.

\subsection{MRS theory and mixing entropy}
Although the original theories of \cite{miller1990statistical} and \cite{robert1991statistical} consider two-dimensional flows on a plane, the MRS theory can be extended to two-dimensional flows on a sphere, as in \cite{prieto2001analytical}, \cite{herbert2012statistical} and \cite{herbert2013additional}. In the following, we focus solely on the MRS theory as it applies to a sphere.

Let \(S\) be a sphere. The radius of \(S\) may be assumed to be \(1\) using a scaling argument. Let \(\lambda\,(0\leq \lambda\leq 2\pi)\) and \(\theta\,(-\frac{\pi}{2}<\theta<\frac{\pi}{2})\) be the longitude and latitude, respectively, of points on \(S\), excluding the north and south poles. We will use \(x=(\lambda,\mu)\,(\mu=\sin \theta)\) as the coordinates on the sphere \(S\). Then the area element on \(S\) can be denoted by 
\({\rm d} S = {\rm d} \lambda {\rm d} \mu\).
The incompressible inviscid two-dimensional flow on \(S\) obeys the Euler equation
\begin{align}
\frac{\partial q}{\partial t}+\left(\frac{\partial \psi}{\partial \lambda}\frac{\partial q}{\partial \mu}-\frac{\partial q}{\partial \lambda}\frac{\partial \psi}{\partial \mu}\right)=0. \label{R=1eq}
\end{align}
Here, \(\psi\) is the stream function, which is related to the zonal velocity \(u\) and meridional velocity \(v\) of the flow according to
\begin{align*}
u=-\sqrt{1-\mu^2}\frac{\partial \psi}{\partial \mu}, \qquad v=\frac{1}{\sqrt{1-\mu^2}}\frac{\partial \psi}{\partial \lambda},
\end{align*}
and \(q=\Delta \psi\) (\(\Delta\) is the Laplace-Beltrami operator of \(S\)) is the (absolute) vorticity. That is, the vorticity \(q\) of each parcel of the fluid is advected by the flow, and the value of \(q\) is materially conserved. 
This system has the conserved quantities described below: 

Energy
\begin{align}
E= \frac{1}{4\pi}\int_S\frac{1}{2} |\nabla \psi |^2 {\rm d} S = -\frac{1}{4\pi}\int_S \frac{1}{2}\psi q {\rm d} S.
\label{energy}
\end{align}

Angular momentum
\begin{align}
M_1 = \frac{1}{4\pi}\int_S q \mu {\rm d} S, \label{am1}\\
M_2= \frac{1}{4\pi}\int_S q\sqrt{1-\mu^2}\cos \lambda {\rm d} S, \label{am2}\\
M_3 = \frac{1}{4\pi}\int_S q\sqrt{1-\mu^2}\sin\lambda {\rm d} S. \label{am3}
\end{align}
Note that there are three components of angular momentum, corresponding to the three rotational degrees of freedom of a two-dimensional sphere. In addition, since the vorticity is advected by the incompressible flow, all of the Casimir invariants
\begin{align*}
C_f = \frac{1}{4\pi}\int_S f(q) {\rm d} S
\end{align*}
are conserved. Here, \(f\) is an arbitrary function. Thus, the system has an infinite number of conserved quantities.

We consider initial vorticity fields consisting of \(K\) vorticity patches (the value of the vorticity of each patch is \(Q_1,\cdots, Q_K\)), and let \(S_k\) be the area of the \(k\)-th vorticity patch. In general, as time goes on, the flow becomes turbulent and these vorticity patches are well mixed, keeping the vorticity value of each patch unchanged. If sufficient mixing occurs, the patches are repeatedly elongated and folded, and become filamentary. In the MRS theory, such fine-scaled, filamentary vorticity fields are considered microscopic states, and we assume that a macroscopic state corresponds to each microscopic state by taking the local average of the vorticity field. To formulate the above idea, we introduce \(r_k(x)\,(k=1,\cdots,K)\) as the probability of the \(k\)-th patch covering point \(x\). In other words, \(r_k(x)\) is the probability of the \(k\)-th patch observed at \(x\in S\). 
Due to the incompressible nature of the flow, it is necessary that
\begin{align}
\int_S r_k(x) {\rm d} S =S_k \qquad(k=1,\cdots,K).\label{c_imcp}
\end{align}
Since the sum of the probabilities must be unity at each point, 
\begin{align}
\sum_{k=1}^K r_k(x) = 1\qquad(\forall x \in S).\label{prob}
\end{align}
The macroscopic vorticity field \(\overline{q}\) is defined by
\begin{align*}
\overline{q}(x) = \sum_{k=1}^K Q_kr_k(x)\qquad (x\in S).
\end{align*}
The combination of \(r_1,\cdots,r_K\) is called the macroscopic state or macrostate \citep{robert1991statistical}. The macroscopic state must be consistent with the conservation laws of the flow system. Hence, we impose constraints corresponding to the conservation laws \eqref{energy}, \eqref{am1}--\eqref{am3} by replacing the microscopic field \(q\) with the macroscopic vorticity field \(\overline{q}\). The conservation of energy is given by
\begin{align}
-\frac{1}{4\pi}\int_S\frac{1}{2} \overline{\psi}\,\overline{q} {\rm d}S = E_{0}, \label{c_energy}
\end{align}
where \(\overline{\psi}\) is the macroscopic stream function fulfilling \(\Delta \overline{\psi}=\overline{q}\), and \(E_0\) is the energy value of the initial field. The conservation of angular momentum becomes 
\begin{align}
&\frac{1}{4\pi}\int_S \overline{q} \mu {\rm d} S = M_1^{\rm ini}, \label{c_am1}\\
&\frac{1}{4\pi}\int_S \overline{q} \sqrt{1-\mu^2}\cos\lambda {\rm d}S = M_2^{\rm ini}, \label{c_am2}\\
&\frac{1}{4\pi}\int_S \overline{q} \sqrt{1-\mu^2}\sin\lambda {\rm d} S = M_3^{\rm ini}, \label{c_am3}
\end{align}
where \(M_1^{\rm ini},M_2^{\rm ini},M_3^{\rm ini}\) are the initial values of each component of the angular momentum. Replacing \(q\) with \(\overline{q}\) is justified by the fact that these conserved quantities are the integrals of \(q\) multiplied by some smooth functions \citep{robert1991statistical}. The conservation of Casimir invariants can be regarded as a natural consequence of the condition of incompressibility \eqref{c_imcp}.

For a macroscopic state \(r_1,\cdots,r_K\), the mixing entropy \(S_{\rm mix}\) is defined by
\begin{align}
 S_{\rm mix} := -\frac{1}{4\pi}\sum_{k=1}^K\int_S r_k(x) \log r_k(x) {\rm d} S.
 \label{entropy_def}
\end{align}
\cite{robert1991maximum} proved that the vast majority of microscopic states concentrate on a macroscopic state at which the mixing entropy \(S_{\rm mix}\) is maximized. Hence, it is expected that the organized structure appearing after turbulent mixing can be obtained by finding such a macroscopic state. Thus, we need to consider a maximizing problem of the mixing entropy \(S_{\rm mix}\) under the constraints \eqref{prob}, \eqref{c_imcp}, \eqref{c_energy}, \eqref{c_am1}--\eqref{c_am3}. We call this problem the mixing entropy maximization problem, and call the solutions of this problem the statistical equilibria. We also call critical points of the mixing entropy  statistical equilibria, since not only the global maximum of \(S_{\rm mix}\), but also local maxima or saddles may, in some cases, be important.

\subsection{Microcanonical problem}
The mixing entropy maximization problem described in the previous subsection is referred to as the microcanonical problem, just as in general statistical mechanics, since energy conservation is included in the constraints of the maximization. In the microcanonical problem, the feasible region of the maximization is nonconvex due to the energy constraint. Thus, the problem is a nonlinear and nonconvex problem, in which there can be multiple local maxima.

\subsection{Canonical problem}
Let us introduce the inverse temperature parameter \(\beta\) and consider maximization of the free energy function
\begin{align*}
 F_{\beta}=S_{\rm mix}-\beta E
\end{align*}
under the constraints \eqref{prob}, \eqref{c_imcp}, \eqref{c_am1}--\eqref{c_am3}. Note that the energy constraint is not imposed. This problem is called the canonical problem, similar to the case of  statistical mechanics in which a system is attached to a large heat bath. Here, \(\beta\) can be negative. Admitting negative temperatures, as discovered by \cite{onsager1949statistical}, is an important feature of the statistical mechanics of two-dimensional fluids.

Mathematically, the canonical problem can be analyzed more easily than the microcanonical problem, since the feasible region of maximization is now convex. \cite{robert1991statistical} showed that there exists some \(\beta_c<0\) such that if \(\beta > \beta_c\), then the solution to the maximization problem is unique. On a sphere, this proposition leads to the fact that the solution to the problem has a zonal symmetry about some rotational axis of the sphere if \(\beta>\beta_c\) (in particular if \(\beta=0\)). From the method of Lagrange multiplier, the sets of critical points for the microcanonical and canonical problems coincide. We can see from the general theory of mathematical optimization that a local maximum of the microcanonical problem is also a local maximum of the canonical problem for some \(\beta\). Meanwhile we cannot tell that a local maximum of the canonical problem is also a local maximum of the microcanonical problem for the corresponding energy value. If the nature of a critical point (local maximum, saddle, or local minimum) differs between the two problems, then it is called that there is an ``ensemble inequivalence'' \citep{bouchet2005classification}.

\section{Numerical methods}\label{methods}
In this section, we propose numerical methods of computing the statistical equilibria, i.e., solutions to the microcanonical or canonical problems described in the previous section. For the canonical problem, two different methods are proposed. For the microcanonical problem, we propose an algorithm which enables us to search for equilibria nearly globally, by considering the geometry of the feasible region of the problem.

\subsection{The bridge-building method for the canonical problem}\label{newton_method}
The first method for the canonical problem is to solve a set of equations of Lagrange multipliers for the constraints by using the Newton's method, and then apply the ``bridge-building method'' described later to obtain other solutions. 

First, we describe the discretization of the sphere. We take \(I\times J\) points \((\lambda_i, \mu_j)\,(i=1,\cdots,I,\,j=1,\cdots,J)\) as the points of a grid on the sphere, where \(\lambda_i=2\pi i/I\), \(\mu_j\) is the \(j\)-th Gaussian node defined in Appendix, \(I\) is an even number, and \(J=I/2\). Let \(N\) be the truncation wavenumber of spherical harmonics expansion of functions on the sphere. {Since the energy conservation is imposed as one of the constraints in the MRS theory, we must evaluate the energy of each macroscopic vorticity field. By introducing the spherical harmonics expansion, the energy is expressed in a simple quadratic form of the expansion coefficients.} We impose the inequality \(I\geq 3N+1\), which is necessary for avoiding aliasing error arising from the advection term when using the spectral method on a sphere \citep[see e.g.,][]{durran2010numerical}. Although not all aliasing errors can be avoided because our calculation involves logarithmic functions, we adopt the criterion of truncation wavenumber given by the inequality. We discretize the macroscopic state \(\{r_k\}\) by
\begin{align*}
 r_{ijk} := r_k (\lambda_i,\mu_j)\qquad (1\leq i\leq I,\,1\leq j\leq J,\,1\leq k\leq K).
\end{align*}
Then the value of the macroscopic vorticity at each grid point \(\overline{q}_{ij}=\overline{q}(\lambda_i,\mu_j)\) is given by
\begin{align}
 \overline{q}_{ij}= \sum_{k=1}^K Q_k r_{ijk}. \label{qij}
\end{align}
By approximating integrals, constraint \eqref{c_imcp} becomes
\begin{align}
 \sum_{i=1}^I\sum_{j=1}^J w_j r_{ijk} -\frac{1}{2\pi} S_k =0\qquad(k=1,\cdots,K),\label{d_imcp}
\end{align}
where the \(w_j\)'s are the normalized Gaussian weights defined in the Appendix. Note that the integral of a function over a sphere is approximated as follows:
\begin{align*}
 \frac{1}{4\pi}\int_S f(\lambda,\mu) {\rm d} S \approx \frac{1}{2}\sum_{i=1}^I\sum_{j=1}^J w_j f(\lambda_i,\mu_j).
\end{align*}
Discretization of the constraint \eqref{prob} is given by
\begin{align}
 \sum_{k=1}^K r_{ijk} - 1=0 \qquad(\forall(i,j)\neq(I,J)).\label{d_prob}
\end{align}
Note that the case of \((i,j)=(I,J)\) is excluded, since this case follows if we assume that the other cases and \eqref{d_imcp} hold. We can define the discretized form of the mixing entropy as below:
\begin{align*}
 S_{\rm mix} = -\frac{1}{2}\sum_{i=1}^I\sum_{j=1}^J\sum_{k=1}^K w_j r_{ijk}\log r_{ijk}.
\end{align*}
Recall that the \(r_{ijk}\)'s are nonnegative, i.e.,
\begin{align}
 r_{ijk} \geq 0\qquad (1\leq i\leq I,\,1\leq j\leq J,\,1\leq k\leq K).\label{positiveness}
\end{align}

To make the structure of macroscopic vorticity fields clear, we expand \(\overline{q}\) by using the spherical harmonics (the definition of spherical harmonics is given in the Appendix) as
\begin{align}
\overline{q}(\lambda, \mu)=\sum_{n=0}^\infty \sum_{m=-n}^n \hat{\zeta}_{m,n} Y_{m,n}(\lambda, \mu).\label{expansion}
\end{align}
Here, \(\hat{\zeta}_{m,n}=\hat{\xi}_{m,n}-\sqrt{-1}\hat{\eta}_{m,n}\,(\hat{\xi}_{m,n},\hat{\eta}_{m,n}\in\mathbf{R}\) {are the real part and minus the imaginary part of \(\hat{\zeta}_{m,n}\), respectively, and} \(|m|\leq n)\) are the expansion coefficients, which are given by
\begin{align*}
\hat{\zeta}_{m,n}= \frac{1}{4\pi}\int_{-1}^1\int_{0}^{2\pi} \overline{q}(\lambda,\mu) Y_{m,n}(\lambda,\mu)^* d\lambda d\mu.
\end{align*}
Here, for \(a\in\mathbf{C}\), \(a^*\) denotes the complex conjugate of \(a\).

We can now formulate the canonical problem in a discretized form. We treat the \(r_{ijk}\)'s and \(\hat{\zeta}_{m,n}\)'s as the variables in the maximization problem. {Introducing also \(\hat{\zeta}_{m,n}\)'s as the variables is the key to solving this nonlinear and nonconvex maximization problem. The choice makes it easier to search for statistical equilibria with considering the energy conservation that is expressed by the spectral coefficients of the macroscopic vorticity field.}
For numerical computation, we adopt the triangular truncation in \eqref{expansion}, so that the integers \((m,n)\) take values that satisfy \(0\leq n\leq N,\,|m|\leq n\). We should note that \(\hat{\zeta}_{0,0}=0\) because the integral of \(\overline{q}\) over the sphere is zero and further observe that \(\hat{\zeta}_{-1,1},\hat{\zeta}_{0,1},\hat{\zeta}_{1,1}\) are fixed by the conservation law of angular momentum. In addition, \(\hat{\zeta}_{-m,n}=\hat{\zeta}_{m,n}^*\) must be satisfied since \(\overline{q}\) takes real values. In particular, \(\hat{\eta}_{0,n}\,(n=1,\cdots,N)\) are all zero. Therefore, among the expansion coefficients of \(\overline{q}\), there are \((N+1)^2-4\) real values that are considered to be independent. We can take \(\hat{\xi}_{m,n}\,(n\geq 2,0\leq m\leq n),\,\hat{\eta}_{m,n}\,(n\geq 2,\,1\leq m\leq n)\) as variables. Remembering that \(\overline{q}_{ij}\) is given by \eqref{qij}, the discretized relation between the expansion coefficients and the macroscopic vorticity field is expressed by
\begin{align}
 \frac{1}{2}\sum_{i,j,k}w_j Q_k r_{ijk}P_{0,n}(\mu_j)-\hat{\xi}_{0,n}=0, \label{sg_n0}\\
 \frac{1}{2}\sum_{i,j,k} w_j Q_k r_{ijk}P_{m,n}(\mu_j)\cos m\lambda_i-\hat{\xi}_{m,n}= 0,\label{sg_cos}\\
 \frac{1}{2}\sum_{i,j,k} w_j Q_k r_{ijk}P_{m,n}(\mu_j)\sin m\lambda_i-\hat{\eta}_{m,n}= 0.\label{sg_sin}
\end{align}
We impose these relations as constraints in the maximization problem. The functions \(P_{m,n}(\mu)\) are the associated Legendre functions (for the definition, see the Appendix). By using the expansion coefficients, the energy is written as 
\begin{align}
 E= \frac{1}{2}\sum_{n=2}^N\frac{\hat{\xi}_{0,n}^2}{n(n+1)}+\sum_{n=2}^N\sum_{m=1}^n \frac{\hat{\xi}_{m,n}^2+\hat{\eta}_{m,n}^2}{n(n+1)}\label{d_energy}.
\end{align}
Here, we have subtracted the contribution of the fixed coefficients \(\hat{\xi}_{0,1},\hat{\xi}_{1,1},\hat{\eta}_{1,1}\) from the energy expression.

Using the variables introduced above, the constraints of angular momentum can be expressed in the same form as \eqref{sg_n0}--\eqref{sg_sin}:
\begin{align}
  \frac{1}{2}\sum_{i,j,k}w_jQ_kr_{ijk}P_{0,1}(\mu_j) - \hat{\xi}_{0,1}=0,\label{d_am1}\\
  \frac{1}{2}\sum_{i,j,k}w_jQ_k r_{ijk}P_{1,1}(\mu_j)\cos \lambda_i - \hat{\xi}_{1,1}=0,\label{d_am2}\\
  \frac{1}{2}\sum_{i,j,k}w_jQ_k r_{ijk}P_{1,1}(\mu_j)\sin\lambda_i-\hat{\eta}_{1,1}=0,\label{d_am3}
\end{align}
where \(\hat{\xi}_{0,1},\hat{\xi}_{1,1},\hat{\eta}_{1,1}\) are the initial values of the spectral coefficients which are fixed by the angular momentum conservation. Now, the discretized form of the canonical problem is formulated as follows: maximize the free energy \(F_{\beta}=S_{\rm mix}-\beta E\) under the constraints \eqref{d_imcp}, \eqref{d_prob}, \eqref{positiveness}, \eqref{sg_n0}--\eqref{sg_sin}, and  \eqref{d_am1}--\eqref{d_am3}.

To find critical points of the free energy \(F_\beta\), we use the Lagrange multiplier method. Let \(B_k\) be the {left}-hand side of \eqref{d_imcp} and \(C_{ij}\) be the {left}-hand side of \eqref{d_prob}. Further, let \(A_n,U_{m,n}\), and \(V_{m,n}\) be the left-hand sides of the relations \eqref{sg_n0}, \eqref{sg_cos}, and \eqref{sg_sin}, respectively, and let \(A_1,U_{1,1}\), and \(V_{1,1}\) be the left-hand sides of the angular momentum conservation relations \eqref{d_am1}, \eqref{d_am2}, and \eqref{d_am3}. Then, we can introduce the Lagrangian defined by
\begin{align}
  L=& F_\beta+a_1A_1+u_{1,1}U_{1,1}+v_{1,1}V_{1,1}-\beta\sum_{n=2}^N\sum_{m=1}^n(u_{m,n}U_{m,n}+v_{m,n}V_{m,n})\nonumber\\
  &-\beta \sum_{n=2}^N a_nA_n+\frac{1}{2}\sum_{k=1}^K(b_k+1)B_k+\frac{1}{2}\sum_{i,j}w_j c_{ij}C_{ij}.\label{lagrangian}
\end{align}
Here, we have defined \(c_{IJ}:=0\) for convenience. Then, an arbitrary critical point of \(F_\beta\) under the constraints is a critical point of \(L\) with no constraints for some combination of Lagrange multipliers. The condition of critical points \(\partial L/ \partial r_{ijk}=0\) yields
\begin{align}
r_{ijk}=\exp X_{ijk}(a,u,v,b,c), \label{r_ijk}
\end{align}
where 
\begin{align}
X_{ijk}(a,u,v,b,c)=&a_1Q_kP_{0,1}(\mu_j) \nonumber\\
&+(u_{1,1}\cos \lambda_i
+v_{1,1}\sin\lambda)Q_kP_{1,1}(\mu_j)
-\beta \sum_{n=2}^N a_nQ_kP_{0,n}(\mu_j)\nonumber\\
&-\beta \sum_{n=2}^N\sum_{m=1}^n (u_{m,n}\cos m\lambda_i
+v_{m,n}\sin m\lambda_i)Q_kP_{m,n}(\mu_j)+b_k+c_{ij}.\label{xijk}
\end{align}
Here, \(a=\{a_n\}, u=\{u_{m,n}\}, v=\{v_{m,n}\}, b=\{b_k\}\), and \(c=\{c_{ij}\}\). In addition, \(\partial F_\beta/\partial\hat{\xi}_{m,n}=\partial F_\beta/\partial\hat{\eta}_{m,n}=0\) yields
\begin{align}
&\hat{\xi}_{0,n}=n(n+1)a_n\qquad(2\leq n\leq N),\label{zetan0}\\
&\hat{\xi}_{m,n}=\frac{1}{2}n(n+1)u_{m,n}\qquad(2\leq n\leq N, 1\leq m\leq n),\label{zetacos}\\
&\hat{\eta}_{m,n}=\frac{1}{2}n(n+1)v_{m,n}\qquad(2\leq n \leq N,1\leq m\leq n).\label{zetasin}
\end{align}
All the variables are thus expressed by the Lagrange multipliers. Substituting these expressions \eqref{r_ijk} and \eqref{zetan0}--\eqref{zetasin} into the constraints \eqref{d_imcp}, \eqref{d_prob}, \eqref{sg_n0}--\eqref{sg_sin}, and \eqref{d_am1}--\eqref{d_am3}, we obtain a set of equations of the multipliers. We can obtain a critical point of the free energy \(F_\beta\) by solving this system of equations numerically using Newton's method. However, in applying Newton's method, it is necessary to provide a good initial estimate of the solution, which is, in general, difficult if we want a symmetry-breaking solution. We overcome this difficulty in the three-step procedure described below.

\subsubsection{Step 1: Finding the solution to the case of \(\beta=0\)}
\label{beta=0}
\noindent When \(\beta=0\), the problem reduces to the maximization of the mixing entropy \(S_{\rm mix}\) under constraints \eqref{d_imcp}, \eqref{d_prob}, and \eqref{d_am1}--\eqref{d_am3}. We can use only the \(r_{ijk}\)'s as the variables of maximization, since the expansion coefficients of the vorticity field play no role. In this case, all the constraints are given in the form of an equality or inequality of linear functions; thus, the feasible region is a convex closed set. Since \(S_{\rm mix}\) is a strictly concave function, a critical point of \(S_{\rm mix}\) is unique if it exists in the interior of the feasible region, and \(S_{\rm mix}\) attains the maximum value at the point. Hence, iterations of Newton's method converge even if the initial guess is far from the solution. For example, we can set as \(b_k=\log (Iw_k/2),c_{ij}=0,a_n=0,u_{m,n}=v_{m,n}=0\) for the initial guess. As noted in section \ref{MRS_theory}, the solution obtained here has zonal symmetry about some rotational axis.
 
\subsubsection{Step 2: Finding a solution for given \(\beta\)}
\label{general_beta}
\noindent Let \(\beta_T\) be the given value of the inverse temperature. For \(\beta=\beta_T\), we find a critical point by changing the value of \(\beta\) gradually from \(0\) to \(\beta_T\) and by using Newton's method in each step of changing \(\beta\). Let \(y_0=(a_n,u_{m,n},v_{m,n},b_k,c_{ij})\) be the tuple of multipliers that gives the critical point for \(\beta=0\,(=:\beta_0)\), obtained in Step 1, and let \(G(y;\beta)=0\) denote the system of equations of Lagrange multipliers for an arbitrary \(\beta\). Here, \(y\) is a tuple of the multipliers and \(G\) is the vector-valued function corresponding to the left-hand sides of \eqref{d_imcp}, \eqref{d_prob}, \eqref{sg_n0}--\eqref{sg_sin}, and \eqref{d_am1}--\eqref{d_am3}. Using this notation, we now have \(G(y_0;\beta_0)=0\). First, we solve the system of equations for \(\beta=\beta_1:= \beta_0+\Delta\beta\), where \(\Delta\beta\) is a small number. Depending on the value of \(\beta_T\), \(\Delta\beta\) can be positive or negative. The initial estimate of the solution used here is \(y_0\), since the solution of \(G(y;\beta_1)=0\) is expected to be close to \(y_0\), provided that \(|\Delta\beta|\) is small. Let \(y_1\) be the obtained solution of \(G(y;\beta_1)=0\). Similarly, we solve the equation \(G(y; \beta_2)=0,\,\beta_2=\beta_1+\Delta\beta\) to obtain the solution \(y_2\) by Newton's method, in which \(y_1\) is used as the initial estimate of the solution. We continue this procedure until \(\beta\) reaches \(\beta_T\). Here, \(\Delta\beta\) can take a different value for each step, depending on whether Newton's method converges. Note that the obtained critical point tends to have the zonal symmetry inherited from the critical point for \(\beta=0\).

\subsubsection{Step 3: Bridge-building}
\label{bridgebuilding}
\noindent There may be other critical points than those obtained by the procedure described above if \(\beta\) is negative with a sufficiently large absolute value, where the concavity of \(F_\beta\) is not guaranteed \citep{robert1991statistical}. The ``bridge-building method'' described below allows us to find and compute such critical points from the already obtained one. This method introduces a numerically trackable path from an already-obtained critical point to an unknown one by considering a new problem with relaxed constraints. The method allows us to search for critical points at which the absolute value of the expansion coefficient for a given set of wavenumbers \((m,n)\) is large. 

As one can see from \eqref{r_ijk} and \eqref{xijk}, the multipliers \(a_n,u_{m,n},v_{m,n}\,(n\geq 1,\,1\leq m\leq n)\) are closely related to the spectral structure of the macroscopic vorticity field at the critical point. Let us search for a critical point at which the wavenumber \((m_0,n_0)\) coefficient is substantially different from that of the already-known critical point. Here, we assume \(n_0\geq 2,m_0\geq 1\). We can then consider the problem (BP) described below:\\
\textbf{(BP)} Find the critical points of 
\begin{align*}
  F_{\beta}-\beta \gamma U_{m_0,n_0}
\end{align*}
    under the constraints of the canonical problem, excluding  constraint \eqref{sg_cos} for \((m,n)=(m_0,n_0)\) (i.e., \(U_{m_0,n_0}=0\)). Here, \(\gamma\) is a parameter.\\

Let \(y_0\) be the already-obtained solution to the equation \(G(y; \beta)=0\), and \(u_{m_0,n_0}^0\) be the value of the multiplier \(u_{m_0,n_0}\) at \(y_0\). If \(\gamma=u_{m_0,n_0}^0\), then the already-known critical point is also a solution to (BP). The problem (BP) is solved by using Newton's method in the same way as it was used to find critical points of \(F_\beta\) in the previous procedures. We consider the problem (BP) for a \(\gamma\) whose value is slightly different from \(u_{m_0,n_0}^0\). We can then solve (BP) by choosing \(y_0\) (from which the \(u_{m_0,n_0}\) component is excluded) as the initial value of iteration, since the solution is considered to be close to \(y_0\). We repeat the procedure by changing \(\gamma\) and solving (BP), as we did for \(\beta\) in Step 2. Then, we obtain a curve of the solutions of (BP), having \(\gamma\) as the parameter. Suppose that there is another point on the curve at which \(U_{m_0,n_0}=0\) holds, and let \(\gamma_1\) be the value of \(\gamma\) at the point. Then the macroscopic state given by the solution of (BP) for \(\gamma=\gamma_1\) satisfies all constraints of the canonical problem, and is a critical point of Lagrangian \eqref{lagrangian}. This means that the macroscopic state is a critical point of the canonical problem. Note that this method also works for \(m_{0}=0\) by replacing \(U_{m_0,n_0}\) with \(A_{n_0}\) and for \(n_0=1\) by replacing \(U_{m_0,n_0}\) with \(A_1\) or \(U_{1,1}\) and considering \(F_\beta-\gamma A_1\) or \(F_\beta-\gamma U_{1,1}\) in (BP).

The procedure described above is used to track a curve of solutions to the relaxed problem (BP), starting from the already-obtained solution and reaching a new solution (Figure \ref{bridge_figure}). We call the curve a ``bridge'', since it connects two distinct solutions; accordingly, we call the method the ``bridge-building method.''  Although there is no guarantee that a new solution exists somewhere on the bridge, the method is useful for, among various purposes, finding non-zonal statistical equilibria from a known zonal equilibrium, as will be shown later in an example. Note that once we obtain a new equilibrium as the terminal of the bridge, we can also obtain a branch of critical points parameterized by \(\beta\) by tracking solutions of the canonical problem for different \(\beta\)'s from the new solution. 

{Searching a new critical point by using the bridge-building method is analogical to adjusting the internal energy of a system by changing the temperature of the external heat bath in statistical thermodynamics. Indeed, if we consider the parameter \(\gamma\) as the inverse temperature and the function \(U_{m_0,n_0}\) as the internal energy, respectively, the analogy becomes obvious.}

\begin{figure}[htbp]
\centering
\includegraphics[width=10.0cm]{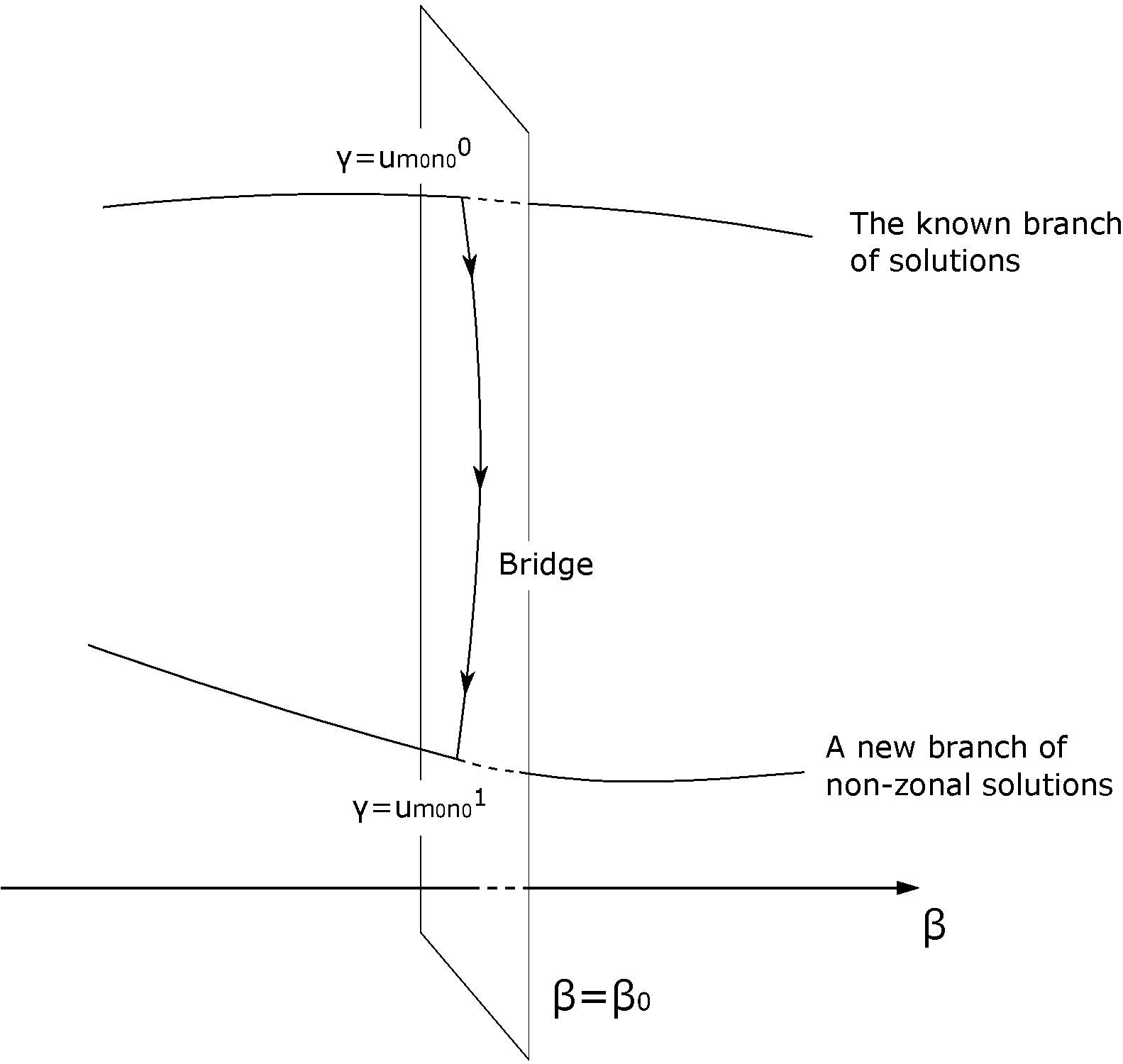}
\caption{Schematic illustration of the bridge-building method. On a plane of fixed \(\beta=\beta_0\), the problems (BP) for different values of \(\gamma\) are solved, starting from the already-known solution at \(\gamma=u_{m_0,n_0}^0\). If the solution at \(\gamma=u_{m_0,n_0}^1\) satisfies \(U_{m_0,n_0}=0\), then it is a new critical point of \(F_{\beta_0}\).}
\label{bridge_figure}
\end{figure}

\subsection{Gradient method for the canonical problem}
\label{grad_method}
 The bridge-building method provides a very powerful way to search for multiple equilibria, since we can find critical points for a wide variety of wavenumber structures by computing bridges for various \((m_0,n_0)\). However, one cannot tell whether a new solution exists on the bridge for a given \((m_0,n_0)\) until the bridge is actually computed. In addition, the nature of the critical point is not determined by this method, which means that the obtained critical points may include saddles. Thus, it would be desirable to have an algorithm that can automatically detect local maxima of \(F_\beta\). To meet this need, we propose another numerical method to solve the canonical problem using the gradient of \(F_\beta\). With the proposed gradient method, the obtained critical points are guaranteed to be local maxima of \(F_\beta\). Furthermore, if the initial point of the search is a saddle, a small perturbation will induce an automatic transition to some local maximum point.
 
 In the bridge-building method, the number of variables is \(IJK+(N+1)^2-4\), with \(IJK\) corresponding to the \(r_{ijk}\)'s and \((N+1)^2-4\) corresponding to the expansion coefficients of \(\overline{q}\). Since \(I\geq 3N+1\) and \(J=I/2\), the number of variables is \(O(N^3)\) provided that \(K\) is \(O(N)\). However, if the same variables were used in the gradient method, the cost of the computation of the gradient of \(F_\beta\) in each step would be very high. Thus, in our gradient method, we consider a subproblem in order to reduce the number of variables. The subproblem is defined as follows: Let the expansion coefficients of \(\overline{q}\), \((\hat{\xi}_{m,n},\hat{\eta}_{m,n})\),  be given and fixed. Then, we consider a subproblem in which we search for a macroscopic state \(\{r_{ijk}\}\) that maximizes the mixing entropy \(S_{\rm mix}\) under constraints corresponding to \eqref{sg_n0}--\eqref{sg_sin} and \eqref{d_am1}--\eqref{d_am3}. By solving this subproblem, macroscopic state \(\{r_{ijk}\}\) can be viewed as a function of the expansion coefficients \(\hat{\xi}_{m,n},\hat{\eta}_{m,n}\), and only \((N+1)^2-4\) expansion coefficients are needed as the variables of the maximization problem of \(F_\beta\). Note that considering only the macroscopic states determined by solving the subproblem is sufficient for solving the canonical problem, since we are interested in the maximization of \(F_\beta = S_{\rm mix}-\beta E\), and \(E\) is a function of the expansion coefficients only. The detailed procedure is explained below.
 
 First, in order to reduce the numerical cost, we modify the constraints \eqref{d_am1}--\eqref{d_am3} and \eqref{sg_n0}--\eqref{sg_sin}. In place of these constraints, we impose the following constraints:
 \begin{align}
 \overline{q}_{ij} = \sum_{n=1}^N \sum_{m=-n}^n \hat{\zeta}_{m,n}Y_{m,n}(\lambda_i,\mu_j)\qquad(\forall(i,j)\neq (I,J)),\label{sg_grad}
 \end{align}
 where \(\hat{\zeta}_{m,n}=\hat{\xi}_{m,n}-\sqrt{-1}\hat{\eta}_{m,n}\) and \(\overline{q}_{ij}\) is the value of the macroscopic vorticity field at each grid point, which is determined by \eqref{qij}. The constraints \eqref{sg_grad}, which correspond to the inverse Fourier transform on the sphere, are imposed for all the grid points except for the case of \((i,j)=(I,J)\), since it follows immediately from \eqref{d_imcp}, \eqref{d_prob}, and \eqref{sg_grad}. The constraints of angular momentum, \eqref{d_am1}--\eqref{d_am3}, are imposed by fixing \(\hat{\xi}_{0,1},\hat{\xi}_{1,1}\), and \(\hat{\eta}_{1,1}\) in \eqref{sg_grad}. The subproblem can now be formulated as follows:\\
 \textbf{(SP)}
 For given expansion coefficients \(\{\hat{\zeta}_{m,n}\}\), find the macroscopic state \(\{r_{ijk}\}\) maximizing the mixing entropy \(S_{\rm mix}\), under the constraints \eqref{d_prob}, \eqref{d_imcp}, \eqref{positiveness}, and \eqref{sg_grad}.
 
 In the subproblem (SP), the feasible region is a convex closed set of the space of \(\{r_{ijk}\}\), since the constraints consist of equalities or inequalities of linear functions. Meanwhile, \(S_{\rm mix}\) is a strictly convex function on the closed set. Therefore, if there is some critical point of \(S_{\rm mix}\) in the interior of the feasible region, then such a critical point is unique and \(S_{\rm mix}\) takes the maximum value at the point. Thus, it is not difficult to solve (SP) numerically by using Newton's method (even if the initial guess of the solution is not close to the true solution), and the values of the \(r_{ijk}\)'s can be regarded as functions of the coefficients \(\hat{\xi}_{m,n},\hat{\eta}_{m,n}\) by solving the subproblem numerically. Let \(Z=(\hat{\xi}_{0,2},\cdots,\hat{\xi}_{0,N},\hat{\xi}_{1,2},\hat{\eta}_{1,2},\cdots,\hat{\xi}_{N,N},\hat{\eta}_{N,N})\in \mathbf{R}^{(N+1)^2-4}\) denote a vector of the expansion coefficients, and let \(r_{ijk}(Z)\) and \(S_{\rm mix}(Z)\) denote \(r_{ijk}\) and \(S_{\rm mix}\) as functions of \(Z\), respectively. It can be shown that the set of \(Z\) for which the solution of (SP) exists in the interior of the feasible region for \(\{r_{ijk}\}\) is a bounded convex closed subset \(P\) of \(\mathbf{R}^{(N+1)^2-4}\). In the interior of \(P\), the discretized macroscopic state variables, the \(r_{ijk}(Z)\)'s, and therefore \(S_{\rm mix}(Z)\), are differentiable functions of \(Z\). Their derivatives can be numerically computed from the Lagrange multipliers used for solving (SP). (For the geometrical properties of the feasible region and details on the numerical implementation, see the Supplement.)
 
 In the proposed gradient method, the following steps are taken. Step 1: For a given \(Z\), solve the subproblem (SP). Step 2: Compute the gradient \(\nabla S_{\rm mix}\), and compute the gradient \(\nabla F_\beta\). Step 3: Update \(Z\) to \(Z+\varepsilon \nabla F_\beta\) and return to Step 1, where \(\varepsilon\) is the step size. Note that the process is equivalent to solving the differential equation 
 \begin{align}
 \frac{{\rm d} Z}{{\rm d} t} = \nabla F_{\beta}(Z)\label{grad_evo}
 \end{align}
 numerically by Euler's method. One can also use the Runge-Kutta method for more stable computation. The critical point of \(F_\beta\) obtained by the procedure in section \ref{general_beta} would be a reasonable choice for the initial starting point of \(Z\), since such a critical point tends to be a saddle if \(\beta\) is negative with a large absolute value.
 
\subsection{Method for solving the microcanonical problem}\label{microcanonical}
 As discussed in section \ref{MRS_theory}, the canonical and microcanonical problems have the same set of critical points, but the nature of each of the critical points may differ. Although the gradient method for the canonical problem allows us to compute a local maximum of free energy \(F_\beta=S_{\rm mix}-\beta E\), the obtained point is not necessarily a local maximum of \(S_{\rm mix}\) in the microcanonical problem for the corresponding energy. Thus, a special numerical method for the microcanonical problem is needed, one that differs from the methods used for the canonical problem. We propose a method that, with the help of the gradient method algorithm used for the canonical problem, can be effectively applied to the microcanonical problem
 
 The difficulty with the microcanonical method is that the feasible region likely consists of multiple fractions of nonconvex hypersurfaces, since the energy constraint
 \begin{align}
 E(Z) = \frac{1}{2}\sum_{n=2}^N \frac{\hat{\xi}_{0,n}^2}{n(n+1)}+\sum_{n=2}^N \sum_{m=1}^n \frac{\hat{\xi}_{m,n}^2+\hat{\zeta}_{m,n}^2}{n(n+1)}=E_0
 \label{energy_const}
 \end{align}
 is imposed. Specifically, it is not clear whether the feasible region is connected, which makes global search difficult. We tackle this difficulty by constructing a new method for computing multiple local maxima of the mixing entropy that takes into account the geometry of the feasible region.
 
  For the method proposed in this subsection, we use \eqref{sg_grad} as the relationship between the expansion coefficients and the grid values and take a point of wavenumber space \(Z=(\hat{\xi}_{0,2},\cdots,\hat{\xi}_{0,N},\hat{\xi}_{1,2},\hat{\eta}_{1,2},\cdots,\hat{\xi}_{N,N},\hat{\eta}_{N,N})\in \mathbf{R}^{(N+1)^2-4}\) as the variable of the maximization problem.  First, we consider the region in which \(Z\) can exist when the energy constraint is not imposed. Such a region is a subset of \(\mathbf{R}^{(N+1)^2-4}\) consisting of \(Z\) for which some \((r_{ijk})_{i,j,k}\) exists that satisfies the constraints \eqref{d_imcp}, \eqref{d_prob}, \eqref{positiveness}, and \eqref{sg_grad}. It can be shown that this subset is a bounded closed subset of \(\mathbf{R}^{(N+1)^2-4}\) with nonempty interior. In fact, the subset is a polytope which is realized as an intersection of multiple half-spaces. The proof of this, which is given in the Supplement, is based on the inequality conditions \eqref{positiveness} and \(IJK\geq (N+1)^2-4\), which follow from the conditions for \(I,J\), and \(N\). Let \(P\subset \mathbf{R}^{(N+1)^2-4}\) denote this convex closed set. Here, we should note the following fact: If \(Z\in \partial P\), i.e., \(Z\)  is on the boundary of \(P\), then for any \(r=(r_{ijk})_{i,j,k}\) satisfying \eqref{d_prob}, \eqref{d_imcp}, \eqref{positiveness}, and \eqref{sg_grad}, there exists some \(i_0,j_0,k_0\) such that \(r_{i_0 j_0 k_0}=0\). In addition, if \(Z\in {\rm int}P\), i.e., \(Z\) is an interior point of \(P\), then there exists some \(r=(r_{ijk})_{i,j,k}\) satisfying \eqref{d_prob}, \eqref{d_imcp}, \eqref{positiveness}, and \eqref{sg_grad} such that \(r_{ijk}>0\) for any \(i,j,k\). Therefore, when \(Z\in \partial P\), one cannot compute \(r_{ijk}(Z)\) by solving (SP), as some of the Lagrange multipliers used in solving (SP) become unbounded and the computation fails. {Indeed, the solution of (SP) can be expressed as
  \begin{align*}
      r_{ijk} = \exp (Q_k a_{ij}+ b_k + c_{ij}),
  \end{align*}
  where \(a_{ij}, b_k\) and \(c_{ij}\) are the Lagrange multipliers. The details are given in the supplement. From the above equation, at least one of the multipliers must be unbounded in order for \(r_{ijk}=0\). The physical reason is difficult to describe, but the reason is associated to the fact that the gradient of \(S_{\rm mix}\) as a function of \(r_{ijk}\)'s is unbounded if \(r_{ijk}=0\). In order for the Lagrangian \(L\) to attain extrimums, the gradient of \(S_{\rm mix}\) and the gradients of functions defining the constraints must be balanced by means of multipliers. This is achieved only by some of the multipliers being unbounded. The reason why some \(r_{ijk}\)'s are zero on the boundary of \(P\) derives from the definition of \(P\). Let us give a more detailed but not mathematically rigorous explanation of this fact. The region \(R^+\) of the macroscopic vorticity distribution \((r_{ijk})_{i,j,k}\) that can be realized by mixing vorticity patches is the set of \((r_{ijk})_{i,j,k}\) that satisfy \eqref{d_imcp}, \eqref{d_prob}, and \eqref{positiveness}. By the inequality constraint \eqref{positiveness}, this set is a (closed) hyperpolyhedral domain of dimension \(O(N^3)\), the boundary of which is defined by the equations \(r_{ijk}=0\) (for some \(i, j, k\)). The map that computes the macroscopic vorticity field \((q_{ij})_{i,j}\) from the macroscopic vorticity distribution \((r_{ijk})_{i,j,k}\) is a linear map from an \(O(N^3)\)-dimensional space to an \(O(N^2)\)-dimensional space. Then this map can be considered a ``projection'' from a high-dimensional space to a lower-dimensional space under an appropriate choice of coordinates. Hence, the boundary of the range of the possible macroscopic vorticity field is the ``shadow'' of the boundary of \(R^+\) by this ``projection''. Since \(P\) represents the feasible region of the macroscopic vorticity field in the space of the spectrum, \(r_{ijk}=0\) for some \(i,j,k\) in \(\partial P\) as well. For more rigorous argument, please refer to the supplement of the manuscript.}
 
 Based on these facts, we define a function \(V(Z)\) by
 \begin{align*}
 V(Z) = -\frac{1}{2}\sum_{i,j,k}w_j \log \tilde{r}_{ijk}(Z),
 \end{align*}
 where \(\tilde{r}_{ijk}(Z)\) is the \(r_{ijk}\) that minimizes \(-(1/2)\sum w_j\log r_{ijk}\) under the constraints \eqref{d_prob}, \eqref{d_imcp}, \eqref{positiveness}, and \eqref{sg_grad} for a given \(Z\in P\). \(\tilde{r}_{ijk}(Z)\) and \(V(Z)\) can be numerically computed in the same way that \(r_{ijk}(Z)\) and \(S_{\rm mix}(Z)\) are computed in section \ref{grad_method}. It is also similar to section \ref{grad_method} in that \(\tilde{r}_{ijk}(Z)\) and \(V(Z)\) are differentiable and their derivatives can be numerically computed. It can be verified that \(V(Z)\) is strictly convex in \(P\). The proofs of these facts are presented in the Supplement. From the above, \({\rm int}P\) can be viewed as the set of \(Z\in P\) such that \(V(Z)<+\infty\). When \(Z\in \partial P\), we cannot numerically compute \(\tilde{r}_{ijk}(Z)\) as in the case of solving (SP). For a sufficiently large \(M>0\), we can introduce a hypersurface which is defined by
 \begin{align*}
 L_M = \{Z\in P | V(Z)=M\}
 \end{align*}
 to approximate the boundary \(\partial P\) from its inner side.  For \(M>0\), \(L_M\) is homeomorphic to the \((N+1)^2-5\)-dimensional sphere, and it is the boundary of the convex set \(\{Z\in P|V(Z)\leq M\}\) in \(\mathbf{R}^{(N+1)^2-4}\). 
 
 We now apply the following strategy. Step 1: Take a point \(Z\in {\rm int}P\) that is not too close to \(\partial P\), and let \(M\) be the value of \(V(Z)\) at the point. Step 2: Consider a linear function of the expansion coefficients \(Z\) and maximize it on the hypersurface \(L_M\). Step 3: From the point obtained by Step 2, change the value of \(E\) gradually to \(E_0\) keeping \(V(Z)\) as small as possible. Step 4: Using the point \(Z\) obtained by Step 3 as the initial starting point, search for the local maximum of the mixing entropy \(S_{\rm mix}(Z)\) on the surface of \(E=E_0\). The details of each step are described below.
 
 In Step 1, we take the expansion coefficients of the critical vorticity field obtained in section \ref{general_beta} as the starting point \(Z\). 
 
 In Step 2, we maximize a linear function
 \begin{align*}
  f(Z) = \langle W, Z\rangle
 \end{align*}
 on the hypersurface \(L_M\) (Figure \ref{linearmax_figure}). Here, \(W\in \mathbf{R}^{(N+1)^2-4}\) is a non-zero vector of \(\mathbf{R}^{(N+1)^2-4}\), and \(\langle\cdot,\cdot\rangle\) denotes the standard inner product of \(\mathbf{R}^{(N+1)^2-4}\). It can be shown that this maximization problem has a unique solution (see the Supplement). We adopt the gradient projection method of \cite{tanabe1974algorithm} for numerical computation of this maximization, in which the vector field of \(\nabla f =W\) is projected on the tangent space of \(L_M\) at each point. Since the value of \(M\) is conserved on the hypersurface \(L_M\), the maximization can be computed stably unless \(M\) is too large. Being able to arbitrarily choose the vector \(W\) is helpful in obtaining \(Z\) points corresponding to the vorticity fields with various wavenumber structures. For example, if we want to obtain the vorticity field with a large coefficient of the wavenumber \((m,n)=(2,2)\), we can choose \(W\) so that \(f(Z)=\hat{\xi}_{2,2}\). 
 
 \begin{figure}[tbp]
 \centering
 \includegraphics[width=8.0cm]{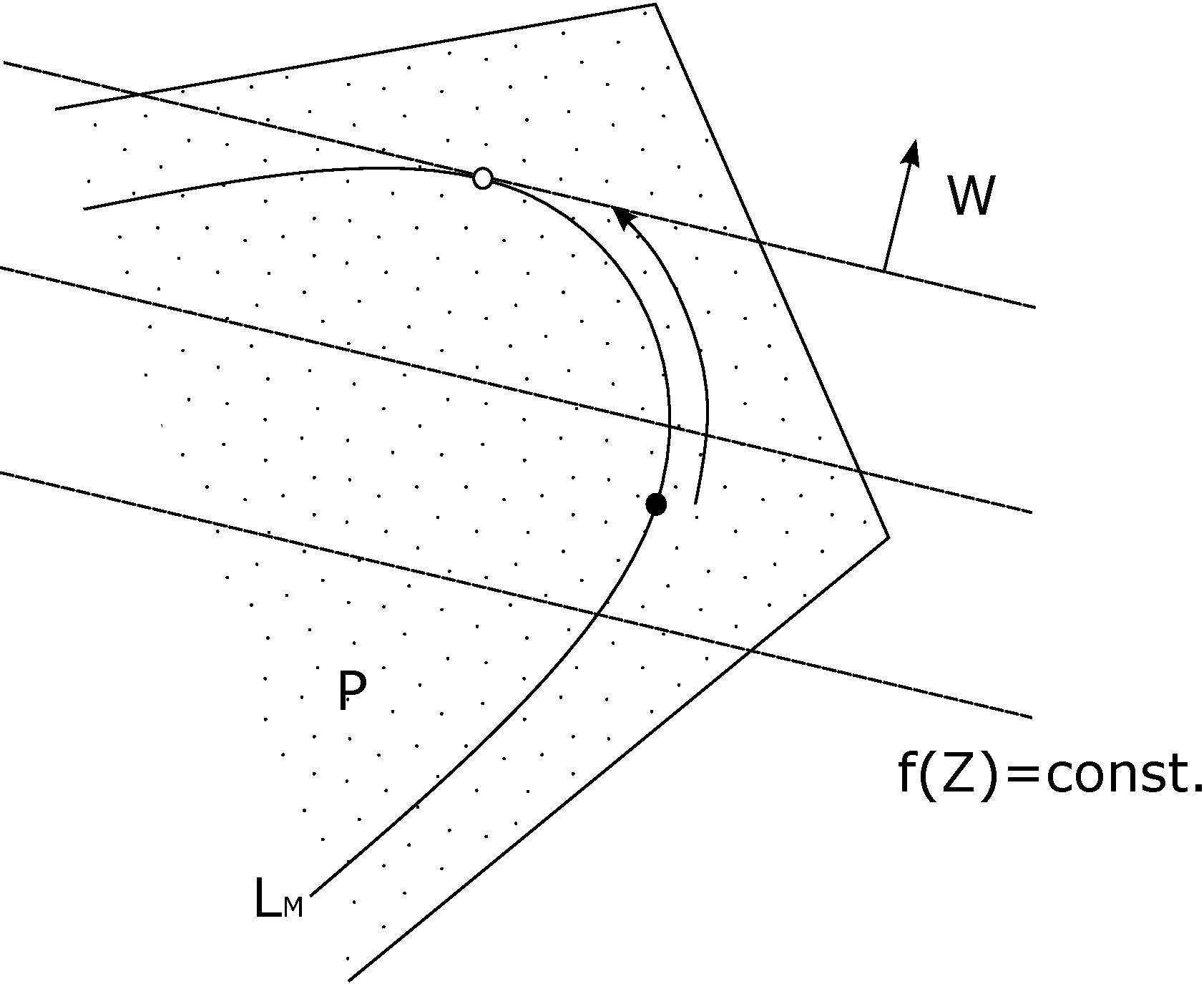}
 \caption{Schematic illustration of the maximization of a linear function. The black circle indicates the initial starting point of \(Z\), and the white circle indicates the end point obtained by maximizing \(f(Z)\) with the gradient projection method. The parallel lines are the level surfaces of \(f(Z)\).}
 \label{linearmax_figure}
 \end{figure}
 
 In Step 3, to make the value of energy \(E\) closer to \(E_0\), we numerically solve the following differential equation in \(\mathbf{R}^{(N+1)^2-4}\)
 \begin{align*}
  \frac{{\rm d} Z}{{\rm d} t}= -\nabla (E-E_2)^2,\qquad Z(0)=Z_1,
 \end{align*}
 until energy \(E\) reaches \(E_2\). Here, \(E_2=E_1+\Delta E\),  \(E_1=E(Z_1)\), and \(Z_1\) is the point \(Z\in P\) obtained by Step 2. Note that \(\Delta E\) has to be small so that \(Z\) does not touch the boundary \(\partial P\). We then consider the minimization of \begin{align*}
 V_{\rm ent}(Z) = -\frac{1}{2}\sum_{i,j,k}w_j \log r_{ijk}(Z)
 \end{align*}
 on the hypersurface of \(E=E_2\) with the gradient projection method, starting from \(Z=Z_2\). This minimization is intended to prevent the components of \((r_{ijk})_{i,j,k}\) that maximizes \(S_{\rm mix}\) under the constraint of \eqref{sg_grad} from approaching zero. Let the new \(Z_1\) be the local minimum of \(V_{\rm ent}(Z)\) obtained above, and repeat these steps until \(E\) reaches \(E_0\) (Figure \ref{upenergy_figure}).

 \begin{figure}[tbp]
 \centering
 \includegraphics[width=8.0cm]{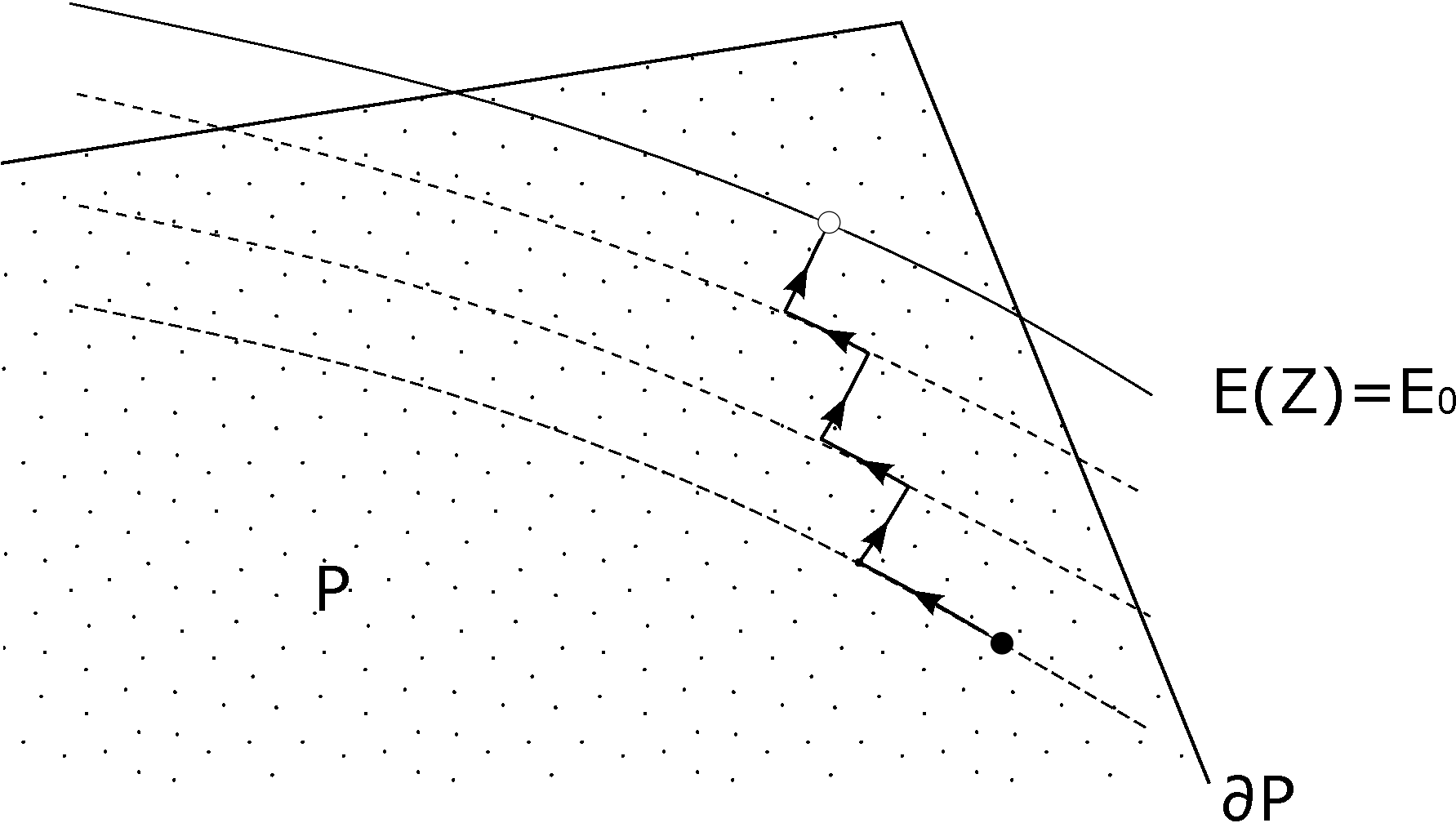}
 \caption{Schematic illustration of Step 3 in computing the equilibria of the microcanonical problem. Starting from \(Z\) obtained in Step 2 (black circle), we repeat minimization of \(V_{\rm ent}(Z)\) on the energy isosurface and moving the point \(Z\) in the normal direction of the energy isosurface alternately, until \(E(Z)\) reaches \(E_0\) (white circle).}
 \label{upenergy_figure}
 \end{figure}
 
 In Step 4, the mixing entropy is maximized. The gradient projection method is applied on the hypersurface defined by \(E(Z)=E_0\), and the point obtained in Step 3 is used as the initial starting point for the maximization. To solve the maximization problem globally is difficult since not only is the feasible region of the microcanonical problem not necessarily connected, as explained above, but it is also the case that the gradient projection method is suitable for a local search rather than a global search. However, we can perform a quasi-global search if we can generate a large number of initial starting points according to Step 2, from each of which we follow Step 3 and Step 4. 
 
\section{Example of computing MRS statistical equilibria}\label{example}
In this section, we show an example of applying the computational methods described in section \ref{methods}. In addition, the computed statistical equilibrium is compared to the final state of the time evolution from a vorticity field.

{In the following, numerical calculations are done by the double-precision computations with an accuracy of about 15 digits, and convergence determinations of the Newton method are done with an accuracy of about \(10^{-10}\) relative error. Therefore, the relative error in the calculation results is considered to be about \(10^{-10}\), but for the sake of simplicity, the numerical values are represented by 6 digits.}

In the numerical calculations presented in this section, ispack-3.0.1 (\url{http://www.gfd-dennou.org/arch/ispack/}), which is designed based on \cite{ishioka2018}, is used for the spherical harmonic transform such as \eqref{expansion} and the equation below it.

\subsection{Initial vorticity field}
We use the following initial vorticity field (Figure \ref{initial_figure}) as a test case for the computational methods described in section \ref{methods}.
 \begin{align}
 q_{\rm initial}(\lambda, \mu) = 12\times 0.07\Omega P_{0,3}(\mu) + 2\Omega\mu. \label{ini_vor}
\end{align}
\begin{figure}[htbp]
\centering
\includegraphics[angle=270,width=10.0cm]{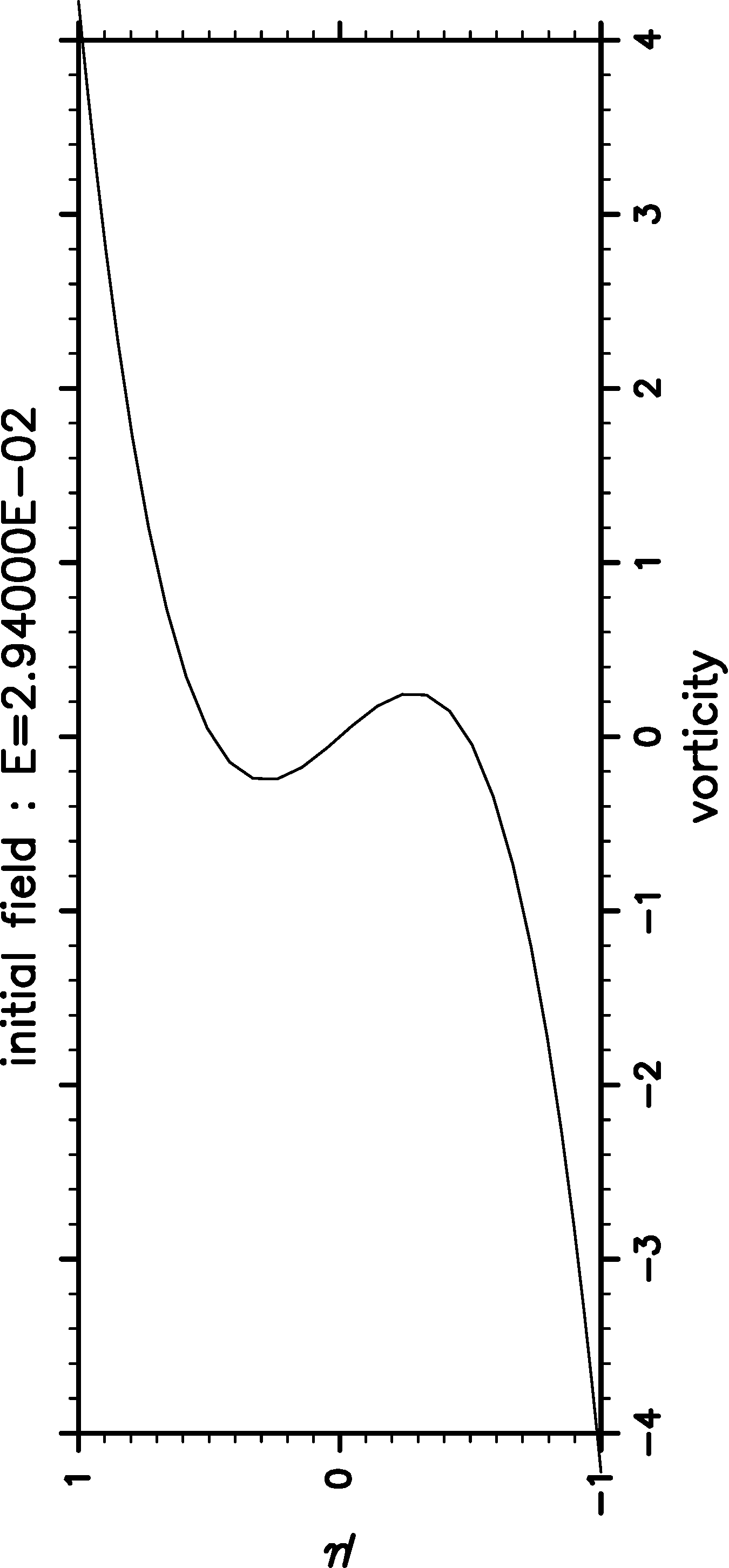}
\caption{The initial zonal vorticity profile defined by \eqref{ini_vor}. The vertical axis is \(\mu\), and the horizontal axis is the vorticity \(q\).}
\label{initial_figure}
\end{figure}
This initial vorticity profile was also used in \cite{ishioka1998}.  By assuming that the sphere is rotating at angular velocity \(\Omega\), the first term of the right-hand side of \eqref{ini_vor} can be considered as the relative vorticity and the second term as the planetary vorticity associated with the rotation. This initial vorticity field not only satisfies the Rayleigh necessary condition for barotropic instability but also consists of the lowest wavenumber zonal modes that can be barotropically unstable \citep{ishioka1990instability}. Indeed, this zonal field is unstable to small disturbances as is shown in Subsection 4.3. We can assume that \(\Omega=1\), which is equivalent to scaling time \(t\) so that the sphere rotates \(1\) radian while \(t\) advances by 1. 

For numerical computing, the continuous initial field defined by \eqref{ini_vor} is discretized into multiple vorticity patches. Let the number of vorticity patches \(K\) be equal to \(J\), the number of grid points in the latitudinal direction. We set the value of each of the vorticity patches \(Q_k\,(k=1,\cdots,K)\) as
\begin{align*}
 Q_k = q_{\rm initial}(0, \mu_k) \quad (k=1,\cdots,K).
\end{align*}
We set the area \(S_k\) of each patch \(Q_k\) as \(S_k = 2\pi Iw_k\), using the Gaussian weight \(w_k\). Here, \(2\pi I w_k\) is the area of the part of the sphere represented by the Gaussian latitude \(\mu_k\). 

\subsection{Time integration}
To compare the MRS statistical equilibria with the end state of the time integration of the vorticity equation, instead of \eqref{R=1eq}, we numerically integrate the following vorticity equation on the unit sphere with the viscosity:
 \begin{align}
 \frac{\partial q}{\partial t}+\left(\frac{\partial \psi}{\partial \lambda}\frac{\partial q}{\partial \mu} - \frac{\partial q}{\partial \lambda}\frac{\partial \psi}{\partial \mu}\right)= \nu (\Delta + 2)q.\label{NSvis}
 \end{align}
 Here, \(q\) is the vorticity, and \(\psi\) is the stream function satisfying \(q=\Delta\psi\). The coefficient \(\nu\) appearing in the right-hand side is the viscosity coefficient. In order for the conservation of angular momentum to hold, we add 2 to the Laplacian in the viscosity term. {By using this form of viscosity, the angular momentum is exactly conserved. The eigenvalues of the Laplacian on (unit) sphere are \(-n(n+1) \,(n=0,1,2,...)\) and for each \(n\) the corresponding eigenspace is spanned by the spherical harmonics \(Y_{m,n} \,(|m|\leq n)\). In the spectral form of the Navier-Stokes equation, the viscosity term of ``Laplacian plus 2'' form is, for wavenumber \((m,n)\), 
\begin{align*}
    \{-n(n+1)+2\}\hat{\zeta}_{m,n}.
\end{align*}
Thus, for \(n=1\), the viscosity terms vanish. As long as we use the spectral method, the spectral coefficients of \(n=1\)  are all conserved except for the rounding error, which corresponds to the conservation of angular momentum. This type of modification of the viscosity term is commonly adopted in numerical models of global circulation of the atmosphere.}

Although the MRS theory is based on the inviscid incompressible flow system in two dimensions, we integrate \eqref{NSvis} with a weak viscosity in place of the inviscid vorticity equation \eqref{R=1eq}. This is because numerical time integration of the inviscid vorticity equation fails as the enstrophy cascades to the scale of the truncation wavenumber and accumulates there. Such compromises have been made in previous studies on the MRS theory \citep[e.g.,][]{sommeria1991final,thess1994inertial,ishioka1998}, and it is considered that the removal of fine variations of the vorticity field by the viscosity is consistent with taking a local average of the microscopic vorticity field in the MRS theory \citep{robert1991statistical,sommeria1991final,ishioka1998}. { We should also note that the viscosity smoothes out the vorticity distribution, which may prevent the vorticity field from reaching the equilibrium that the MRS theory predicts.}
 
 Equation \eqref{NSvis} is integrated from the initial vorticity field defined by \eqref{ini_vor} plus the disturbance \(q_{\rm d}\) defined below:
 \begin{align*}
 q_{\rm d}(\lambda,\mu) = \alpha \left[e^{\gamma (\cos \varphi -1)}-\frac{1-e^{-2\gamma}}{2\gamma}\right].
\end{align*}
Here, \(\varphi\) denotes the angular distance from the center of the disturbance \((\lambda_d,\mu_d)\), which is given by
\begin{align*}
\cos \varphi = \mu_{\rm d} \mu + \sqrt{(1-\mu_d^2)(1-\mu^2)}\cos (\lambda-\lambda_d).
\end{align*}
This disturbance has a Gaussian-like distribution and was used in \cite{ishioka1994non}. We set the magnitude of the disturbance as \(\alpha= 10^{-3}/(2\pi)\), the width parameter of the disturbance as \(\gamma = 100\), and the center as \((\lambda_{\rm d} ,\mu_{\rm d})=(0,\sqrt{2}/2)\). 
 For the time integration, \eqref{NSvis} is discretized using the spectral method with spherical harmonics, the triangular truncation wavenumber of which is \(N=682\). The nonlinear term in \eqref{NSvis} is evaluated via the transform method using \(2048 \times 1024\) grid points on the sphere. The viscosity coefficient is set to \(\nu = 1/[2\pi(N(N+1)-2)]\), which means that the \(e\)-folding time at the truncation wavenumber corresponds to the rotation period of the sphere. The classical 4th-order Runge-Kutta method is used for the time integration with the time step of \(\Delta t = 2\pi/1000\).
 
 In \cite{ishioka1998}, the vorticity equation was numerically integrated with the spherical harmonics spectral method whose truncation wavenumber was \(N=42\); a \(128\times 64\) grid on the sphere was used for the transform method. There, in place of the viscosity term, a hyper-viscosity term in the form of a Laplacian to the 10th power was used, and the \(e\)-folding time at the truncation wavenumber was set to 0.1 rotation period of the sphere. Thus, compared with \cite{ishioka1998}, we conduct a longer-time computation with a much finer resolution, and with the normal viscosity.
 
\subsection{Result of the time integration}

Figure \ref{fig_timeevo} shows the time evolution of the vorticity field.
\begin{figure}[htbp]
\centering
\includegraphics[width=12.0cm]{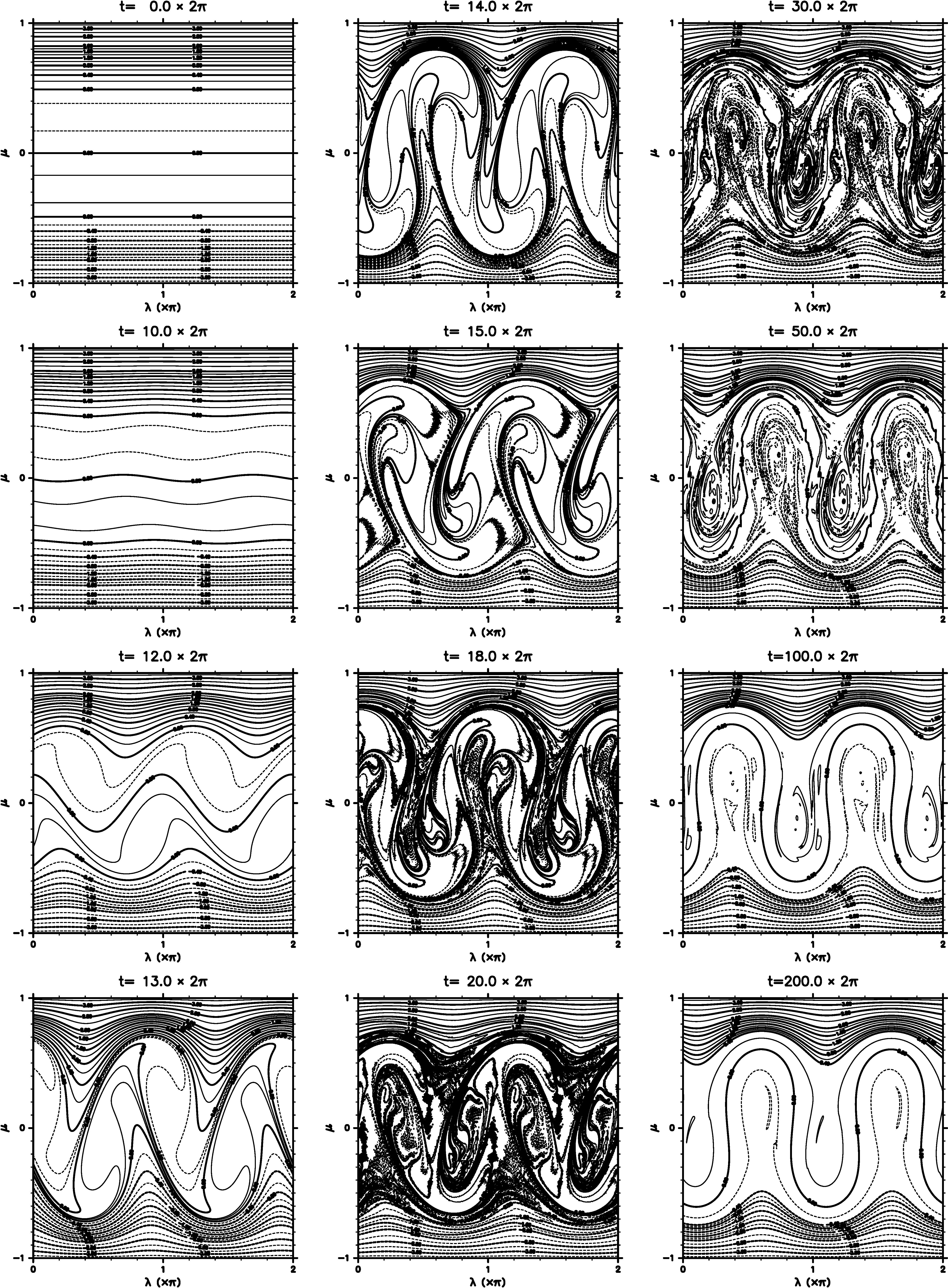}
\caption{Time evolution of the vorticity field. The panels show snapshots at \(t=(0, 12, 13, 14, 15, 18, 20, 30, 50, 100, 200)\times 2\pi\).  The horizontal axis is \(\lambda\); the vertical axis is \(\mu\). The contour interval is set to 0.2, where the value of the vorticity is between -2.0 and 2.0, and 0.4 otherwise.}
\label{fig_timeevo}
\end{figure}
At \(t=0\), the vorticity field is almost zonal since the amplitude of the initial disturbance is small. As the disturbance grows by the barotropic instability, a wavy structure becomes visible at \(t=10\times 2\pi\). The disturbance then grows significantly from \(t=15\times 2\pi\) to \(t=20\times 2\pi\), and the fluid is stirred especially around the equator. At \(t=20\times 2\pi\), the contours of the turbulent vorticity field are filamentary, which means that the vorticity field has very fine structures. As time goes on, these fine structures are smoothed out by the viscosity (\(t=(30, 50, 100)\times 2\pi\)) and eventually, a large-scale wavenumber 2 structure emerges (\(t=200\times 2\pi)\). The wavenumber 2 structure translates westward if seen from the observer rotating with the sphere (in the absolute reference frame, it moves eastward). {Because of the viscosity included in \eqref{NSvis}, the energy is not exactly conserved in the time evolution. At \(t=200\times 2\pi\), however, the energy \(E\) satisfies \(E=0.991694E_0\), where \(E_0\) is the initial value of the energy. That is, the energy is nearly conserved in this case.}

\subsection{Computed statistical equilibria}
We now compute the statistical equilibria that satisfy all the constraints, including energy conservation for the initial vorticity field defined by \eqref{ini_vor}, using the three methods proposed in section \ref{methods}. The following subsections explain details of the application procedures of the three methods and show the obtained statistical equilibria. 

The computations to search the statistical equilibria were done on a PC whose CPU was Intel Xeon W-2135 which had 6 cores of 3.7GHz. When we used ispack-3.0.1, multi-thread computations of 6 threads but without MPI parallelization were performed. Each search computation was completed in several hours.

\subsubsection{The bridge-building method for the canonical problem}
First, we compute the critical point for \(\beta=0\) following the procedure described in section \ref{beta=0}. The resulting vorticity distribution is shown in Figure \ref{bridgezonal_figure}(a). We then compute a zonal critical point for \(\beta= { -132.569}\) (Figure \ref{bridgezonal_figure}(b)), where the value of \(\beta\) is chosen so that energy \(E\) equals \(2.64\times 10^{-2}\). This choice of the value of \(E\) is somewhat arbitrary, but it is chosen here because it is close to \(E_0=2.94\times 10^{-2}\) but not too close. When \(E\) approaches the value \(E_0\) of the initial vorticity field, the numerical computation of zonal critical points becomes difficult, as some components of the macroscopic state \((r_{ijk})\) approach zero, which leads to some of the multipliers overflowing in the computation. Hence, we take as the starting point of the bridge-building method a zonal vorticity field that has somewhat lower energy \(E=2.64\times 10^{-2}\) than the energy of the initial vorticity field. Note that in the steps described above, the considered vorticity distributions are zonally symmetric, and therefore the computational cost can be greatly reduced by assuming that all the \(r_{ijk}\)'s are independent of \(i\). 

Next, the bridge-building method (section \ref{bridgebuilding}) is applied, with the value of \(\beta\) fixed at \({ -132.569}\). To search for an equilibrium whose corresponding vorticity field structure is similar to that which appears in the end state of the time integration, the direction of bridge \((m_0,n_0)\) in the problem (BP) is set to \((2,2)\). As a result, a non-zonal statistical equilibrium for \(\beta = { -132.569}\) is obtained, which is shown in Figure \ref{bridgeM2N2_figure}(a). By gradually decreasing the value of \(\beta\) and computing the equilibrium for each \(\beta\), we reach the statistical equilibrium whose energy value equals to \(E_0\) (Figure \ref{bridgeM2N2_figure}(b)). The structure of the vorticity field corresponding to this statistical equilibrium is very similar to that of the end state of the time integration (the panel of \(t=200\times 2\pi\) in Figure \ref{fig_timeevo}). We can choose another bridge as \((m_0,n_0)=(1,2)\) and compute an equilibrium for \(\beta = { -132.569}\) (Figure \ref{bridgeM1N2_figure}(a)) and an equilibrium whose energy value is equal to \(E_0\) (Figure \ref{bridgeM1N2_figure}(b)). However, the wavenumber 1 type of the vorticity field has no counterpart in the time evolution. Furthermore, \cite{ishioka1998} did not obtain such a wavenumber 1 pattern of statistical equilibrium. 

{There is no guiding principle on how to choose \((m_0, n_0)\), which leaves arbitrariness in the bridge-building method. However, we do not think it is necessary to search for all wavenumbers. With the results of simpler statistical mechanics theories such as \cite{herbert2013additional} in mind, we expect the statistical equilibrium solution to have a large scale spatial distribution, and therefore we believe that it is sufficient to apply the bridge-building method for small \(m_0, n_0\). In fact, we computed several cases of \((m_0,n_0)\) other than the two cases shown above, such as \((m_0,n_0)=(1,3), (1,4), (1,5), (2,3), (2,4), (2,5), (3,3), (3,4)\), and \((3,5)\). In these computations, we could not reach critical points of the free energy \(F_\beta\).}

\begin{figure}[htbp]
 \centering
 \includegraphics[width=10.0cm]{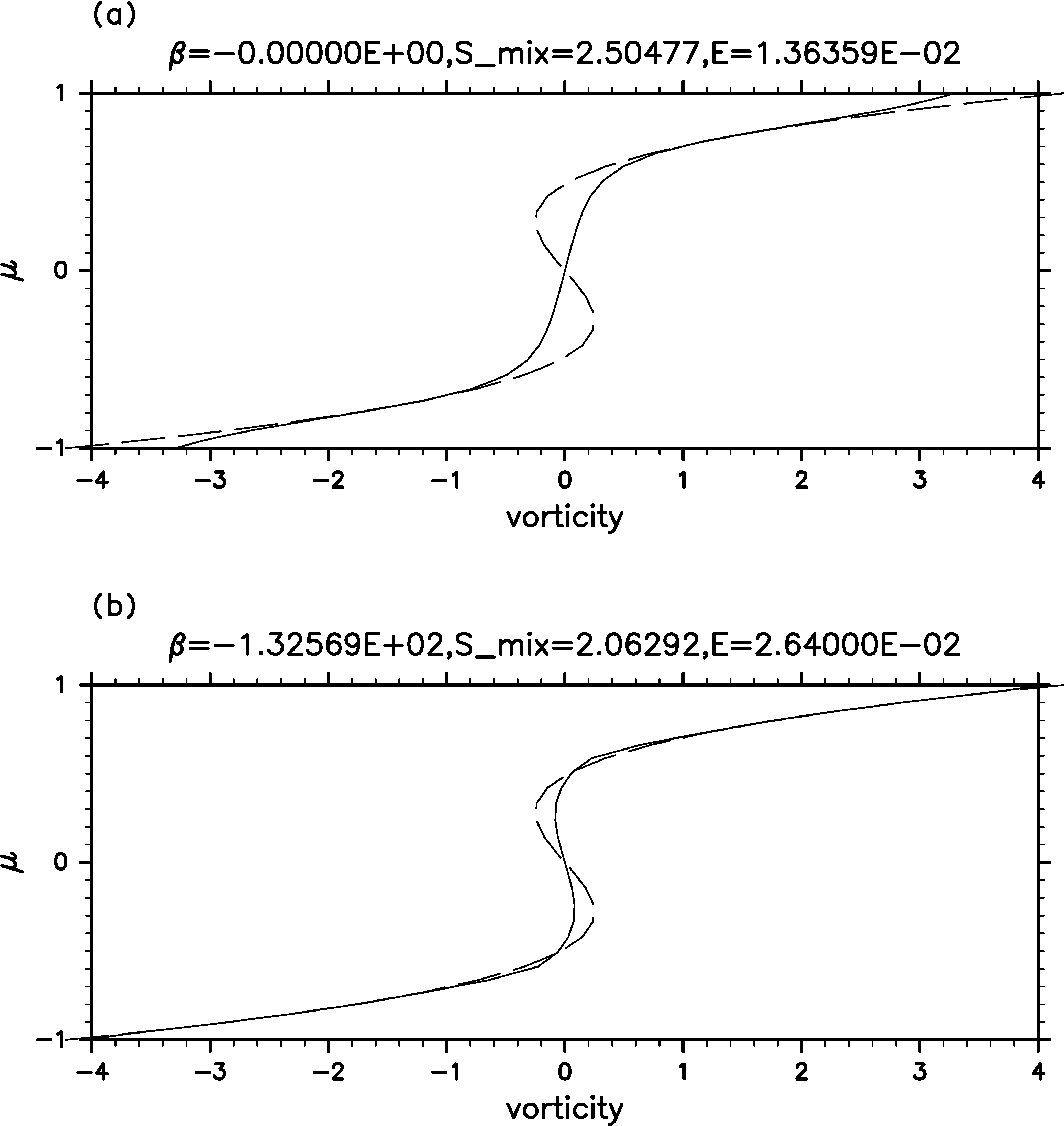}
 \caption{Zonal vorticity profiles corresponding to critical points for the canonical problem with fixed \(\beta\)'s. The vertical axis is \(\mu\); the horizontal axis is the vorticity \(q\). (a) The vorticity profile for \(\beta=0\) (solid line). (b) The vorticity profile for \(\beta={ -132.569}\) (solid line). The corresponding values of \(S_{\rm mix}\) and \(E\) are shown above each panel. In both panels, the broken lines show the initial vorticity profile defined by \eqref{ini_vor}.}
 \label{bridgezonal_figure}
\end{figure}

\begin{figure}[htbp]
 \centering
 \includegraphics[angle=270,width=12.0cm]{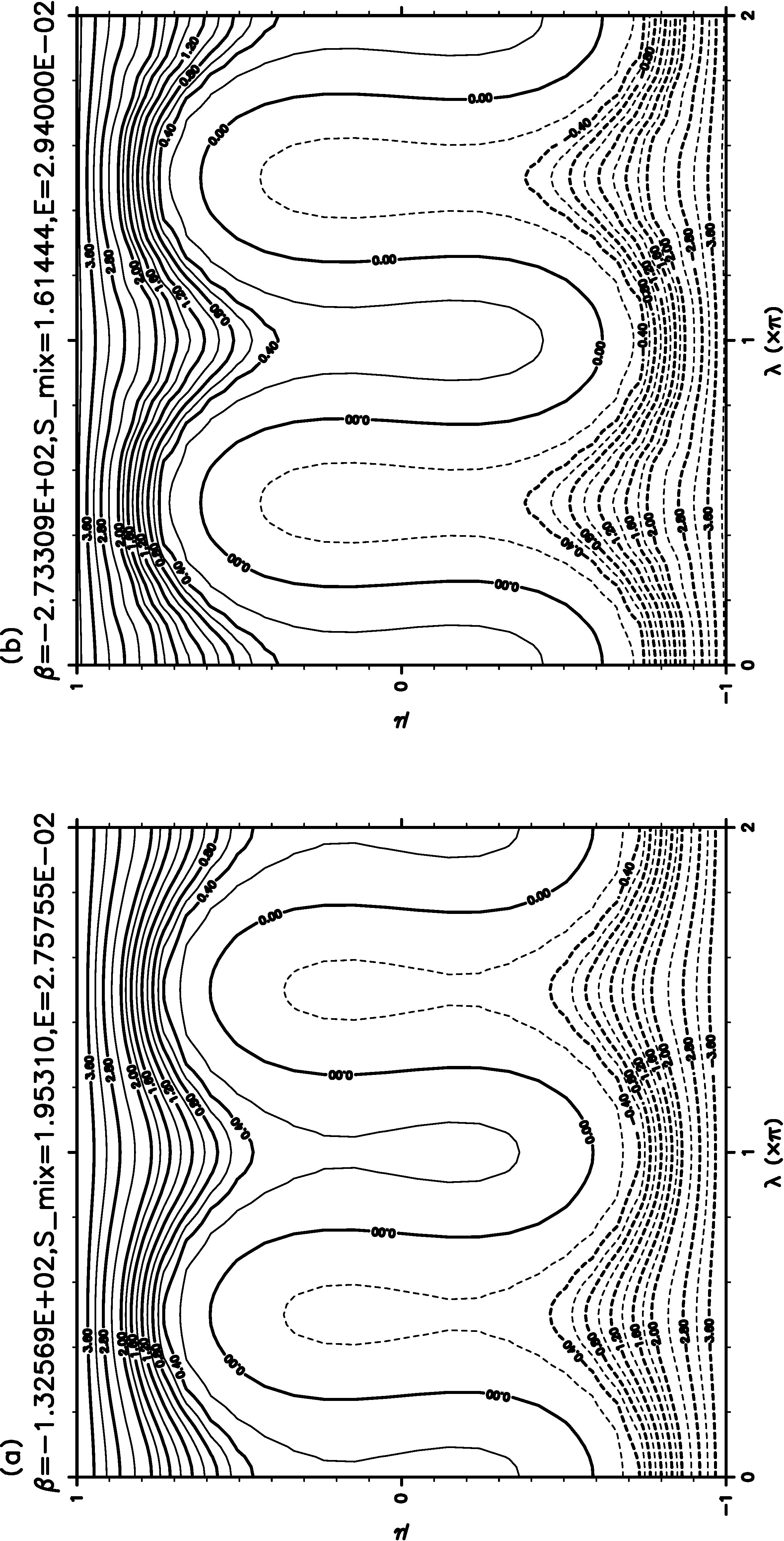}
 \caption{Macroscopic vorticity fields corresponding to solutions of the canonical problem. (a) The solution for \(\beta={ -132.569}\) computed by applying the bridge-building method for \((m_0,n_0)=(2,2)\). (b) The solution obtained by changing the value of \(\beta\) from the value of (a) so that \(E=E_0\) is satisfied. In both figures, the horizontal axis is \(\lambda\), the vertical axis is \(\mu\), and the contour interval is set in the same way as in Figure \ref{fig_timeevo}.}
 \label{bridgeM2N2_figure}   
\end{figure}

\begin{figure}[htbp]
 \centering
 \includegraphics[angle=270,width=12.0cm]{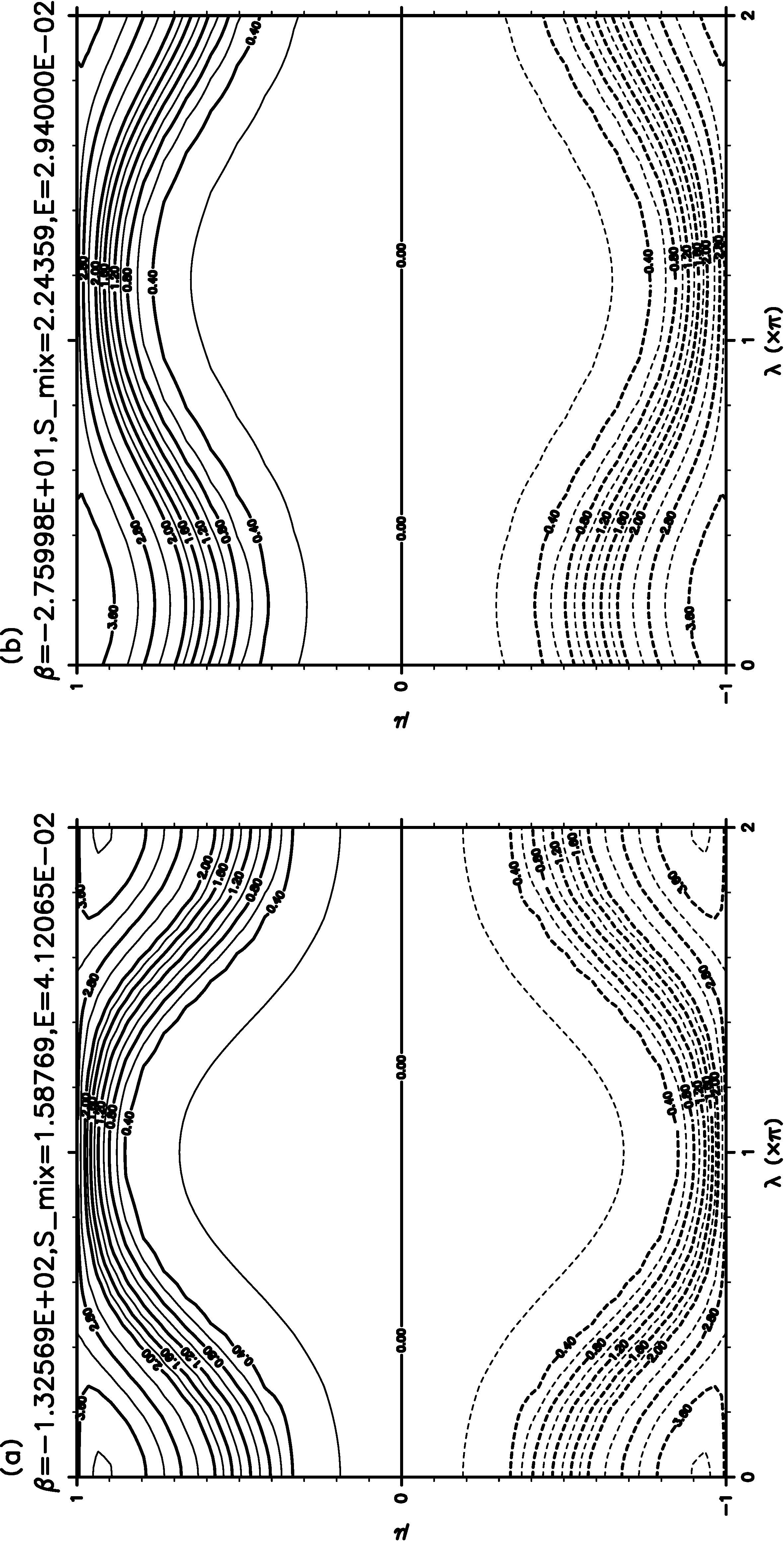}
 \caption{Same as Figure \ref{bridgeM2N2_figure} except that the bridge-building method for \((m_0,n_0)=(1,2)\) is applied. (a) The solution for \(\beta={ -132.569}\). (b) The solution obtained by changing the value of \(\beta\) from the value of (a) so that \(E=E_0\) is satisfied.}
 \label{bridgeM1N2_figure}
\end{figure}

\subsubsection{The gradient method for the canonical problem}
We apply the gradient method described in section \ref{grad_method} by setting the zonal critical point for \(\beta={ -132.569}\) that is obtained in the preparation phase of the bridge-building method (Figure \ref{bridgezonal_figure}(b)) as the starting point for the evolution. For the evolution of the gradient method, the classical fourth-order Runge-Kutta method is used. If we proceed with the calculation by following the method described in section \ref{grad_method}, we reach a wavenumber-1 type solution (not shown) which corresponds to the solution shown as Figure \ref{bridgeM1N2_figure}(b). To find the wavenumber-2 type equilibrium that is relevant to the end state of the time evolution of the vorticity equation, we modify the algorithm of the gradient method by adding frictional terms \(-k\hat{\xi}_{1,2}\) and \(-k\hat{\eta}_{1,2}\) (\(k\) is a positive constant) to the right-hand sides  of equation \eqref{grad_evo} for the \(\hat{\xi}_{1,2}\) and \(\hat{\eta}_{1,2}\) components, respectively.  By adopting this modified algorithm, the wavenumber-2 type solution is obtained (Figure \ref{grad_figure}(a)). The vorticity distribution for this solution is nearly identical to that for the solution computed using the bridge-building method (Figure \ref{bridgeM2N2_figure}(a)) except for the phase shift in the east-west direction. By tracking the branch of solutions starting from the solution for \(\beta={ -132.569}\) (Figure \ref{grad_figure}(a)) and lowering the value of \(\beta\) gradually, we can reach the solution energy value that satisfies \(E=E_0\).  Figure \ref{grad_figure}(b) shows its vorticity distribution; the corresponding \(\beta\) value is \(\beta={-285.194}\). However, the solutions obtained by the modified gradient method for both values of \(\beta\) turn out to be nothing more than saddles of \(F_\beta\), as the solutions transition to the wavenumber-1 type solutions if the frictional terms are removed. Note that there is a slight difference between the outcomes of the bridge-building method and the (modified) gradient method, as seen in the values of \(\beta\) and the values of the mixing entropy \(S_{\rm mix}\) for the equilibrium solutions for \(E=E_0\). That is, as shown in Figure \ref{bridgeM2N2_figure}(b) and Figure \ref{grad_figure}(b), \(\beta={-273.309}\) and \(S_{\rm mix}=1.61444\) for the bridge-building method, while \(\beta={-285.194}\) and \(S_{\rm mix}=1.60753\) for the gradient method. These differences can be attributed to the different approaches of discretization of the constraints between the two methods. 

\begin{figure}[htbp]
 \centering
 \includegraphics[angle=270,width=12.0cm]{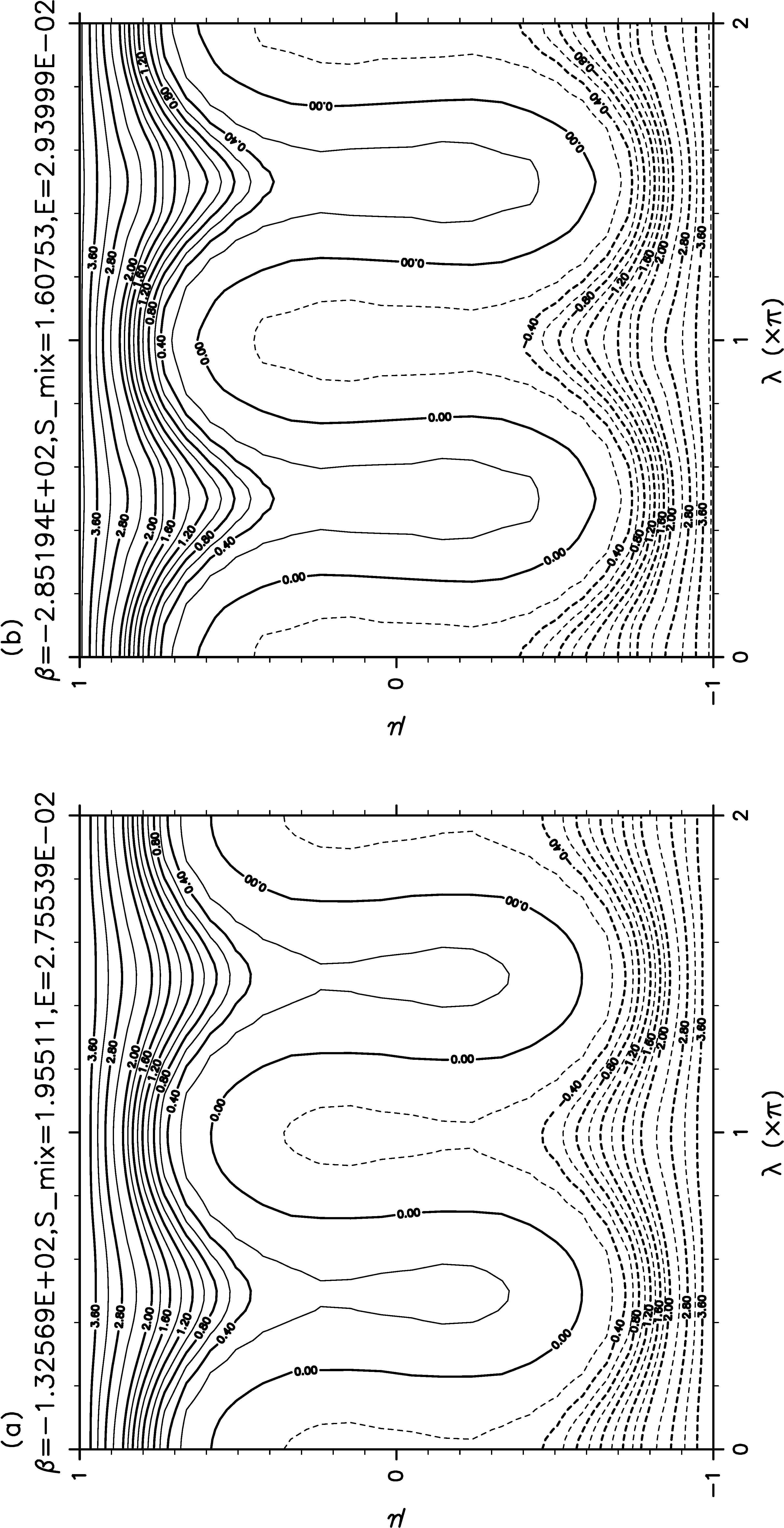}
 \caption{Same as Figure \ref{bridgeM2N2_figure} except that the modified gradient method is applied. (a) The solution for \(\beta={ -132.569}\). (b) The solution obtained by changing the value of \(\beta\) from the value of (a) so that \(E=E_0\) is satisfied.}
 \label{grad_figure}
\end{figure}
 
\subsubsection{The method for the microcanonical problem}
We apply the method described in section \ref{microcanonical} to solve the microcanonical problem for the initial vorticity field \eqref{ini_vor}, using the zonal critical point for \(\beta={ -132.569}\) that is obtained in the preparation phase of the bridge-building method (Figure \ref{bridgezonal_figure}(b)) as the starting point for the search. To search for a wavenumber-2 type equilibrium that is relevant to the end state of the time evolution of the vorticity equation, we maximize the linear function defined as
\begin{align}
f(Z)=\hat{\xi}_{2,2}
\label{fZ2}
\end{align}
on the surface \(L_M\), with \(M\) set as the value of \(V(Z)\) at the zonal critical point. By solving this maximization problem, we reach a point the corresponding vorticity field of which has a wave-like structure of wavenumber 2 (Figure \ref{microM2N2_figure}(a)). 
Next, we follow Step 3 in section \ref{microcanonical} to increase energy \(E\) to \(E_0\) gradually, enabling us to then reach a point on the energy hypersurface of \(E=E_0\) that can be used as a starting point for the Step 4 search (Figure \ref{microM2N2_figure}(b)). Finally, by maximizing the mixing entropy \(S_{\rm mix}\) on the energy hypersurface, a statistical equilibrium of wavenumber-2 type structure is obtained (Figure \ref{microM2N2_figure}(c)). The vorticity field for this equilibrium is nearly identical to that of the equilibria obtained by the other two methods, the bridge-building method and the gradient method, for the canonical problem (Figure \ref{bridgeM2N2_figure}(b) and Figure \ref{grad_figure}(b), respectively). We should note that if we were to adopt the gradient projection method naively in the last phase of the maximization process, the constraint \(E=E_0\) would be violated slightly because of the curvature of the energy hypersurface; in addition, the convergence of the maximization would become slow. To avoid these drawbacks, we introduce a local coordinate system on the energy hypersurface so that the energy constraint is satisfied exactly. This local coordinate system consists of the coefficients \((Z)\) except for \(\hat{\xi}_{0,3}\).  That is, we take the coefficient \(\hat{\xi}_{0,3}\) as a function of the coordinate variables, which is defined as follows:
\begin{align*}
\hat{\xi}_{0,3} = \sqrt{24} \sqrt{E_0-\sum_{n\neq 3}\frac{\hat{\xi}_{0,n}^2}{2n(n+1)} -\sum_{|m|\leq n}\frac{\hat{\xi}_{m,n}^2+\hat{\eta}_{m,n}^2}{n(n+1)}},
\end{align*}
{in order for the constraint \eqref{energy_const} to be satisfied automatically.}
The use of this local coordinate system is valid because \(\hat{\xi}_{0,3}> 0\) holds in a neighborhood of the initial point for searching. Using this local coordinate system, we are able to reach a point \(Z\) which is sufficiently close to a critical point of \(S_{\rm mix}\) by advancing in the direction of the gradient of \(S_{\rm mix}\). We can then accelerate the convergence by using Newton's method.

A wavenumber-1 type of equilibrium solution is also obtained (Figure \ref{microM1N2_figure}) if, instead of setting \(f(Z)\) as \eqref{fZ2}, we set \(f(Z)=\hat{\xi}_{1,2}\).
To examine whether the obtained equilibrium solutions for the microcanonical problem are exactly local maxima of \(S_{\rm mix}\), we compute the eigenvalues of the Hessian matrices of \(S_{\rm mix}\) on the local coordinate system at these solution points.  All the eigenvalues for the wavenumber-1 type solution are negative, which means the solution point is a local maximum of \(S_{\rm mix}\). On the other hand, two of the eigenvalues for the wavenumber-2 type solution are positive, and the others are negative. Thus, the wavenumber-2 type solution is not a local maximum but a saddle. When the point is slightly pushed from the wavenumber-2 type solution in the direction of the eigenvector corresponding to either of the positive eigenvalues, the point moves toward the wavenumber-1 type equilibrium solution. 
\begin{figure}[htbp]
 \centering
 \includegraphics[angle=270,width=14.0cm]{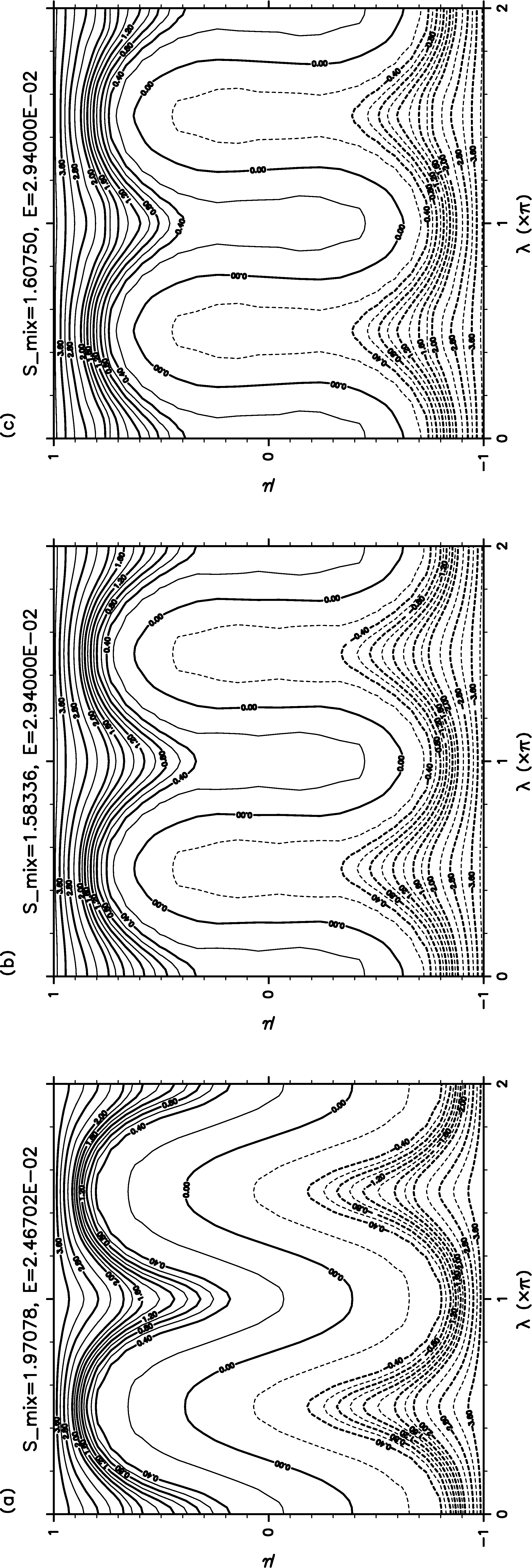}
 \caption{Macroscopic vorticity fields corresponding to the solutions of the canonical problem. (a)The solution of the maximization problem of the linear function \(f(Z)=\hat{\xi}_{2,2}\) on \(L_M\), starting from the point of Figure \ref{bridgezonal_figure}(c). (b) The point on the energy-hypersurface of \(E=E_0\) reached by following Step 3 in section \ref{microcanonical}. (c) The equilibrium solution obtained by the maximization of the mixing entropy on the energy-hypersurface of \(E=E_0\). In each panel, the horizontal axis is \(\lambda\), the vertical axis is \(\mu\), and the contour interval is set in the same way as in Figure \ref{fig_timeevo}.} 
 \label{microM2N2_figure}
 \end{figure}

 \begin{figure}[htbp]
 \centering
 \includegraphics[angle=270,width=14.0cm]{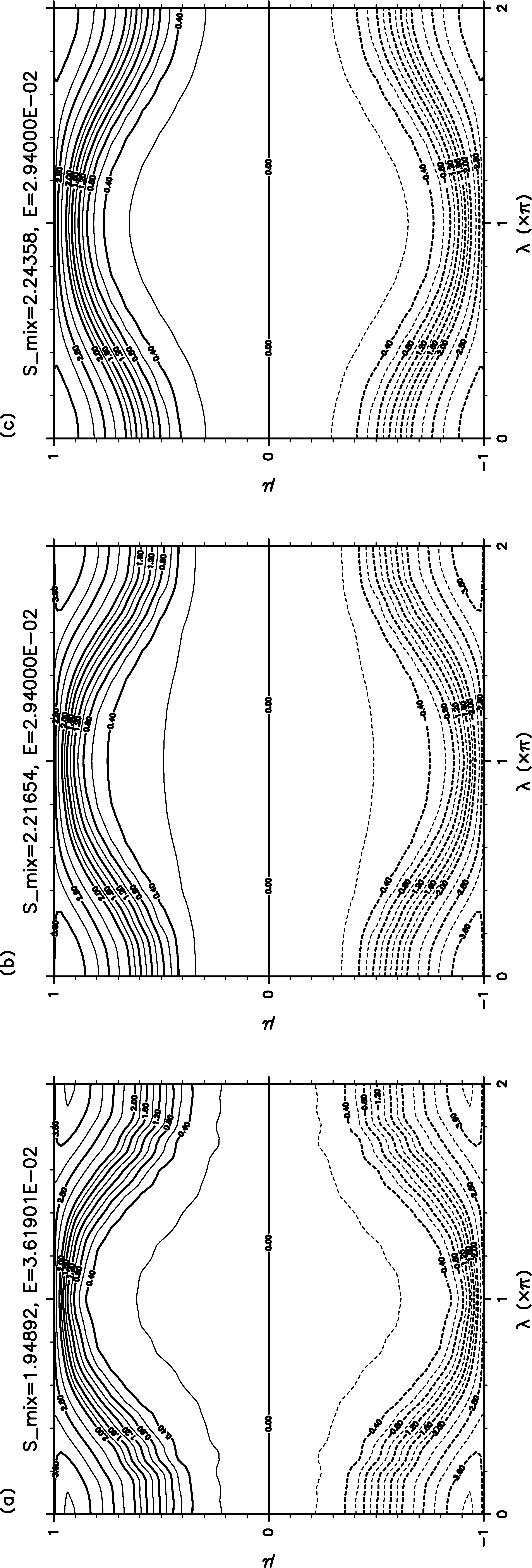}
 \caption{Same as Figure \ref{microM2N2_figure} except that the linear function is set as \(f(Z)=\hat{\xi}_{1,2}\).}
 \label{microM1N2_figure}
 \end{figure}
 
 \subsection{Relation between vorticity and stream function (\(\overline{q}\)-\(\overline{\psi}\) relation)}
\label{qpsirel_section}
For a critical point of the mixing entropy with a non-zero inverse temperature \(\beta\), there exist constants \(\Omega_{1},\Omega_2,\Omega_3\) and a function \(f\) such that the relation
\begin{align}
    \overline{q}(\lambda,\mu)=f(\overline{\psi}(\lambda,\mu)+\Omega_1\mu+\Omega_{2}\sqrt{1-\mu^2}\cos \lambda + \Omega_3 \sqrt{1-\mu^2}\sin \lambda)\label{qpsirel_theory}
\end{align}
holds between the values of the macroscopic vorticity field \(\overline{q}\) and the macroscopic stream function \(\overline{\psi}\). For the initial vorticity field \eqref{ini_vor} considered here, two components \(M_2^{\rm ini}\) and \(M_3^{\rm ini}\) of the angular momentum are zero, while \(M_1^{\rm ini}\neq 0\). This yields \(\Omega_{2}=\Omega_3=0\) (For the relations between the values of \(\Omega_{1},\Omega_2,\Omega_3\) and the angular momentum, see the Supplement.). That is, the macroscopic vorticity field \(\overline{q}\) corresponding to a critical point of the mixing entropy is a stationary solution of the Euler equation when viewed by an observer rotating about the axis passing through the north and south poles at an angular velocity \(\Omega_1\). In other words, the macroscopic vorticity field \(\overline{q}\) is rotating like a rigid body at angular velocity \(\Omega_1\). For the computed solutions of the microcanonical problem (Figures \ref{microM2N2_figure}(b) and \ref{microM1N2_figure}(c)), we can draw a scatter plot of the macroscopic vorticity \(\overline{q}\) versus the modified stream function \(\overline{\psi}+\Omega_1 \mu\), where the value of \(\Omega_1\) is determined by the method described below. If the functional relationship \eqref{qpsirel_theory} with \(\Omega_2=\Omega_3=0\) holds, the Jacobian 
\begin{align*}
\frac{\partial(\overline{\psi}+\Omega_1\mu)}{\partial \lambda}\frac{\partial \overline{q}}{\partial \mu}-\frac{\partial \overline{q}}{\partial \lambda}\frac{\partial(\overline{\psi}+\Omega_1\mu)}{\partial \mu}
\end{align*}
should vanish. Therefore, 
\begin{align}
\frac{\partial \overline{\psi}}{\partial \lambda}\frac{\partial \overline{q}}{\partial \mu}-\frac{\partial \overline{q}}{\partial \lambda}\frac{\partial \overline{\psi}}{\partial \mu} =  \Omega_1 \frac{\partial \overline{q}}{\partial \lambda} \label{angularv_det}
\end{align}
should hold at each point \((\lambda,\mu)\). The value of \(\Omega_1\) is determined so that a norm of the error of \eqref{angularv_det} is minimized. The norm used here is defined for a function \(F\) with zero mean by
\begin{align*}
||F|| := \left(-\int_{-1}^1\int_{0}^{2\pi} F(\lambda,\mu)G(\lambda,\mu) {\rm d}\lambda {\rm d}\mu \right)^{\frac{1}{2}},
\end{align*}
where \(G\) is a function such that \(\Delta G=F\). The scatter plots for Figures \ref{microM2N2_figure}(b) and \ref{microM1N2_figure}(c) are shown in Figures {\ref{qpsirel_ES_figure}(a)} and {\ref{qpsirel_ES_figure}(b)}, respectively. For comparison, we also determine the value of \(\Omega_1\) and draw a scatter plot of the vorticity \(q\) versus the modified stream function \(\psi+\Omega_1 \mu\) for the end state of the time integration (panel for \(t=200\times 2\pi\) in Figure \ref{fig_timeevo}). While the statistical equilibria defined in the framework of energy-enstrophy theory have a linear functional relationship (i.e. a linear \(f\)) in \eqref{qpsirel_theory} \citep{herbert2012statistical}, the scatter plots shown in Figure \ref{qpsirel_TE_figure} have clearly nonlinear \(\overline{q}\)-\(\overline{\psi}\) relations. The functional form of the \(\overline{q}\)-\(\overline{\psi}\) relation for the wavenumber-2 type solution ({Figure \ref{qpsirel_ES_figure}(a)}) is very similar that of \(q\)-\(\psi\) relation for the end state of the time integration (Figure \ref{qpsirel_TE_figure}), especially in the part of \(|\overline{q}|>0.5\) (\(|q|>0.5\)). Conversely, in the part where the absolute values of \(q\) and \(\overline{q}\) are small, there is a difference between them. The region of small values of \(|q|\) is located at low latitudes on the sphere, where strong mixing occurs in the time evolution. Hence, the small but finite viscosity used in the time integration may be the cause of the difference. However, it is also possible that the resolution of the discretization used in the computation of the statistical equilibria may still be insufficient to represent detailed structures in low latitudes.

\begin{figure}[htbp]
\centering
\includegraphics[angle=270,width=14.0cm]{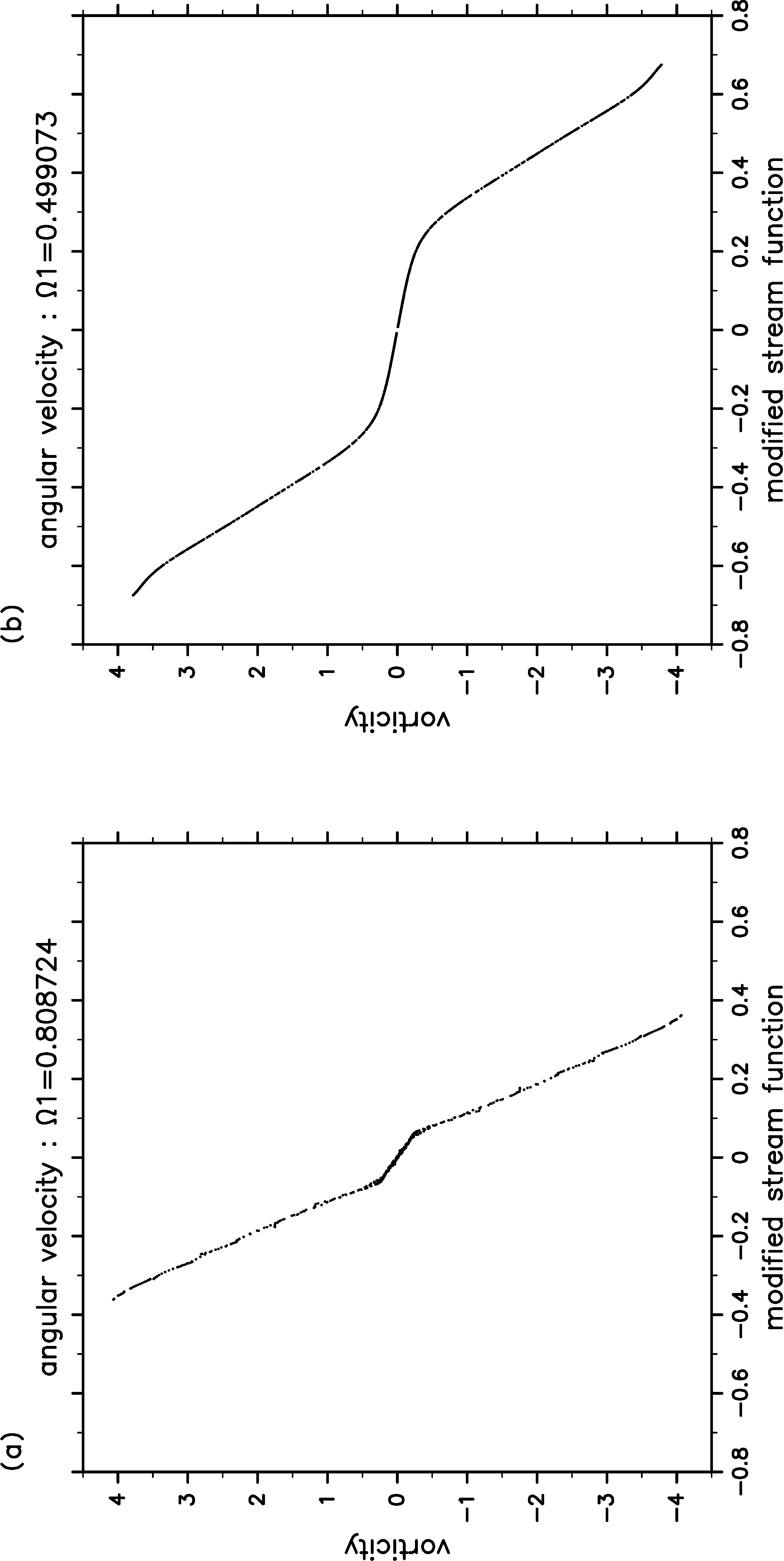}
\caption{Scatter plot of the macroscopic vorticity \(\overline{q}\) versus the modified stream function \(\overline{\psi}+\Omega_1 \mu\). (a) For the wavenumber-2 type solution (Figure \ref{microM2N2_figure}(c)). (b) For the wavenumber-1 type solution (Figure \ref{microM1N2_figure}(c)). In each figure, the horizontal axis is the modified stream function \(\overline{\psi}+\Omega_1 \mu\), and the vertical axis is \(\overline{q}\). The value of the determined angular velocity \(\Omega_1\) is shown above each plot. The scatter plots are drawn using the values at \(32\times 64=2048\) grid points in each figure.}
\label{qpsirel_ES_figure}
\end{figure}

\begin{figure}[htbp]
\centering
\includegraphics[angle=270,width=7.0cm]{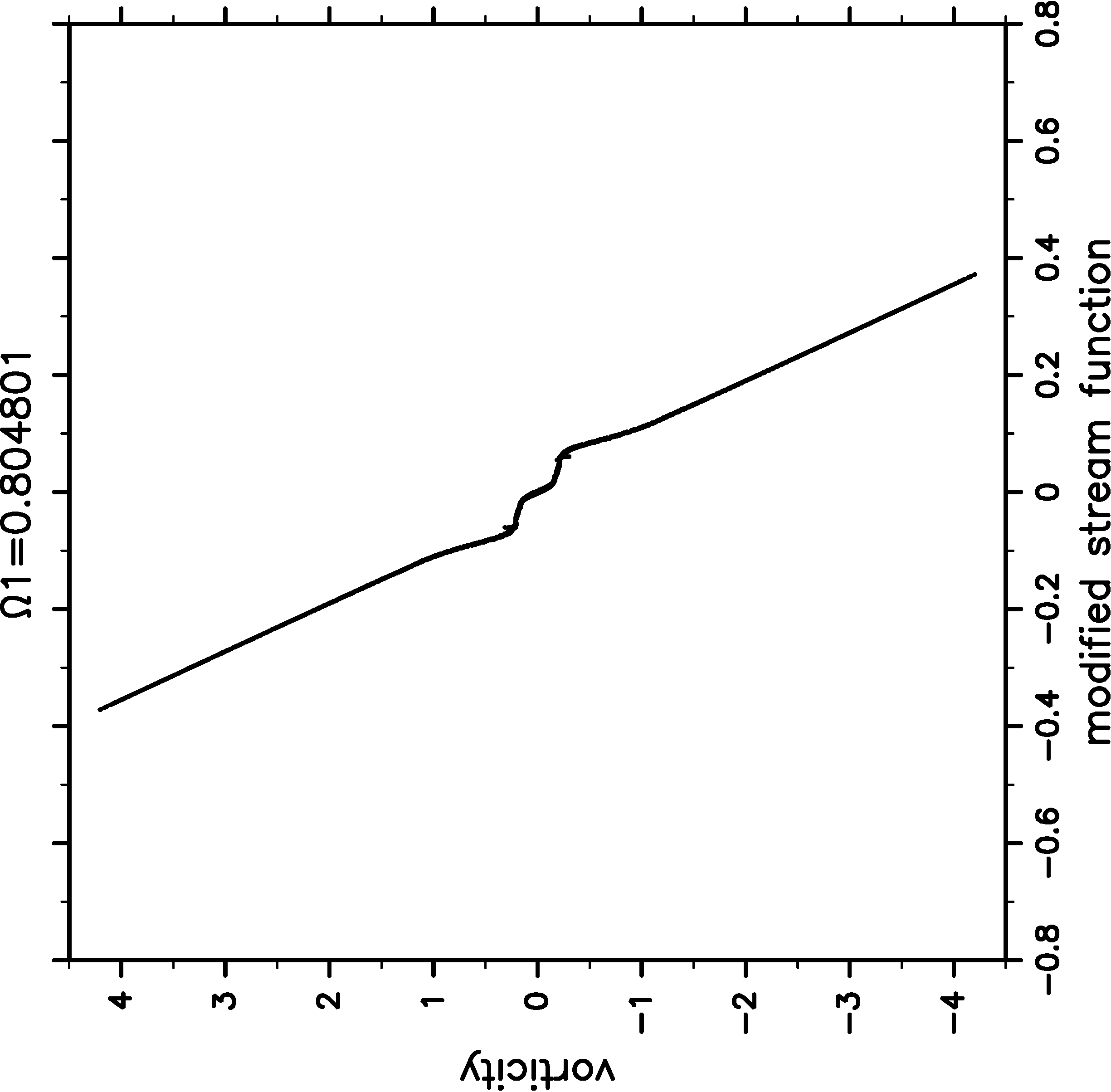}
\caption{Same as Figure \ref{qpsirel_ES_figure} except that the scatter plot is drawn for the vorticity \(q\) versus the modified stream function \(\psi +\Omega_1 \mu\) for the end state in the time integration (panel for \(t=200\times 2\pi\) in Figure \ref{fig_timeevo}). The scatter plots are drawn using the values at \(64\times 128=8192\) grid points.}
\label{qpsirel_TE_figure}
\end{figure}

\section{Discussion}\label{discussion}
We have proposed here three new methods for computing statistical equilibria on the basis of the MRS theory. The new methods have a number of advantages over the methods proposed in previous studies. One advantage is that they are applicable to computations of non-zonal statistical equilibria. 
In most of the previous methods other than the gradient method proposed by \cite{ishioka1998} and the continuation method proposed by \cite{thess1994inertial}, the path of the solution search has been strongly dependent on the initially given vorticity field, and it was difficult to control the path as pointed out by \cite{bouchet2012statistical}. 
That is, if the initially given vorticity field was zonal, the computed statistical equilibrium also tended to be zonal, and it often became just a saddle of \(F_\beta\) or \(S_{\rm mix}\).  One might think that given a non-zonal starting point for the search one could overcome this difficulty, but that in itself is quite a difficult problem for general initial vorticity distributions. In contrast, the proposed new methods make it possible to search for statistical equilibria with broken zonal symmetry. In particular, the gradient method for the canonical problem can automatically lead to non-zonal solutions.  In the bridge-building method and the method for the microcanonical problem, however, one needs to choose the direction of the expansion coefficients to search; i.e., one needs to choose the bridge-direction \((m_0,n_0)\) in (BP) or the linear function \(f(Z)\) in the method for the microcanonical problem. One might think that the need to make such a choice can lead to difficulties since it is necessary to try a wide range of search directions if the approximate distribution of the vorticity field corresponding to the relevant statistical equilibrium is not known in advance.  However, because only low-wavenumber solutions are relevant in actual applications of the MRS theory, it would be sufficient to check just a few search directions. 

Another advantage of the new methods over those in previous studies, such as \cite{robert1992relaxation} and \cite{ishioka1998}, is that they are computationally efficient even when the number of vorticity patches increases. 
Here, the method proposed by \cite{robert1992relaxation}, which is a relaxation method, requires that a system of differential equations consisting of the same number of equations as the number of patches be solved. This makes efficient computation difficult when the number of vorticity patches increases. Similar difficulty also arises in the method proposed by \cite{ishioka1998}, which is a gradient method for the microcanonical problem. There, the number of variables has the order of \(N^3\) (\(N\): the truncation wavenumber) and the applicability of the method is limited to low-resolution cases. In contrast, the gradient method for the microcanonical problem proposed in the present manuscript, which also solves the microcanonical problem, realizes more efficient computation by reducing the number of variables for which gradients are computed to \(O(N^2)\) through the subproblem (SP). 
Among the previous studies, \cite{prieto2001analytical} used a modified version of the iterative method proposed by \cite{turkington1996statistical} to find the equilibrium solution for the case of a large number of vortex patches, but there, only the solution with zonal symmetry is calculated and they do not seem to find a non-zonal solution.
The other advantage of the new methods is that the gradient method for the microcanonical problem makes it possible to determine whether a critical point is a saddle or a local extremum by computing the eigenvalues of the Hessian matrix of \(S_{\rm mix}\) as described in section \ref{example}. 
There is no guarantee, however, that the computed equilibrium is exactly the global maximum of \(F_\beta\) or \(S_{\rm mix}\) even if it is confirmed to be a local extremum. This difficulty comes from the nonlinear nature of the problem, and more mathematically sophisticated treatment of the maximization problem is needed. 

As mentioned above, our proposed new methods allow us to obtain non-zonal equilibrium solutions with higher resolution than in previous studies. In fact, in section \ref{example}, we computed the equilibrium solution for the same barotropically unstable zonal initial field considered in \cite{ishioka1998}, but with higher resolution. In both \cite{ishioka1998} and the present manuscript, the statistical equilibria with a wave-like structure of zonal wavenumber 2, which seem relevant to the end state of the time integration, have been obtained, but the equilibrium solution of \cite{ishioka1998}, which was computed with the resolution of a \(32\times 16\) grid and the truncation wavenumber \(N=10\), had a wavy structure at high latitudes with a smaller amplitude than that of the wavy structure appearing in the end state of the integration and the equilibrium solution obtained in section \ref{example} in the present manuscript \(N=21\) results. We can conclude that this discrepancy is due to the low resolution used in \cite{ishioka1998}.

For the wavenumber-2 type of equilibrium solution mentioned above, we have shown that the wavenumber-2 solution is not a local maximum of both \(F_\beta\) for the canonical problem and \(S_{\rm mix}\) for the microcanonical problem. For the canonical (microcanonical) problem, there exists a wavenumber-1 type equilibrium with a larger value of \(F_\beta\) (\(S_{\rm mix}\)). Therefore, the wavenumber-2 type solution is not a local maximum, but a saddle for both the canonical problem and the microcanonical problem. The reason that the wavenumber-2 type solution was obtained even using the gradient method (or the gradient projection method) is that this point is stable to perturbations in all but two directions. Meanwhile, in the framework of the linear stability analysis \citep{ishioka1990instability}, the initial zonal vorticity field has been proved to be unstable only to perturbations of zonal wavenumber 2. Hence, there is no room for a disturbance of wavenumber 1 to grow spontaneously from this initial vorticity field. {The reason we introduced the frictional terms in the gradient method in section \ref{example} is to suppress the growth of the wavenumber-1 perturbation. This suppression mimics the fact that the wavenumber-1 perturbation cannot grow spontaneously in the time evolution of the vorticity equation.} The maximum entropy principle by \cite{robert1991maximum} and \cite{robert1991statistical} states that the vast majority of microscopic states concentrate on the macroscopic state that maximizes the mixing entropy. In the case that we have considered in the present manuscript, however, the vorticity field did not evolve toward the state of the wavenumber-1 type equilibrium and settled for the ``second best'' wavenumber-2 type equilibrium. This example shows that saddles of the mixing entropy can have some physical meaning if, for some reason, the maximum of the mixing entropy cannot be reached.

In the present manuscript, we have computed the statistical equilibrium solutions according to the original MRS theory on the spherical geometry, while \cite{herbert2013additional} analytically studied statistical equilibrium solutions on the basis of the energy-enstrophy measure on spherical geometry.
Our finding in section \ref{example} that the wavenumber-1 type and wavenumber-2 type of solutions were obtained from the initial vorticity field \eqref{ini_vor} is consistent with the result of \cite{herbert2013additional}. In Herbert's solutions, the energy of a macroscopic vorticity field in equilibria condensed to the modes with wavenumbers of \(n\leq 2\). In the wavenumber-1 type equilibrium obtained in the present manuscript, the expansion coefficient of \((m,n)=(1,2)\) has a large amplitude, while in the wavenumber-2 type, the expansion coefficient of \((m,n)=(2,2)\) has a large amplitude. In the framework of energy-enstrophy theory, however, the initial vorticity distribution is not conserved, whereas it is conserved in the original MRS theory. 
This difference leads to a difference between the statistical equilibria predicted by the two theories.
Indeed, the statistical equilibria obtained in section \ref{example} have nonlinear \(\overline{q}\)-\(\overline{\psi}\) relations, which are different from the linear \(\overline{q}\)-\(\overline{\psi}\) relations predicted by energy-enstrophy theory. 
As shown in section \ref{example}, the \({q}\)-\({\psi}\) relation calculated for the end state of the time integration is clearly nonlinear, and we believe that the equilibrium solution obtained according to the original MRS theory has more relevance.

{Note that the methods proposed in this manuscript can be extended to other geometries than sphere (e.g., two-dimensional torus or rectangular domains of \(\mathbf{R}^2\)) with replacing the spherical harmonics by the eigenfunctions of Laplace-Beltrami operator on each geometry. Furthermore, the methods can be extended to 1.5-layer quasi-geostrophic flow system, since the flow is incompressible and the stream function is related to the potential vorticity by an elliptic equation there.}

The mathematical setup used for the microcanonical problem in the present manuscript can be applied to other problems in which the conservation of both energy and all Casimir invariants is imposed, such as the problem proposed by \cite{shepherd1988rigorous} of evaluating the upper bound for the maximal growth of disturbances from barotropically unstable zonal flows. In \cite{ishioka1996numerical}, a numerical method for the upper-bound problem was proposed in which the mathematical formulation was similar to that in the present manuscript. Furthermore, \cite{ishioka2013proof} proposed a more efficient algorithm to compute the upper bound. This upper bound was, however, calculated without imposing the energy constraint, and thus the obtained upper bound may be looser than the best. Numerical methods for computing the upper bound imposing the energy constraint might be constructed by using the way of exploring the feasible region of the macroscopic vorticity field that has been proposed for solving the microcanonical problem in the present manuscript.

\section{Conclusions}\label{conclusion}
We have proposed three new methods for numerically solving the maximization problem of mixing entropy by which the statistical equilibrium states based on the Miller-Robert-Sommeria (MRS) theory for two-dimensional turbulence are obtained. For the canonical problem, the bridge-building method (section \ref{newton_method}), which is suitable for searching multiple equilibria, and the gradient method (section \ref{grad_method}), which can automatically detect non-zonal equilibria, were proposed. By using the bridge-building method, we were able to find a new branch of non-zonal solutions from one already computed, such as the branch of zonal solutions. In the gradient method, the computational cost was reduced by introducing a subproblem and taking the expansion coefficients of the macroscopic vorticity field as the only independent variables. For the microcanonical problem, a method that makes use of the geometry of the feasible region of the expansion coefficients was proposed. In this method, a variety of initial starting points for the search can be generated by maximizing linear functions on a hypersurface that is homeomorphic to a (higher dimensional) sphere. All of the proposed methods made the computation of multiple equilibrium solutions easier than those offered in previous studies. 

To demonstrate the effectiveness of the proposed methods, we computed the statistical equilibria for the initial vorticity field in \eqref{ini_vor}, which was zonal and barotropically unstable (Figure \ref{initial_figure}). As a result, a wavenumber-2 type solution (Figures \ref{bridgeM2N2_figure}(b), \ref{grad_figure}(b), and \ref{microM2N2_figure}(c)) as well as a wavenumber-1 type solution (Figures \ref{bridgeM1N2_figure}(b) and \ref{microM1N2_figure}(c)) were obtained. 
The structure of the vorticity field corresponding to the wavenumber-2 type statistical equilibrium was very similar to that of the end state of the time integration of the vorticity equation (Figure \ref{fig_timeevo}). It was also shown, however, that the wavenumber-1 solution was exactly a local maximum of the mixing entropy, but the wavenumber-2 solution was just a saddle of the mixing entropy. We believe that the reason that the wavenumber-2 type solution was relevant to the end state of the time integration is that the initial zonal vorticity profile was linearly unstable only for wavenumber 2 disturbance. 

The methods proposed in the present manuscript enable us to compute statistical equilibria for initial vorticity fields which consist of many levels of vorticity patches. Hence, the methods are expected to be applicable to a wide variety of geophysical flows since these methods are able to compute statistical equilibria without losing information regarding the initial vorticity fields (i.e., conservation of all Casimir invariants is satisfied in the microscopic sense). Furthermore, the method that we have proposed for the microcanonical problem will provide a new direction for dealing with other problems in which the conservation of all Casimir invariants and the conservation of energy are both required. 

\section*{Acknowledgments}
{We thank two anonymous reviewers for their helpful comments.}
This work was supported by JSPS KAKENHI Grant Numbers 20K04061.
This work was also supported by MEXT as
“Program for Promoting Researches on the Supercomputer Fugaku”
(Toward a unified view of the universe: from large scale structures
to planets, {JPMXP1020200109}) and
``Exploratory Challenge on Post-K computer''
(Elucidation of the Birth of Exoplanets [Second Earth] and the
Environmental Variations of Planets in the Solar System).

{This work used computational resources of the K computer and supercomputer
Fugaku provided by the RIKEN Center for Computational Science
through the HPCI System Research Project
(Project ID: hp160254, hp170225, hp180199, hp190170, hp200124, hp210164).}

The GFD-DENNOU Library
(http://www.gfd-dennou.org/arch/dcl/) was used to draw the figures.

\appendix
\section{Associated Legendre functions and spherical harmonics}
The Legendre polynomials \(P_{n}(\mu),\,n=0,1,2,\cdots\) are defined as
\begin{align*}
 P_n(\mu)= \frac{\sqrt{2n+1}}{2^n n!}\frac{{\rm d}^n}{{\rm d}\mu^n}(\mu^2-1)^n.
\end{align*}
For integers \((m,n)\) with \(|m|\leq n\), the associated Legendre function \(P_{m,n}(\mu)\) is defined as
\begin{align*}
 P_{m,n}(\mu) = \sqrt{\frac{(n-|m|)!}{(n+|m|)!}}(1-\mu^2)^{\frac{|m|}{2}}\frac{{\rm d}^{|m|}}{{\rm d}\mu^{|m|}}P_n(\mu).
\end{align*}
The associated Legendre functions have an orthogonal property for each \(m\):
\begin{align*}
 \int_{-1}^1 P_{m,n}(\mu)P_{m,l}(\mu) {\rm d}\mu = 2 \delta_{nl}.
\end{align*}
Here, \(\delta_{nl}\) is Kronecker delta.
Let \((\lambda,\mu)\) be the longitude-latitude coordinates on the sphere. For integers \((m,n)\) with \(|m|\leq n\), we define complex-valued continuous functions on the sphere \(Y_{m,n}(\lambda,\mu)\) by
\begin{align*}
 Y_{m,n}(\lambda,\mu)=P_{m,n}(\mu) e^{\sqrt{-1}m\lambda},
\end{align*}
and we call them spherical harmonics. The family of spherical harmonics \(\{Y_{m,n}\}_{|m|\leq n}\) consists of an orthogonal system about the \(L^2\)-inner product on the sphere, i.e.,
\begin{align*}
 \frac{1}{4\pi}\int_S Y_{m,n}(\lambda,\mu)Y_{p,q}^*(\lambda,\mu) {\rm d}S = \delta_{mp}\delta_{nq}
\end{align*}
Furthermore, the system is a basis of the \(L^2\)-space on the sphere, satisfying completeness \citep[e.g.][]{stein2016introduction}.

\section{Gaussian weights and approximating integration}
Each Legendre polynomial \(P_n(\mu)\) has \(n\) zeros in the interval \((-1,1)\), since the \(P_n(\mu)\)'s are orthogonal polynomials. For an integer \(J\), let \(\mu_1,\cdots,\mu_J\) be the sequence of the zeros of \(P_J(\mu)\), with \(\mu_1<\cdots<\mu_J\). The \(\mu_j\)'s are called the Gaussian nodes. The Gaussian weights \(\tilde{w}_j\,(j=1,\cdots,J)\) are defined as
\begin{align*}
 \tilde{w}_j = \frac{2(2J+1)}{(1-\mu_j^2)P_{J}'(\mu_j)^2}.
\end{align*}
By using the Gaussian nodes and the Gaussian weights, the integral of a function \(g(\mu)\) on the interval \([-1,1]\) is approximated by the Gauss-Legendre quadrature in the following way:
\begin{align}
 \int_{-1}^1 g(\mu) {\rm d}\mu \approx \sum_{j=1}^J \tilde{w}_j g(\mu_j)\label{GLint}.
\end{align}
The left- and right-hand sides of \eqref{GLint} are rigorously identical if \(g(\mu)\) is a polynomial in \(\mu\) of degree less than \(2J\) \citep[see e.g.,][]{durran2010numerical}. In particular,
\begin{align*}
 \tilde{w}_1+\cdots+\tilde{w}_J =2.
\end{align*}
When we consider approximating integrals of a function over the sphere, we discretize the sphere as we have done in section \ref{methods}, i.e., taking \(I\times J\) grid points \((\lambda_i,\mu_j)\) on the sphere and the integral of a function \(g(\lambda,\mu)\) over the sphere is approximated as
\begin{align*}
\frac{1}{4\pi}\int_S g(\lambda,\mu) {\rm d} S \approx \frac{1}{2I}\sum_{i=1}^I \sum_{j=1}^J \tilde{w}_j g(\lambda_i,\mu_j).
\end{align*}
By introducing the normalized Gaussian weight \(w_j=\tilde{w}_j/I\), the approximation formula
\begin{align}
 \frac{1}{4\pi}\int_S g(\lambda,\mu) {\rm d} S \approx \frac{1}{2}\sum_{i=1}^I \sum_{j=1}^J w_j g(\lambda_i,\mu_j)\label{int_sphere}
\end{align}
is obtained.
\bibliographystyle{jphysicsB}
\bibliography{reflist_abb}

%% file: Ryono_Ishioka_supplement_R2.tex
\section{Geometry of the feasible region}
\label{s-Shape_domain}
In the main part of the manuscript, to construct the gradient methods for the canonical and microcanonical problem, we considered 
\begin{align*}
 Z=(\hat{\xi}_{0,2},\cdots,\hat{\xi}_{0,N},\hat{\xi}_{1,2},\hat{\eta}_{1,2},\cdots,\hat{\xi}_{N,N},\hat{\eta}_{N,N})\in \mathbf{R}^{(N+1)^2-4}
\end{align*}
as the independent variable of the maximization problem. In this section, we present some of the geometrical properties of the feasible region of \(Z\), i.e., the set of values that \(Z\) can take. 

We assume that  \(I\geq 3N+1, J=I/2, K\geq 3\) as in the main part of the manuscript. We also assume that the vorticity values \(Q_k\) and areas of the patches \(S_k\) satisfy the following:
\begin{itemize}
\item \(S_k >0 \quad(k=1,\cdots, K),\)
\item \(\sum_{k=1}^K Q_kS_k = 0,\)
\item \(Q_k\neq Q_l\) for some \(k,l\).
\end{itemize}
All of these assumptions are physically reasonable.

The feasible region \(P\) is the set of \(Z\in \mathbf{R}^{(N+1)^2-4}\) such that some \(\{r_{ijk}\}_{i,j,k}\) exists and satisfies the following conditions:
\begin{align}
 &\sum_{k=1}^K r_{ijk} - 1=0 \qquad(\forall(i,j)\neq(I,J)),\label{s-d_prob}\\
 &\sum_{i=1}^I\sum_{j=1}^J w_j r_{ijk} -\frac{1}{2\pi} S_k =0\qquad(k=1,\cdots,K),\label{s-d_imcp}\\
 &r_{ijk} \geq 0\qquad (1\leq i\leq I,\,1\leq j\leq J,\,1\leq k\leq K),\label{s-positiveness}\\
 &\overline{q}_{ij} = \sum_{n=1}^N \sum_{m=-n}^n \hat{\zeta}_{m,n}Y_{m,n}(\lambda_i,\mu_j)\qquad(\forall(i,j)\neq (I,J)),\label{s-sg_grad}
\end{align}
where
\begin{align}
 \overline{q}_{ij}= \sum_{k=1}^K Q_k r_{ijk}. \label{s-qij}
\end{align}
To describe the properties of \(P\), we define the following sets:
\begin{align*}
 &R = \left\{(r_{ijk})\in \mathbf{R}^{IJK}\middle|\sum_{i,j}w_jr_{ijk}=\frac{S_k}{2\pi}\,(\forall k),\,\sum_{k=1}^K r_{ijk}=1\,(\forall i,j)\right\},\\
 &R^+ = R\cap \{(r_{ijk})\in\mathbf{R}^{IJK}| r_{ijk}\geq 0\,(\forall i,j,k)\}.
\end{align*}
By calculating the rank of the linear map that defines \(R\), we make
the following claim:
\begin{claim}
 \(R\) is an \((IJ-1)(K-1)\)-dimensional hyperplane in  \(\mathbf{R}^{IJK}\).
\end{claim}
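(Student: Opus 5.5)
The plan is to show that $R$ is a nonempty affine subspace whose direction space has dimension $IJK-(IJ+K-1)=(IJ-1)(K-1)$. Let $T:\mathbf{R}^{IJK}\to\mathbf{R}^{IJ}\times\mathbf{R}^{K}$ be the linear map
\[
T(r)=\Bigl(\bigl(\textstyle\sum_k r_{ijk}\bigr)_{i,j},\ \bigl(\textstyle\sum_{i,j}w_j r_{ijk}\bigr)_k\Bigr).
\]
Then $R=T^{-1}(\mathbf{1},(S_k/2\pi)_k)$. It therefore suffices to prove two things: $\operatorname{rank}T=IJ+K-1$, and the right-hand side lies in the image of $T$. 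Rank--nullity then gives $\dim\ker T=(IJ-1)(K-1)$, and $R$ is a translate of $\ker T$.

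For the rank, I will compute the kernel of the adjoint rather than the image directly. With respect to the standard inner products, $T^{*}(x,y)_{ijk}=x_{ij}+w_jy_k$ for $x\in\mathbf{R}^{IJ}$ and $y\in\mathbf{R}^K$. Suppose $T^*(x,y)=0$. Then $x_{ij}=-w_jy_k$ for every $k$. Since the Gaussian weights satisfy $w_j>0$, this forces $y_1=\cdots=y_K=:c$ and $x_{ij}=-cw_j$. Conversely, every such pair lies in $\ker T^*$, so $\ker T^*$ is exactly one-dimensional. This uses $K\ge 1$ and $w_j\neq 0$, both of which hold here. Hence $\operatorname{rank}T=(IJ+K)-1$.

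Equivalently, the image of $T$ is the hyperplane $\{(a,b):\sum_{i,j}w_ja_{ij}=\sum_k b_k\}$, the orthogonal complement of $\ker T^*$. So it remains to check that $(\mathbf{1},(S_k/2\pi)_k)$ satisfies this single compatibility condition. Since $w_j=\tilde w_j/I$ and $\sum_j\tilde w_j=2$, we get $\sum_{i,j}w_j\cdot 1=2$. The condition is therefore $\sum_k S_k=4\pi$: the patches must tile the unit sphere. This is implicit in the setup, since $\sum_k r_k\equiv1$ integrated over $S$ gives it, and it holds in the example where $S_k=2\pi I w_k$. An explicit point of $R$ is then $r_{ijk}=S_k/(4\pi)$, which one checks directly.

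The main subtlety, more a point of care than a real obstacle, is this compatibility condition, $\sum_k S_k=4\pi$ in the present normalization. Without it $R$ would be empty, so I will state explicitly that it is assumed. A second point is that $R$ is defined with the constraint for all $(i,j)$, including $(I,J)$, unlike \eqref{s-d_prob}. The single redundancy found above is precisely why the $(I,J)$ equation could be dropped there without changing the set.
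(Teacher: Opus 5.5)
Your proof is correct and follows the same route as the paper, which justifies the claim only by remarking that it follows from calculating the rank of the linear map defining \(R\). Your adjoint-kernel computation of that rank as \(IJ+K-1\), together with your explicit consistency check \(\sum_k S_k = 4\pi\), supplies exactly the details the paper omits; the paper uses that condition only implicitly, through the uniform point \(r_{ijk}=S_k/(4\pi)\) in the proof of Claim~\ref{s-claim6}.
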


We endow \(R\) with the structure of topological space induced from \(\mathbf{R}^{IJK}\). By noting that if \((r_{ijk})\in R^+\) then \(0\leq r_{ijk} \leq 1\) holds for all \(i,j,k\), the claim below follows:
\begin{claim}
\(R^+\) is a bounded closed convex subset of \(R\). 
\end{claim}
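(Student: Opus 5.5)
We have to show three things about \(R^+\): it is convex, it is closed in \(R\), and it is bounded. Each follows directly from the definitions, so we verify them in that order.

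\textbf{Convexity.} \(R\) is the solution set of finitely many affine equations, namely \(\sum_{i,j}w_jr_{ijk}=S_k/(2\pi)\) for each \(k\) and \(\sum_k r_{ijk}=1\) for each \((i,j)\). An affine subspace is convex, and by the previous Claim this one is nonempty and of dimension \((IJ-1)(K-1)\). The nonnegative orthant \(\{r_{ijk}\geq 0\ (\forall i,j,k)\}\) is also convex. It is an intersection of the half-spaces \(\{r_{ijk}\geq 0\}\), each of which is convex. The intersection of convex sets is convex, so \(R^+\) is convex.

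\textbf{Closedness.} Each coordinate functional \(r\mapsto r_{ijk}\) is continuous on \(\mathbf{R}^{IJK}\). Hence each half-space \(\{r_{ijk}\geq 0\}\) is closed, and so is the orthant, being a finite intersection of them. \(R\) is closed in \(\mathbf{R}^{IJK}\) because it is the preimage of a point under a continuous affine map. Therefore \(R^+\) is closed in \(\mathbf{R}^{IJK}\). Since \(R\) carries the subspace topology, \(R^+\) is also closed in \(R\).

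\textbf{Boundedness.} Take \((r_{ijk})\in R^+\) and fix \((i,j)\). The constraint \(\sum_{k=1}^K r_{ijk}=1\) together with \(r_{ijk}\geq 0\) gives, for each \(k\),
\[
0\leq r_{ijk}=1-\sum_{l\neq k}r_{ijl}\leq 1 .
\]
So \(R^+\subset[0,1]^{IJK}\), and every point of \(R^+\) has Euclidean norm at most \(\sqrt{IJK}\). Thus \(R^+\) is bounded in \(\mathbf{R}^{IJK}\), and hence in \(R\), because the metric on \(R\) is the restriction of the Euclidean one.

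The only point needing care is that \(R\) imposes \(\sum_k r_{ijk}=1\) for \emph{all} \((i,j)\), including \((I,J)\). The discretized constraint \eqref{s-d_prob} omits \((I,J)\), but the definition of \(R\) includes it, so the bound \(0\leq r_{ijk}\leq 1\) holds at every grid point without further argument.
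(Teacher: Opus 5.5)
Your proof is correct and follows essentially the same route as the paper, which just observes that membership in \(R^+\) forces \(0\leq r_{ijk}\leq 1\); convexity and closedness are left implicit there as properties of an affine subspace intersected with the closed nonnegative orthant, and you have written out those steps in full. Your remark about the \((I,J)\) constraint is also accurate: the definition of \(R\) imposes \(\sum_k r_{ijk}=1\) at every grid point, and that constraint would in any case follow from the others together with \(\sum_k S_k=4\pi\).
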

Let \(Q\) be the map that gives the grid values of the macroscopic vorticity \(\overline{q}_{ij}\) from the macroscopic state \(\{r_{ijk}\}\). In other words, we define the map \(Q:R\rightarrow\mathbf{R}^{IJ}\) by
\begin{align*}
 Q : (r_{ijk})_{i,j,k} \mapsto (q_{ij})_{i,j}= \left( \sum_{k=1}^K Q_k r_{ijk}\right)_{i,j}\in \mathbf{R}^{IJ}.
\end{align*}
By the condition that \(Q_k\neq Q_l\) for some \(k,l\quad(k\neq l)\), we can show the following: 
\begin{claim}
The rank of the linear map \(Q:R\rightarrow \mathbf{R}^{IJ}\) is \(IJ-1\). If \((q_{ij})\in Q(R)\), then
\begin{align}
 \sum_{i,j}w_j q_{ij} = 0. \label{s-wjqij}
\end{align}
Therefore, the image of \(R\) under \(Q\) is identical to the hypersurface \(H\subset \mathbf{R}^{IJ}\) defined by the equation \eqref{s-wjqij}.
\end{claim}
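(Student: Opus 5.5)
The claim splits into two parts: $Q(R)\subset H$, and the linear part of $Q$ maps onto $H$. I will treat $R$ as an affine subspace $R=r^0+R_0$, with $r^0\in R$ a base point and $R_0$ the direction space:
\begin{align*}
R_0=\left\{(s_{ijk})\in\mathbf{R}^{IJK}\,\middle|\,\sum_{i,j}w_js_{ijk}=0\ (\forall k),\ \sum_{k}s_{ijk}=0\ (\forall i,j)\right\}.
\end{align*}
The map $Q$ is linear on $\mathbf{R}^{IJK}$, so $Q(R)=Q(r^0)+Q(R_0)$. The "rank" in the claim is then the dimension of $Q(R_0)$.

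\textbf{Nonemptiness of $R$.} Since $\sum_{i,j}w_j=\sum_j\tilde w_j=2$, take $r^0_{ijk}=S_k/(4\pi)$. Then $\sum_{i,j}w_jr^0_{ijk}=S_k/(2\pi)$ holds. Also $\sum_kr^0_{ijk}=1$ holds, because the patch areas add up to the area of the sphere, $\sum_kS_k=4\pi$. This identity is implicit in the setup and in the fact that $R^+$ is nonempty, and I will state it explicitly.

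\textbf{Image lies in $H$.} For any $r\in R$,
\begin{align*}
\sum_{i,j}w_jq_{ij}=\sum_kQ_k\sum_{i,j}w_jr_{ijk}=\frac{1}{2\pi}\sum_kQ_kS_k=0,
\end{align*}
using the assumption $\sum_kQ_kS_k=0$. This proves \eqref{s-wjqij}. Applying the same computation to $s\in R_0$, where the inner sums vanish, shows $Q(R_0)\subset H$. Since every $w_j>0$, the defining functional of $H$ is nonzero, so $H$ is a linear hyperplane of dimension $IJ-1$.

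\textbf{Surjectivity onto $H$.} This is the only step requiring a construction. Fix $k_1\neq k_2$ with $Q_{k_1}\neq Q_{k_2}$. Given $h=(h_{ij})\in H$, define $s\in\mathbf{R}^{IJK}$ by
\begin{align*}
s_{ijk_1}=\frac{h_{ij}}{Q_{k_1}-Q_{k_2}},\qquad s_{ijk_2}=-\frac{h_{ij}}{Q_{k_1}-Q_{k_2}},\qquad s_{ijk}=0\ \ (k\neq k_1,k_2).
\end{align*}
Then $s\in R_0$:
\begin{itemize}
\item the sums over $k$ vanish at every $(i,j)$, since the two nonzero entries cancel;
\item the weighted sums over $(i,j)$ vanish for $k_1$ and $k_2$, because $\sum w_jh_{ij}=0$, and trivially for the other $k$.
\end{itemize}
Moreover $Q(s)_{ij}=(Q_{k_1}-Q_{k_2})\,h_{ij}/(Q_{k_1}-Q_{k_2})=h_{ij}$. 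Hence $Q(R_0)=H$, so the rank is $IJ-1$.

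\textbf{Conclusion.} Since $Q(r^0)\in H$, we get $Q(R)=Q(r^0)+H=H$.

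The main care point is not computational but interpretive. One must read "rank" as the rank of the linear part on the direction space $R_0$. One must also make the implicit normalization $\sum_kS_k=4\pi$ explicit, since it is what guarantees $R\neq\emptyset$.
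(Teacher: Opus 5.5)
Your proof is correct and follows the route the paper indicates. The paper writes out no argument, saying only that the claim follows from $Q_k\neq Q_l$ (with $\sum_k Q_kS_k=0$); your two-patch construction $s_{ijk_1}=-s_{ijk_2}=h_{ij}/(Q_{k_1}-Q_{k_2})$ supplies the surjectivity onto $H$ along exactly that line, your reading of ``rank'' as that of the linear part matches the paper's later phrase ``$Q$ is an affine map of rank $IJ-1$'', and making $\sum_k S_k=4\pi$ explicit is apt since the paper uses it silently in the proof of Claim~\ref{s-claim6}.
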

Let us define the map \(F_I : \mathbf{R}^{(N+1)^2-4} \rightarrow \mathbf{R}^{IJ}\) that sends the tuple of expansion coefficients \(Z=(\hat{\xi}_{0,2},\cdots,\hat{\xi}_{0,N},\hat{\xi}_{1,2},\hat{\eta}_{1,2},\cdots,\hat{\xi}_{N,N},\hat{\eta}_{N,N})\in \mathbf{R}^{(N+1)^2-4}\) to the tuple of corresponding grid values as
\begin{align*}
 \left(\sum_{n=1}^N\sum_{m=-n}^n \hat{\zeta}_{m,n} Y_{m,n}(\lambda_i,\mu_j)\right)_{i,j}\in \mathbf{R}^{IJ},
\end{align*}
where \(\hat{\zeta}_{m,n} = \hat{\xi}_{m,n} - \sqrt{-1}\hat{\eta}_{m,n},\,\hat{\zeta}_{-m,n}=\hat{\zeta}_{m,n}^*\), and \(\hat{\xi}_{0,1},\hat{\xi}_{1,1},\hat{\eta}_{1,1}\) are the fixed values given by the initial angular momentum. The following claim is verified: 
\begin{claim}
 The map \(F_I\) is injective, and therefore it has rank \((N+1)^2-4\). In addition, the image of \(F_I\) is included in the hyperplane \(H\).
\end{claim}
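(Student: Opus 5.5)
The claim has two parts: $F_I$ is injective, and its image lies in $H$. I would prove injectivity by exact discrete orthogonality of the spherical harmonics on the Gaussian grid, and the inclusion in $H$ by exact quadrature of the mean.

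\emph{Injectivity.} First note that $F_I$ is affine, not linear: the degree-one coefficients $\hat{\xi}_{0,1},\hat{\xi}_{1,1},\hat{\eta}_{1,1}$ are fixed. So $F_I(Z)=F_I^0(Z)+g_1$, where $g_1$ is the grid vector of the fixed $n=1$ part and $F_I^0$ is linear in $Z$. Injectivity of $F_I$ is then equivalent to $F_I^0$ having trivial kernel. Since $Z$ has $(N+1)^2-4$ real components, this gives rank $(N+1)^2-4$.

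Suppose $F_I^0(Z)=0$, and write $g(\lambda,\mu)=\sum_{n=2}^N\sum_{|m|\le n}\hat{\zeta}_{m,n}Y_{m,n}$, so that $g$ vanishes at every grid point. Fix $(p,l)$ with $2\le l\le N$ and $|p|\le l$. Then
\[
0=\frac{1}{2}\sum_{i,j}w_j\,g(\lambda_i,\mu_j)\,Y_{p,l}^*(\lambda_i,\mu_j).
\]
The discrete sum is exact for this integrand, for two reasons.
\begin{itemize}
\item In $\lambda$, the sum over $i$ of $e^{\sqrt{-1}(m-p)\lambda_i}$ with $\lambda_i=2\pi i/I$ equals $I\delta_{mp}$, because $|m-p|\le 2N<I$ (recall $I\ge 3N+1$).
\item In $\mu$, after this reduction $m=p$, so the product $P_{m,n}P_{m,l}$ is a polynomial in $\mu$ of degree at most $2N$ (the $(1-\mu^2)^{|m|}$ factor combines into a polynomial). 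Gauss--Legendre with $J=I/2\ge(3N+1)/2$ nodes is exact for degree less than $2J$, and $2N<3N+1\le 2J$.
\end{itemize}
Hence the discrete sum equals $\frac{1}{4\pi}\int_S g\,Y_{p,l}^*\,{\rm d}S=\hat{\zeta}_{p,l}$ by the continuous orthogonality relation in the Appendix. So all $\hat{\zeta}_{p,l}=0$, hence $Z=0$.

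\emph{Image in $H$.} For any $Z$, the function $g$ given by the full expansion with $n=1,\dots,N$ (including the fixed $n=1$ part) has
\[
\sum_{i,j}w_j\,g(\lambda_i,\mu_j)=2\cdot\frac{1}{4\pi}\int_S g\,{\rm d}S,
\]
by the same exactness argument applied with $Y_{0,0}=1$. The right-hand side equals $2\hat{\zeta}_{0,0}$, which is zero because the sum starts at $n=1$. This is exactly \eqref{s-wjqij}, so $F_I(Z)\in H$.

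The main obstacle is the exactness bookkeeping: checking that the longitude sum has no aliasing when $|m-p|\le 2N<I$, and that the Legendre products are genuine polynomials of degree at most $2N$ after pairing equal $m$ (the $\sqrt{1-\mu^2}$ powers cancel only because the orders agree). Once these two facts are in place, both parts of the claim follow immediately.
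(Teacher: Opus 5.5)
Your proof is correct and follows essentially the same route as the paper's proof. The paper likewise uses the discrete orthogonality relation $\frac{1}{2}\sum_{i,j}w_jY_{m,n}Y_{p,q}^*=\delta_{mp}\delta_{nq}$ on the Gaussian grid, justified by the absence of longitudinal aliasing ($|m-p|\le 2N<I$) and by exact Gauss--Legendre quadrature ($n+q\le 2N<2J$). It applies this relation to the linear ($n\ge 2$) part to get injectivity, and takes $(p,q)=(0,0)$ to get the inclusion in $H$. Your explicit separation of the affine map $F_I$ into a linear part plus the fixed $n=1$ term is a small clarification that the paper leaves implicit.
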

\begin{proof}
 The proof is based on the following fact: let \(m,n,p,q\) be integers with \(0\leq n\leq N,0\leq q\leq N, |m|\leq n,|p|\leq q\), then
 \begin{align}
    \frac{1}{2} \sum_{i,j}w_j Y_{m,n}(\lambda_i,\mu_j)Y_{p,q}^*(\lambda_i,\mu_j)=\delta_{mp}\delta_{nq}.\label{s-d_ortho}
 \end{align}
 Indeed,
 \begin{align*}
 Y_{m,n}(\lambda,\mu)Y_{p,q}(\lambda,\mu) &= P_{m,n}(\mu)P_{p,q}(\mu) e^{\sqrt{-1}m\lambda}e^{-\sqrt{-1}p\lambda}\\
 &= P_{m,n}(\mu)P_{p,q}(\mu)e^{\sqrt{-1}(m-p)\lambda}
 \end{align*}
 and
 \begin{align*}
   \sum_{i=1}^I e^{\sqrt{-1}(m-p) \lambda_i}= \sum_{i=1}^I e^{\sqrt{-1}\frac{2\pi(m-p)i}{I}}=I\delta_{mp}
 \end{align*}
 which comes from \(|m-p|\leq 2N< I\) show that \eqref{s-d_ortho} holds when \(m\neq p\). When \(m=p\), \(P_{m,n}(\mu)P_{m,q}(\mu)\) is a polynomial in \(\mu\) of degree \(n+q\). Thus, noting that \(n+q\leq 2N < 3N\leq 2J-1\) and applying the Gauss-Legendre quadrature (see the Appendix to the main manuscript), we obtain
 \begin{align*}
    \sum_{j=1}^J \tilde{w}_j P_{m,n}(\mu_j)P_{m,q}(\mu_j) &= \int_{-1}^1 P_{m,n}(\mu)P_{m,q}(\mu) {\rm d}\mu \\
&= 2 \delta_{nq}.
 \end{align*}
 Thus, \eqref{s-d_ortho} holds even when \(m=p\). We now prove the main claim. Let the expansion coefficients \(\hat{\zeta}_{m,n}=\hat{\xi}_{m,n}-\sqrt{-1}\hat{\eta}_{m,n}\) satisfy
\begin{align*}
 \sum_{n=2}^N \sum_{m=-n}^n \hat{\zeta}_{m,n}Y_{m,n}(\lambda_i,\mu_j) = 0\qquad(\forall i,j).
 \end{align*}
 Multiplying \(w_jY_{p,q}^*(\lambda_i,\mu_j)\) in the above equation, summing over \(i,j\), and applying \eqref{s-d_ortho}, we obtain \(\hat{\zeta}_{p,q}=0\). Since \(p,q\) are arbitrary, \(\hat{\xi}_{m,n}=0\) and \(\hat{\eta}_{m,n}=0\) hold for any \((m,n)\,(2\leq n\leq N, 0\leq m\leq n)\). This shows that \(F_I\) is injective. The second part of the claim can be easily verified using formula \eqref{s-d_ortho} for \((p,q)=(0,0)\).
\end{proof}

 Let the hyperplane \(H\) be endowed with the induced topology from \(\mathbf{R}^{IJ}\). Since \((N+1)^2-4\leq IJ-1\), the map \(F_I\) embeds \(\mathbf{R}^{(N+1)^2-4}\) in \(H\). Thus, the feasible region \(P\subset \mathbf{R}^{(N+1)^2-4}\) can be identified with the intersection \(Q(R^+)\cap F_I(\mathbf{R}^{(N+1)^2-4})\).
 Let \(H^+=Q(R^+)\). Here, we make the following assumption:
 \begin{assum}
 The intersection \(H^+\cap F_I(\mathbf{R}^{(N+1)^2-4})\) has at least one interior point of \(H^+\) in \(H\).
 \end{assum}
 This assumption holds for generic values of the \(Q_k\)'s and angular momentum if the condition \(H^+\cap F_I(\mathbf{R}^{(N+1)^2-4})\neq \emptyset\) is fulfilled (this condition should hold for a physically reasonable choice of angular momentum).
 We can identify the following relationship between \(R^+\) and \(H^+\):
 \begin{claim}
  Suppose that \(q=(q_{ij})_{i,j}\in H^+\) belongs to the boundary \(\partial H^+\) of \(H^+\). Any \(r=(r_{ijk})_{i,j,k}\in R^+\) such that \(Q(r)=q\) belongs to the boundary \(\partial R^+\) of \(R^+\).
 \end{claim}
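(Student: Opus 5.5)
We argue by contraposition: if some $r\in R^+$ with $Q(r)=q$ is an interior point of $R^+$ (relative to $R$), then $q$ is an interior point of $H^+$ (relative to $H$). This contradicts $q\in\partial H^+$. The underlying fact is that $Q$, viewed as a map from the affine space $R$ onto $H$, is a surjective affine map between finite-dimensional affine spaces, and such maps are open.

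\textbf{Step 1 (surjectivity on directions).} By the claim on the rank of $Q$, we have $Q(R)=H$. Let $R_0=R-r$ be the linear direction space of $R$. It is the kernel of the homogeneous constraints $\sum_{i,j}w_jr_{ijk}=0$ for all $k$ and $\sum_k r_{ijk}=0$ for all $(i,j)$. Since $Q$ is linear on $\mathbf{R}^{IJK}$, we get $Q(R_0)=Q(R)-Q(r)=H-q$. Moreover $q\in H$ and $H$ is a linear hyperplane through the origin (it is defined by $\sum w_jq_{ij}=0$), so $H-q=H$. Hence $Q|_{R_0}:R_0\to H$ is a surjective linear map. We choose a linear right inverse $L:H\to R_0$ with $Q\circ L=\mathrm{id}_H$, for instance by inverting $Q$ on a complement of its kernel in $R_0$. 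The map $L$ is bounded, with some operator norm $\|L\|$.

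\textbf{Step 2 (lifting a neighbourhood).} Suppose $r$ is interior in $R^+$. Then there is $\delta>0$ such that every $r'\in R$ with $|r'-r|<\delta$ lies in $R^+$. Given $q'\in H$ with $|q'-q|<\delta/\|L\|$, set $r'=r+L(q'-q)$. Then:
\begin{itemize}
\item $r'\in R$, because $L(q'-q)\in R_0$;
\item $Q(r')=q+(q'-q)=q'$;
\item $|r'-r|\le\|L\|\,|q'-q|<\delta$, so $r'\in R^+$.
\end{itemize}
Hence $q'\in Q(R^+)=H^+$. So the whole $H$-ball of radius $\delta/\|L\|$ around $q$ lies in $H^+$, and $q$ is interior to $H^+$ in $H$. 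This is the desired contradiction.

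The only point needing care is Step 1: interiors are taken relative to the affine subspaces $R$ and $H$, not the ambient spaces. We must therefore check that the direction space of $R$ maps \emph{onto} $H$ itself rather than onto a proper subspace. This follows from the rank statement $Q(R)=H$ together with $0\in H$. No positivity properties of $r$ or any assumption on the $Q_k$ beyond that earlier claim are needed.
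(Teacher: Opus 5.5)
Your proof is correct and is essentially the paper's argument. The paper also assumes an interior preimage $r$ and uses the rank-$(IJ-1)$ surjectivity of $Q:R\to H$ to lift points $q_n\in H\setminus H^+$ with $q_n\to q$ to points $r^n\in R$ with $r^n\to r$; these must then lie in $R^+$, a contradiction. Your bounded right inverse $L$ just makes explicit the lifting $r^n=r+L(q_n-q)$, which the paper asserts without detail.
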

 \begin{proof}
 Assume that there exists a point \(r=(r_{ijk})\in R^+\) such that \(Q(r)=q\) and \(r\) is an interior point of \(R^+\). Since \(q\) belongs to the boundary of \(H^+\), we can take the sequence \(\{q_n\}_
{n=1,2,\cdots}\subset H\) such that each \(q_n\) does not belong to \(H^+\) and \(\{q_n\}_{n=1,2,\cdots}\) converges to \(q\). Since \(Q\) is an affine map of rank \(IJ-1\), there exists a sequence \(\{r^n\}_{n=1,2,\cdots}\subset R\) such that \(Q(r^n)=q_n\,(n=1,2,\cdots)\) and \(\{r^n\}_{n=1,2,\cdots}\) converges to \(r\). For sufficiently large \(n\), \(r^n\) belongs to \(R^+\) since \(r\) is an interior point of \(R^+\). In contrast, \(Q(r^n)=q_n \notin H^+\). This contradicts the definition of \(H^+\).  
 \end{proof}
  \begin{claim}\label{s-claim6}
  If \(q = (q_{ij})_{i,j}\in H^+\) is an interior point of \(H^+\), then there exists an interior point \(r=(r_{ijk})_{i,j,k}\) of \(R^+\) such that \(Q(r)=q\).
 \end{claim}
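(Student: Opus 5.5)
We prove Claim \ref{s-claim6} with a convexity argument. It uses two ingredients: the relative interior of the convex set \(R^+\), and the fact that \(Q\) is an affine map onto the hyperplane \(H\).

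\textbf{Step 1: \(R^+\) has an interior point in \(R\).} Take the uniform state \(r^0_{ijk}=2\pi\cdot S_k/(2\pi\sum_{i,j}w_j)\cdot\frac{1}{\cdot}\), or more simply \(r^0_{ijk}=S_k/\sum_l S_l\). It satisfies \(\sum_k r^0_{ijk}=1\). Since \(\sum_{i,j}w_j=2\) and \(\sum_l S_l=4\pi\), it also satisfies \(\sum_{i,j}w_j r^0_{ijk}=S_k/(2\pi)\), so \(r^0\in R\). Every component is strictly positive, because \(S_k>0\).

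The set \(R^+\) is \(R\) intersected with the finitely many half-spaces \(r_{ijk}\ge 0\). A point of \(R\) with all components positive is therefore an interior point of \(R^+\) relative to \(R\). More generally, \(\mathrm{int}R^+=\{r\in R^+ : r_{ijk}>0\ \forall i,j,k\}\), and this set is convex and nonempty.

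\textbf{Step 2: \(Q\) maps interior points of \(R^+\) to interior points of \(H^+\), and \(Q(\mathrm{int}R^+)\) is dense in \(H^+\).} By the earlier claim, \(Q\colon R\to H\) is an affine surjection, since its rank is \(IJ-1=\dim H\). Affine surjections between finite-dimensional affine spaces are open maps, so \(Q(\mathrm{int}R^+)\) is an open subset of \(H\) contained in \(H^+\). Hence \(Q(\mathrm{int}R^+)\subset \mathrm{int}H^+\).

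For density, let \(r\in R^+\) be arbitrary. The points \(r_t=(1-t)r+t r^0\) with \(0<t\le 1\) all lie in \(\mathrm{int}R^+\), because each component is at least \(t\,r^0_{ijk}>0\). As \(t\to 0\) we have \(Q(r_t)\to Q(r)\). Therefore \(C:=Q(\mathrm{int}R^+)\) is a convex open subset of \(H\) whose closure contains \(H^+\).

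\textbf{Step 3: every interior point of \(H^+\) lies in \(C\).} This is the main point. It is the standard fact that for a convex set \(C\) that is open in \(H\), every point of \(\mathrm{int}\,\overline{C}\) already lies in \(C\).

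Let \(q\in\mathrm{int}H^+\) and pick any \(c\in C\). Because \(q\) is interior, \(q'=q+\epsilon(q-c)\in H^+\subset\overline{C}\) for some small \(\epsilon>0\). Then
\[
q=\frac{1}{1+\epsilon}\,q'+\frac{\epsilon}{1+\epsilon}\,c .
\]
Since \(C\) is open, \(c\) has a neighbourhood \(B\subset C\). Choose \(c'\in C\) close to \(q'\) and write
\[
q=\frac{1}{1+\epsilon}\,c'+\frac{\epsilon}{1+\epsilon}\,c'',
\]
where \(c''=c+\frac{1}{\epsilon}(q'-c')\). If \(c'\) is close enough to \(q'\), then \(c''\in B\subset C\). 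Convexity of \(C\) then gives \(q\in C\).

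So \(q=Q(r)\) for some \(r\in\mathrm{int}R^+\), which proves the claim. The only delicate point is this last approximation step. It is routine, but it must be done carefully, because \(\overline{C}\) is a priori larger than \(C\) and the argument needs \(C\) to be open.
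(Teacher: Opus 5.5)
Your proof is correct, but it takes a different route from the paper.

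\textbf{The paper's route.} The paper first proves a general polyhedral lemma. If $D=\{u : F_i(u)\ge 0\}$ has a strictly feasible point $u_*$ and $T$ is affine of full rank, then every interior point of $T(D)$ has an interior preimage. The proof argues by contradiction through the active constraints. If the inverse image of $v_0$ contains no interior point of $D$, then at a point $u_0$ of that inverse image no fibre direction strictly increases all active constraints at once. Stepping to $u_t=u_0+t(u_0-u_*)$ makes every active constraint strictly negative. Hence every point of the fibre over $T(u_t)$ violates some active constraint, so $T(u_t)\notin T(D)$ while $T(u_t)\to v_0$, contradicting interiority. The claim then follows with $D=R^+$, $u_*$ the uniform state, and $T=Q$.

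\textbf{Your route.} You use only convexity. The set $C=Q(\mathrm{int}R^+)$ is open, since affine surjections are open maps, and it is convex. Its closure contains $H^+$, via segments toward $r^0$. The line-segment principle $\mathrm{int}\,\overline{C}=C$ for open convex $C$ then finishes the argument.

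\textbf{What each buys.} Your argument never uses that $R^+$ is cut out by finitely many affine inequalities. So it proves the general fact that, for any convex set with nonempty relative interior, the relative interior of its affine image is the image of its relative interior. Moreover, the first half of your Step 2, $Q(\mathrm{int}R^+)\subset\mathrm{int}H^+$, is exactly the content of the paper's preceding claim about boundary points of $H^+$. So you obtain $\mathrm{int}H^+=Q(\mathrm{int}R^+)$ in one stroke. The paper's proof is more hands-on. It exhibits explicit points outside $H^+$ arbitrarily close to the given point, by stepping away from the strictly feasible state past the active faces $r_{ijk}=0$. This fits the paper's picture of $\partial P$ as the shadow of those faces.

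\textbf{Cosmetic point.} Delete the garbled first formula for $r^0$ in Step 1. The definition $r^0_{ijk}=S_k/\sum_l S_l=S_k/(4\pi)$ is all you need.
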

 To prove this claim, we use the following lemma:
 \begin{lemma}
 Let \(m\geq 1,n\geq 1\). Let \(T: \mathbf{R}^{m+n}\rightarrow \mathbf{R}^n\) be an affine map of rank \(n\) and \(F_i : \mathbf{R}^{m+n}\rightarrow \mathbf{R}\,\,(i=1,\cdots, k)\) be affine maps. Define the closed set \(D\in \mathbf{R}^{m+n}\) by
 \begin{align*}
 D= \{u=(x_1,\cdots,x_m,y_1,\cdots,y_n)\in \mathbf{R}^{m+n}| F_i(u) \geq 0 \,(\forall i=1,\cdots,k)\}.
 \end{align*}
 Suppose that there exists some \(u_*\in D\) such that \(F_i(u_*)>0\) holds for all \(i=1,\cdots,k\).\\
 If \(v\in \mathbf{R}^n\) is an interior point of \(T(D)\), then \(T^{-1}(v)\), i.e., the inverse image of \(v\), includes at least one interior point of \(D\).
 \end{lemma}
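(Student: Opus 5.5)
The plan is a convexity argument: use the strictly feasible point \(u_*\) as an anchor, and pull the preimage of a point slightly beyond \(v\) back toward \(u_*\).

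First I will show that every point \(u\) with \(F_i(u)>0\) for all \(i=1,\cdots,k\) is an interior point of \(D\). Each \(F_i\) is affine, hence continuous, so the finitely many strict inequalities persist on a small ball around \(u\). That ball is therefore contained in \(D\). So it suffices to find \(u\in T^{-1}(v)\) with \(F_i(u)>0\) for all \(i\).

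Next, set \(v_*=T(u_*)\). If \(v=v_*\), then \(u_*\) itself is the required point, and we are done. Otherwise, since \(v\) is an interior point of \(T(D)\), there is \(t>0\) small enough that
\begin{align*}
w := v + t\,(v - v_*)
\end{align*}
still lies in \(T(D)\). Choose \(u_1\in D\) with \(T(u_1)=w\), and define
\begin{align*}
u := \frac{1}{1+t}\,u_1 + \frac{t}{1+t}\,u_*.
\end{align*}
Because \(T\) is affine and the two coefficients sum to \(1\), we get \(T(u)=\frac{1}{1+t}\bigl(w+t\,v_*\bigr)=v\). Similarly, since each \(F_i\) is affine,
\begin{align*}
F_i(u)=\frac{1}{1+t}F_i(u_1)+\frac{t}{1+t}F_i(u_*).
\end{align*}
The first term is \(\geq 0\) because \(u_1\in D\), and the second is \(>0\) because \(t>0\) and \(F_i(u_*)>0\). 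Hence \(F_i(u)>0\) for every \(i\), and by the first step \(u\) is an interior point of \(D\) lying in \(T^{-1}(v)\).

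The only step needing care is the extrapolation: the point \(w\) must be taken on the far side of \(v\) from \(v_*\), so that \(v\) becomes a convex combination of \(w\) and \(v_*\) with strictly positive weight on \(v_*\). This is where the interiority of \(v\) in \(T(D)\) is used. The rank-\(n\) hypothesis on \(T\) is not needed in this argument; it only ensures that \(T(D)\) can have nonempty interior in \(\mathbf{R}^n\).

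For the application to Claim \ref{s-claim6}, I will take \(D\) to be \(R^+\) written in coordinates on the affine space \(R\), with the \(F_i\) given by the \(r_{ijk}\), and \(T\) to be \(Q\) composed with coordinates on \(H\). The point \(u_*\) will be a uniform-type state with all \(r_{ijk}>0\), for example \(r_{ijk}=S_k/(4\pi)\), using \(\sum_{i,j}w_j=2\).
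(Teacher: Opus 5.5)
Your proof is correct, and it takes a different and shorter route than the paper's. The paper argues by contradiction. It first changes coordinates so that \(T\) is the projection onto the \(y\)-variables, and fixes some \(u_0\in T^{-1}(v_0)\cap D\) with active constraints \(F_1(u_0)=\cdots=F_l(u_0)=0\). It then shows that every fiber direction \(\Delta u=(\Delta u_1,\cdots,\Delta u_m,0,\cdots,0)\) satisfies \(\langle f_i,\Delta u\rangle\leq 0\) for some active \(i\); otherwise a small move inside the fiber would reach a strictly feasible point. From this it deduces that the entire fiber over \(v_t=T(u_0+t(u_0-u_*))\) misses \(D\) for small \(t>0\), so \(v_t\notin T(D)\) while \(v_t\to v_0\). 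Note that \(v_t=v_0+t\,(v_0-T(u_*))\) is exactly your point \(w\). The paper shows that \(w\notin T(D)\) under the contradiction hypothesis, whereas you use interiority to get \(w\in T(D)\) and then pull a preimage of \(w\) back toward \(u_*\) by a convex combination. Your version is direct and constructive, and it needs neither the coordinate normalization nor the active-set analysis along fiber directions. As you observe, it also never uses the rank hypothesis. In fact it uses only convexity of \(D\) and a point \(u_*\in{\rm int}\,D\), so it is the standard line-segment principle and holds for any convex \(D\) with nonempty interior, not just sets cut out by finitely many affine inequalities. The paper's argument works through the polyhedral description, in the same spirit as its claim that preimages of boundary points of \(H^+\) lie on \(\partial R^+\), but for this lemma it gives nothing beyond your convex-combination step. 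Your application to \(R^+\), \(Q\), and the uniform state \(r_{ijk}=S_k/(4\pi)\) matches the paper's.
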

 \begin{proof}
 Since \({\rm rank}\, T = n\), by a transformation of coordinates we only need to prove the lemma when \(T\) is given by
 \begin{align*}
 T: (x_1,\cdots, x_m,y_1,\cdots,y_n) \mapsto (y_1,\cdots,y_n).
 \end{align*}
 We assume that there exists some interior point \(v_0=(y_1,\cdots,y_n)\in T(D)\) of \(T(D)\) such that \(T^{-1}(v_0)\) does not include any interior point of \(D\), and we derive a contradiction. Let \(u_0\) be an element of \(T^{-1}(v_0)\cap D\). Since \(u_0\) is not an interior point of \(D\), at least one of \(F_1(u_0),\cdots,F_k(u_0)\) is zero. Changing the index if needed, we may suppose that \(F_1(u_0)=F_2(u_0)=\cdots=F_l(u_0)=0,\,F_j(u_0)>0\,(j=l+1,\cdots,k)\). Let \(\langle \cdot,\cdot\rangle\) be the standard inner product of \(\mathbf{R}^{m+n}\). We take vectors \(f_i\in \mathbf{R}^{m+n}\) and scalars \(c_i\in \mathbf{R}\) such that \(F_i(u)=\langle f_i, u\rangle+c_i\).
 
 For any vector \(\Delta u\in \mathbf{R}^{m+n}\) having the form
 \begin{align*}
 \Delta u=(\Delta u_1,\cdots,\Delta u_m,0,\cdots,0),
 \end{align*}
 there exists some \(i\in \{1,\cdots,l\}\) such that 
 \(\langle f_i,\Delta u\rangle\leq 0\). 
 If not, there exists some \(\Delta u\in \mathbf{R}^{m+n}\) such that 
 \(\langle f_i, \Delta u\rangle>0\)
 holds for all \(i=1,\cdots,l\). Taking sufficiently small \(\varepsilon>0\) and letting \(u=u_0+\varepsilon \Delta u\), we then have \(T(u)=T(u_0)=v_0\) and \(F_i(u)>0\) holds for all \(i=1,\cdots,k\). Hence, \(u\) is an element of \(T^{-1}(v_0)\) and simultaneously an interior point of \(D\), which contradicts the definition of \(v_0\).
 
 For \(t\in \mathbf{R}\), we let \(u_t = u_0 + t(u_0-u_*)\in \mathbf{R}^{m+n}\). If \(t\) is a small positive number, 
 \begin{align*}
  F_i(u_t) = (1+t)F_i(u_0) - tF(u_*)=-tF_i(u_*)<0\qquad(i=1,\cdots,l),\\
  F_i(u_t)>0\qquad(i=l+1\cdots,k).
 \end{align*}
 Let \(v_t = T(u_t)\). Here, we can show that \(T^{-1}(v_t)\cap D =\emptyset\), i.e., \(v_t\notin T(D)\). Indeed, denoting \(u_t\) by \(u_t=(x_1^t,\cdots,x_m^t,y_1^t,\cdots,y_n^t)\) leads to \(v_t=(y_1^t,\cdots,y_n^t)\), and any \(u'\in T^{-1}(v_t)\) can be written in the form \(u'=(x_1,\cdots,x_m,y_1^t,\cdots,y_n^t)\). Letting
 \begin{align*}
 \Delta u=(x_1-x_1^t,\cdots,x_m-x_m^t,0,\cdots,0)\in \mathbf{R}^{m+n},
 \end{align*}
 then \(\langle f_i ,\Delta u\rangle\leq 0\) holds for some \(i=1,\cdots,l\). Therefore, 
 \begin{align*}
 F_i(u')=F_i(u_t) + \langle f_i ,\Delta u\rangle <0,
 \end{align*}
 which yields \(u'\notin D\). Hence, \(T^{-1}(v_t)\cap D =\emptyset\) follows. 
 
 By taking the limit \(t\to 0\), \(v_t\) can be located arbitrarily close to \(v_0\), which contradicts the assumption that \(v_0\) is an interior point of \(T(D)\). We have proven the lemma.
 \end{proof}
 Now, let us prove Claim \ref{s-claim6}. Since the set \(R\) is a hyperplane of dimension \((IJ-1)(K-1)\), by taking an appropriate coordinate \((x_1,\cdots,x_{(IJ-1)(K-1)})\), the set \(R\) can be identified with \(\mathbf{R}^{(IJ-1)(K-1)}\). Then, each \(r_{ijk}\) can be expressed by an affine map of \((x_1,\cdots,x_{(IJ-1)(K-1)})\). 
 Consider the following point that corresponds to a uniform distribution of each vorticity patch on all grids, which is defined as
 \begin{align*}
 r=(r_{ijk})_{i,j,k},\qquad r_{ijk} = \frac{S_k}{S_1+\cdots+S_K}=\frac{S_k}{4\pi}.
 \end{align*}
 This point belongs to \(R^+\), and satisfies the condition \(r_{ijk}>0\) for all \(i,j,k\).
 Thus, we can take \(R^+\) as the set \(D\) in Lemma 1. In addition, the map \(Q:R\rightarrow H\) is an affine map of rank \(IJ-1={\rm dim}\, H\). Therefore, identifying \(H\) with \(\mathbf{R}^{IJ-1}\), we can apply Lemma 1. Since \(H^+=Q(R^+)\), Claim \ref{s-claim6} has been proven. 
 
 Next, we describe the properties of the mixing entropy \(S_{\rm mix}\) as a function of expansion coefficients. We also describe the properties of the function \(V(Z)\) that is used in the numerical method for the microcanonical problem in the main part of the manuscript. 
 \begin{claim}
 \label{s-claim7}
 Suppose that \(Z\in P\) belongs to the interior of \(P\). Then the interior of \(R^+\cap Q^{-1}(F_I(Z))\) includes a point \(r=r(Z)\) at which the function
 \begin{align*}
 S_{\rm mix} (r) = -\frac{1}{2}\sum_{i,j,k} w_j r_{ijk}\log r_{ijk},\qquad r=(r_{ijk})_{i,j,k}
 \end{align*}
 attains the global maximum in \(R\cap Q^{-1}(F_I(Z))\). Such a point is unique. (In the definition of \(S_{\rm mix}(r)\), we consider that \(r_{ijk}\log r_{ijk}=0\) when \(r_{ijk}=0\).)
 \end{claim}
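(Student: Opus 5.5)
The plan is to combine compactness, strict concavity, and a boundary argument based on the unbounded derivative of \(t\log t\) at \(t=0\).

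Let \(A_Z := R^+\cap Q^{-1}(F_I(Z))\). It is the set of admissible macrostates whose grid vorticity is \(F_I(Z)\), and it is the feasible set on which \(S_{\rm mix}\) is actually maximized. We read the statement as maximization over \(A_Z\), because \(r\log r\) is not real-valued for negative entries.

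First, \(A_Z\) is nonempty and compact. It is nonempty because \(Z\in P\). It is closed as an intersection of closed sets, and it is bounded by the second claim (\(R^+\) is bounded, with \(0\le r_{ijk}\le 1\)). The function \(S_{\rm mix}\) is continuous on \(A_Z\) under the convention \(0\log 0=0\), so it attains a maximum on \(A_Z\). The set \(A_Z\) is convex, being the intersection of a convex set with an affine subspace, and \(S_{\rm mix}\) is strictly concave in \(r\) because \(w_j>0\) and \(-t\log t\) is strictly concave on \([0,\infty)\). Hence the maximizer is unique.

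Next, we show that the relative interior of \(A_Z\) contains a strictly positive point. Since \(Z\in{\rm int}P\), and \(P\) is identified with \(H^+\cap F_I(\mathbf{R}^{(N+1)^2-4})\), we want \(F_I(Z)\) to be an interior point of \(H^+\) in \(H\). This is where Assumption 1 is used. Take \(q_1=F_I(Z_1)\) interior to \(H^+\) in \(H\). For \(Z\in{\rm int}P\), extend the segment from \(Z_1\) through \(Z\) slightly to a point \(Z_2\in P\), so that \(Z=(1-s)Z_1+sZ_2\) with \(s<1\). Then \(F_I(Z)=(1-s)q_1+sF_I(Z_2)\) is a convex combination, with positive weight, of an interior point of the convex set \(H^+\) and a point of \(H^+\). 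It is therefore interior to \(H^+\) in \(H\).

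Claim \ref{s-claim6} then gives an interior point \(r^\ast\) of \(R^+\) with \(Q(r^\ast)=F_I(Z)\). Interior points of \(R^+\) relative to \(R\) have all \(r_{ijk}>0\), because \(R^+\) is cut out of \(R\) by the inequalities \(r_{ijk}\ge 0\) and the uniform state shows that none of them is forced to equality on \(R\). So \(r^\ast\in A_Z\) and \(r^\ast_{ijk}>0\) for every \(i,j,k\).

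The main step, which I expect to be the delicate one, is to show that the maximizer \(\bar r\) has all components strictly positive, so that it lies in the stated interior. Suppose \(\bar r_{ijk}=0\) for the indices in a nonempty set \(Z_0\). Set \(r_\varepsilon=\bar r+\varepsilon(r^\ast-\bar r)\), which lies in \(A_Z\) by convexity. Then
\[
\frac{\rm d}{{\rm d}\varepsilon}S_{\rm mix}(r_\varepsilon)\Big|_{\varepsilon\to0^+}=+\infty .
\]
The reason is that each index in \(Z_0\) contributes \(-\frac12 w_j\,\varepsilon r^\ast_{ijk}\log(\varepsilon r^\ast_{ijk})\), whose derivative in \(\varepsilon\) diverges like \(-\log\varepsilon\). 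The remaining indices have \(\bar r_{ijk}>0\) and contribute bounded derivatives. So \(S_{\rm mix}(r_\varepsilon)>S_{\rm mix}(\bar r)\) for small \(\varepsilon>0\), contradicting maximality. Hence \(\bar r_{ijk}>0\) for all \(i,j,k\), i.e. \(\bar r\) is in the interior of \(R^+\cap Q^{-1}(F_I(Z))\), and we set \(r(Z):=\bar r\).

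Finally, \(\bar r\) is the global maximum on \(A_Z\) and lies in its relative interior. By concavity, it is then also the unique critical point of \(S_{\rm mix}\) restricted to the affine set \(R\cap Q^{-1}(F_I(Z))\) near \(A_Z\). This is the property used for the Lagrange multiplier representation.
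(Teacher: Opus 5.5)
Your proposal is correct and follows essentially the same route as the paper. Existence comes from compactness of \(R^+\cap Q^{-1}(F_I(Z))\), uniqueness from strict concavity, and boundary maximizers are excluded by the divergent slope of \(t\log t\) at \(t=0\) along a segment to a strictly positive feasible point. Your differences are minor and, if anything, tighten the argument. You get uniqueness directly from the coordinatewise strict concavity of \(-t\log t\) instead of the paper's Hessian/rank computation. You also spell out, via Assumption 1 and the convex-combination argument, why \(Z\in{\rm int}P\) forces \(F_I(Z)\) to be interior to \(H^+\) in \(H\), a step the paper compresses into ``By Claim~\ref{s-claim6}''.
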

 \begin{proof}
 By Claim \ref{s-claim6}, \(R^+\cap Q^{-1}(F_I(Z))\) has nonempty interior. First, we will show that \(S_{\rm mix}(r)\) is strictly concave in the interior of \(R^+\cap Q^{-1}(F_I(Z))\). Since \({\rm dim}\,R\cap Q^{-1}(F_I(Z))=(IJ-1)(K-2)\), taking  appropriate coordinates \((x_1,\cdots,x_L)\,(L=(IJ-1)(K-2))\), each \(r_{ijk}\) can be expressed by an affine map of \((x_1,\cdots,x_L)\). Thus, by some constants \(\rho_{ijk}^l,c_{ijk}\in \mathbf{R}\), \(r_{ijk}\) can be written in the following form:
 \begin{align*}
 r_{ijk} = \sum_{l=1}^L \rho_{ijk}^l x_l + c_{ijk}.
 \end{align*}
 When \(x=(x_1,\cdots,x_L)\) is an interior point of \(R^+\cap Q^{-1}(F_I(Z))\), each \(r_{ijk}\) is positive, and for \(1\leq l,m\leq L\)
 \begin{align*}
 \frac{\partial^2 S_{\rm mix}}{\partial x_l \partial x_m} = -\frac{1}{2}\sum_{i,j,k} \frac{w_j}{r_{ijk}}\rho_{ijk}^l\rho_{ijk}^m.
 \end{align*}
 Here, the Hessian matrix \({\rm Hes}(S_{\rm mix})\) is negative definite. Indeed, for \(u=(u_1,\cdots,u_L)\in \mathbf{R}^{L}\), 
 \begin{align*}
 \langle {\rm Hes}(S_{\rm mix}) u,u\rangle &= -\frac{1}{2}\sum_{i,j,k}\sum_{l,m=1}^L\frac{w_j}{r_{ijk}}\rho_{ijk}^l\rho_{ijk}^mu_l u_m\\
 &= -\frac{1}{2}\sum_{i,j,k}\frac{w_j}{r_{ijk}}\left(\sum_{l=1}^L \rho_{ijk}^l u_l\right)^2\leq 0.
 \end{align*}
 The equality holds if and only if
 \begin{align*}
 \sum_{l=1}^L \rho_{ijk}^l u_l = 0
 \end{align*}
 holds for all \(i,j,k\). Since the inclusion map \(\iota: R\cap Q^{-1}(F_I(Z))\rightarrow \mathbf{R}^{IJK}\) has rank \((IJ-1)(K-2)\), the condition for equality is equivalent to \(u=0\). Therefore, \(S_{\rm mix}\) is strictly concave in the interior of \(R^+\cap Q^{-1}(F_I(Z))\). 
 
 Second, we will show that the point that gives the maximum of \(S_{\rm mix}\) in \(R^+\cap Q^{-1}(F_I(Z))\) cannot belong to the boundary of \(R^+\cap Q^{-1}(F_I(Z))\). Let \(r^{\rm b} = (r_{ijk}^{\rm b})_{i,j,k}\in R^+ \cap Q^{-1}(F_I(Z))\) be an arbitrary point on the boundary. 
 Then, \(r^{\rm b}_{i_0j_0k_0}=0\) holds for some \(i_0,j_0,k_0\).
 Let \(r^{\rm i} = (r_{ijk}^{\rm i})_{i,j,k}\) be an interior point of \(R^+\cap Q^{-1}(F_I(Z))\). Since \(R^+\cap Q^{-1}(F_I(Z))\) is a convex set, \(r^t=(1-t)r^{\rm i}+t r^{\rm b}\) belongs to \(R^+\cap Q^{-1}(F_I(Z))\) for all \(t\in [0,1]\). In particular, \(r^t\) is an interior point when \(0\leq t <1\). Now, 
 \begin{align*}
  \frac{{\rm d}}{{\rm d} t} S_{\rm mix}(r^t) = -\frac{1}{2}\sum_{i,j,k}w_j (1+\log r_{ijk}^t)(r_{ijk}^{\rm b}-r_{ijk}^{\rm i}).
 \end{align*}
 By noting that \(-w_{j_0}(1+\log r_{i_0j_0k_0}^t)(r_{i_0j_0k_0}^{\rm b}-r_{i_0j_0k_0}^{\rm i})\to -\infty\) as \(t\to 1\), we see that the derivative of \(-S_{\rm mix}(r^t)\) can be arbitrarily large in a neighborhood of \(t=1\). Thus, in any neighborhood of \(r^b\), there exists an interior point \(r\) such that \(S_{\rm mix}(r) > S_{\rm mix}(r^{\rm b})\). This means that \(S_{\rm mix}\) cannot attain the maximum value on the boundary of \(R^+\cap Q^{-1}(F_I(Z))\). Since \(S_{\rm mix}\) is a continuous function on the compact set \(R^+\cap Q^{-1}(F_I(Z))\), it attains a maximum value in \(R^+\cap Q^{-1}(F_I(Z))\). As we have shown, the maximum point can only be in the interior of \(R^+\cap Q^{-1}(F_I(Z))\). Meanwhile, the strictly concave function \(S_{\rm mix}\) has a unique critical point if it exists in the interior of \(R^+\cap Q^{-1}(F_I(Z))\), and \(S_{\rm mix}\) attains a global maximum in \(R^+\cap Q^{-1}(F_I(Z))\) at the point. The claim has been proven.
 \end{proof}
 \begin{claim}
 \label{s-claim8}
 In Claim \ref{s-claim7}, the map that sends \(Z\) to \(r(Z)\) is a smooth map from the interior of \(P\) to \(R^+\).
 \end{claim}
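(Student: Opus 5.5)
The plan is to realize $r(Z)$ as the solution of a nondegenerate system of Lagrange equations and apply the implicit function theorem. Fix $Z_0\in{\rm int}\,P$ and let $r_0=r(Z_0)$, which by Claim \ref{s-claim7} is an interior point of $R^+\cap Q^{-1}(F_I(Z_0))$, so every component of $r_0$ is strictly positive. The fibres $R\cap Q^{-1}(F_I(Z))$ are parallel affine subspaces of $R$ of dimension $L=(IJ-1)(K-2)$. They are all translates of the fixed linear subspace $W=\ker Q|_{R-R}$, since the affine map $Q$ has constant rank $IJ-1$ and $F_I(Z)$ ranges in $H=Q(R)$.

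\textbf{Step 1: a smooth parametrization of the fibres.} Because $Q:R\to H$ is surjective and affine, it has a smooth (indeed affine) right inverse $\sigma:H\to R$ with $Q\circ\sigma={\rm id}_H$. Fix a basis $e_1,\dots,e_L$ of $W$. Then every point of $R\cap Q^{-1}(F_I(Z))$ can be written uniquely as
\[
r=\sigma(F_I(Z))+\sum_{l=1}^L x_l e_l .
\]
This gives coordinates $(Z,x)$ that depend affinely on $Z$. Define
\[
\Phi(Z,x)=S_{\rm mix}\Bigl(\sigma(F_I(Z))+\sum_l x_l e_l\Bigr).
\]
This function is smooth (real-analytic) on the open set where all components $r_{ijk}>0$, and that open set contains $(Z_0,x_0)$.

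\textbf{Step 2: implicit function theorem.} By Claim \ref{s-claim7}, $x_0$ is the unique critical point of $\Phi(Z_0,\cdot)$, so $\nabla_x\Phi(Z_0,x_0)=0$. The proof of Claim \ref{s-claim7} gives the Hessian
\[
\partial_x^2\Phi=-\tfrac12\sum_{i,j,k}\frac{w_j}{r_{ijk}}\,(e_l)_{ijk}(e_m)_{ijk}.
\]
This matrix is negative definite at $(Z_0,x_0)$, because the $e_l$ are linearly independent in $\mathbf{R}^{IJK}$ and $w_j/r_{ijk}>0$. Hence it is invertible, and the implicit function theorem yields a neighbourhood $U$ of $Z_0$ and a smooth map $\tilde x:U\to\mathbf{R}^L$ with $\tilde x(Z_0)=x_0$ and $\nabla_x\Phi(Z,\tilde x(Z))=0$.

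\textbf{Step 3: identifying $\tilde x$ with the maximizer.} This is where the main care is needed. Shrinking $U$, continuity of $\tilde x$ and the openness of positivity make $\tilde r(Z)=\sigma(F_I(Z))+\sum_l\tilde x_l(Z)e_l$ have all components positive. Thus $\tilde r(Z)$ is an interior critical point of the strictly concave function $S_{\rm mix}$ on the fibre over $Z$. Note that $U\subset{\rm int}\,P$ can be arranged, since $\tilde r(Z)\in R^+$ shows $Z\in P$. By the uniqueness statement of Claim \ref{s-claim7}, $\tilde r(Z)=r(Z)$ on $U$. Therefore $Z\mapsto r(Z)$ is smooth near every point of ${\rm int}\,P$, hence smooth on ${\rm int}\,P$, with values in $R^+$.

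The same computation also gives the derivative of $r(Z)$ as $-(\partial_x^2\Phi)^{-1}\partial_Z\nabla_x\Phi$. This can be re-expressed through the Lagrange multipliers used numerically in the main text.
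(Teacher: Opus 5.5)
Your proof is correct and is essentially the paper's argument. Like the paper, you use fibre coordinates that depend affinely on $Z$, the negative definite (hence invertible) Hessian from the proof of Claim \ref{s-claim7}, and the implicit function theorem applied to $\nabla_x S_{\rm mix}=0$ near $(Z_0,x_0)$. Your Step 3, which identifies the implicit-function branch with the maximizer via uniqueness of the interior critical point, makes explicit a step the paper passes over. One small point there: $U\subset{\rm int}\,P$ is obtained simply by intersecting with the open set ${\rm int}\,P$, since $\tilde r(Z)\in R^+$ by itself only gives $Z\in P$.
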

 \begin{proof}
 We only need to prove the claim in a neighborhood of an interior point \(Z_0\) of \(P\). Let \(r^0=r(Z_0)\). The numbers \(c_{ijk}\) that appeared in the proof of Claim \ref{s-claim7} can be considered to be affine functions of \(Z\). Then, \(S_{\rm mix}\) is a function of \(x_1,\cdots,x_L\) and \(Z\). Let \(x_1^0,\cdots,x_L^0\) be the coordinates of the point \(r^0\). As we have shown in the proof of Claim \ref{s-claim7}, the matrix
 \begin{align*}
  \left( \frac{\partial^2 S_{\rm mix}}{\partial x_l \partial x_m}(x_1^0,\cdots,x_L^0,Z_0)\right)_{1\leq l,m\leq L}
 \end{align*}
 is negative definite. In particular, this matrix is regular. For each \(Z\) in the interior of \(P\), the maximum point \(r_{\rm max}\) is the solution to the equation
 \begin{align*}
   \frac{\partial S_{\rm mix}}{\partial x_l}(x_1,\cdots,x_L,Z)=0\qquad(l=1,\cdots,L).
 \end{align*}
 By using the implicit function theorem, we find that such \(x_1,\cdots,x_L\) are smooth functions of \(Z\) locally in a neighborhood of \(Z=Z_0\). Therefore, the claim follows. 
 \end{proof}
 The following claim can be proven in almost the same way as Claims \ref{s-claim7} and \ref{s-claim8}.
 \begin{claim}
 Suppose that \(Z\in P\) belongs to the interior of \(P\). Then the interior of \(R^+\cap Q^{-1}(F_I(Z))\) includes a point \(r=\tilde{r}(Z)\) at which the function
 \begin{align*}
 V(r) = -\frac{1}{2} \sum_{i,j,k} w_j \log r_{ijk}
 \end{align*}
 attains the global minimum in \(R\cap Q^{-1}(F_I(Z))\). Such a point is unique. In addition, the map that sends \(Z\) to \(\tilde{r}(Z)\) is a smooth map from the interior of \(P\) to \(R^+\).
 \end{claim}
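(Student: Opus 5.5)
The proof follows Claims \ref{s-claim7} and \ref{s-claim8} step by step, with \(S_{\rm mix}\) replaced by \(V\) and maximization replaced by minimization. Fix \(Z\in{\rm int}P\). By Claim \ref{s-claim6} (through Assumption 1 and the identification \(P\leftrightarrow H^+\cap F_I(\mathbf{R}^{(N+1)^2-4})\)), the fibre \(D_Z:=R^+\cap Q^{-1}(F_I(Z))\) has nonempty relative interior. It is also compact and convex. As in the proof of Claim \ref{s-claim7}, we take affine coordinates \(x=(x_1,\dots,x_L)\), \(L=(IJ-1)(K-2)\), on \(R\cap Q^{-1}(F_I(Z))\) and write \(r_{ijk}=\sum_l\rho^l_{ijk}x_l+c_{ijk}\). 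Outside \(R^+\) we set \(V=+\infty\), and on \(\partial D_Z\), where some \(r_{ijk}=0\), we also have \(V=+\infty\). On the relative interior, \(V\) is finite and smooth.

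\textbf{Strict convexity.} In the interior,
\[
\frac{\partial^2 V}{\partial x_l\partial x_m}=\frac12\sum_{i,j,k}\frac{w_j}{r_{ijk}^2}\rho^l_{ijk}\rho^m_{ijk},
\qquad
\langle{\rm Hes}(V)u,u\rangle=\frac12\sum_{i,j,k}\frac{w_j}{r_{ijk}^2}\Bigl(\sum_l\rho^l_{ijk}u_l\Bigr)^2\ge0 .
\]
Equality forces \(\sum_l\rho^l_{ijk}u_l=0\) for all \(i,j,k\). Since the inclusion \(\iota\) has full rank \(L\), this gives \(u=0\). So the Hessian is positive definite and \(V\) is strictly convex on the interior.

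\textbf{Existence of an interior minimizer.} This step is actually easier than the corresponding one for \(S_{\rm mix}\). Choose any interior point \(r^{\rm i}\) and put \(M_0=V(r^{\rm i})<\infty\). As a point approaches \(\partial D_Z\), some \(r_{ijk}\to0\) while all \(r_{ijk}\le1\). Hence \(-\log r_{ijk}\ge0\) for every term and one term tends to \(+\infty\), so \(V\to+\infty\). It follows that the sublevel set \(\{V\le M_0\}\cap D_Z\) is a closed subset of the compact set \(D_Z\) that stays away from the boundary. To make this precise, note that on this sublevel set every \(r_{ijk}\ge \exp(-2M_0/w_{\min})\), which is a uniform positive lower bound. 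The continuous function \(V\) therefore attains its minimum on this compact set, and the minimizer lies in the interior. It is the global minimum over \(R\cap Q^{-1}(F_I(Z))\), since \(V=+\infty\) off the interior. Uniqueness follows from strict convexity: two minimizers would give a strictly smaller value at their midpoint, which stays in the interior by convexity.

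\textbf{Smoothness in \(Z\).} We repeat Claim \ref{s-claim8}. Near \(Z_0\in{\rm int}P\), the constants \(c_{ijk}\) are affine in \(Z\), and \(\tilde r(Z)\) solves \(\partial V/\partial x_l(x,Z)=0\) for \(l=1,\dots,L\). The Jacobian of this system in \(x\) is the Hessian above, which is positive definite and hence invertible at \((x^0,Z_0)\). The implicit function theorem then gives a smooth local solution \(x(Z)\). By uniqueness of the minimizer, this local solution coincides with \(\tilde r(Z)\), possibly after shrinking the neighbourhood. Here we use that the minimizer is a nondegenerate critical point and that any interior critical point of a strictly convex function is the minimizer. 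Hence \(Z\mapsto\tilde r(Z)\) is smooth.

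\textbf{Main obstacle.} The delicate point is justifying that the affine parametrization can be chosen to depend affinely on \(Z\). This requires the fibres \(Q^{-1}(F_I(Z))\cap R\) to be parallel translates of a single linear subspace of constant dimension \(L\). That holds because \(Q\) has constant rank \(IJ-1\) and \(F_I(Z)\in H=Q(R)\). I would state this explicitly, using Claims 3 and 4, to justify the coordinates.
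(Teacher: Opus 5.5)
Your proposal is correct and is essentially the paper's argument: the paper proves this claim only by saying it goes ``in almost the same way as Claims 7 and 8,'' and you carry out that transfer (positive-definite Hessian on the fibre, interior minimizer, uniqueness by strict convexity, smoothness by the implicit function theorem). The one real adaptation is the boundary step, where you use \(V\to+\infty\) as some \(r_{ijk}\to 0\) to get a compact sublevel set bounded away from \(\partial D_Z\); this replaces Claim 7's directional-derivative argument, which needs continuity up to the boundary and so is unavailable for \(V\). You are also more explicit than the paper about Assumption 1, about the affine dependence of the fibre on \(Z\), and about identifying the implicit-function branch with \(\tilde{r}(Z)\).
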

 By the above discussion, the functions \(S_{\rm mix}(r(Z))\) and \(V(\tilde{r}(Z))\) are defined for all interior points \(Z\) of \(P\). We can prove the following:
 \begin{claim}
 The function \(S_{\rm mix}(r(Z))\) is strictly concave in the interior of \(P\), and \(V(\tilde{r}(Z))\) is strictly convex in the interior of \(P\).
 \end{claim}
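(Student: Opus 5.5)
The plan is to view both functions as value functions of a jointly concave (resp. convex) program in the pair $(r,Z)$ under linear constraints, and then show that strictness survives partial optimization. Define the affine set
\[
\mathcal{A}=\{(r,Z)\in R\times\mathbf{R}^{(N+1)^2-4} \mid Q(r)=F_I(Z)\},
\]
and let $\mathcal{A}^+=\{(r,Z)\in\mathcal{A}\mid r_{ijk}>0 \ \forall i,j,k\}$. Then
\[
S_{\rm mix}(r(Z))=\max\{S_{\rm mix}(r)\mid (r,Z)\in\mathcal{A}^+\},
\]
where the maximum is attained by Claim \ref{s-claim7}. Similarly, $V(\tilde r(Z))$ is the corresponding minimum of $V(r)$ over the same fibre.

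\textbf{Step 1: non-strict concavity.} Take $Z_0,Z_1\in{\rm int}P$ and $t\in(0,1)$, and put $Z_t=(1-t)Z_0+tZ_1$. Set $r^t=(1-t)r(Z_0)+t\,r(Z_1)$. Because $Q$ and $F_I$ are affine, we have $Q(r^t)=F_I(Z_t)$, and every component of $r^t$ is positive. Hence $r^t$ is feasible for $Z_t$. Concavity of $S_{\rm mix}$ on the positive orthant then gives
\[
S_{\rm mix}(r(Z_t))\ge S_{\rm mix}(r^t)\ge(1-t)S_{\rm mix}(r(Z_0))+tS_{\rm mix}(r(Z_1)).
\]
The identical argument, with $V(r)=-\frac12\sum w_j\log r_{ijk}$ convex, gives convexity of $V(\tilde r(Z))$.

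\textbf{Step 2: strictness.} Equality in the chain above forces equality in the second inequality. The map $r\mapsto -\sum w_j r_{ijk}\log r_{ijk}$ is a sum of strictly concave one-variable functions with positive weights, so equality there forces $r(Z_0)=r(Z_1)$ componentwise. The same holds for $-\log$ in the case of $V$. Write $r^*$ for this common point. Then
\[
F_I(Z_0)=Q(r^*)=F_I(Z_1),
\]
and the injectivity of $F_I$ (Claim 4) gives $Z_0=Z_1$, a contradiction. Therefore strict inequality holds whenever $Z_0\ne Z_1$, which proves strict concavity of $S_{\rm mix}(r(Z))$ and strict convexity of $V(\tilde r(Z))$ on ${\rm int}P$.

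\textbf{Main obstacle.} The one point needing care is that strict concavity of $S_{\rm mix}$ is required on the \emph{whole} space $R$, not only on a fibre $R\cap Q^{-1}(F_I(Z))$ as in Claim \ref{s-claim7}. This holds here because the entropy is a separable sum with positive weights $w_j$, so it is strictly concave on the full positive orthant and therefore on any affine slice of it. Smoothness (Claim \ref{s-claim8}) is not needed for this argument. If one also wants a Hessian statement, one can differentiate $S_{\rm mix}(r(Z))$ via the envelope theorem using the Lagrange multipliers of (SP) and check negative definiteness. I would not pursue that here: the convex-combination argument is complete and avoids any rank computations.
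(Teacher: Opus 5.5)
Your argument is correct and is essentially the paper's proof. It uses the same chain $S_{\rm mix}(r(Z_t))\ge S_{\rm mix}\bigl((1-t)r(Z_0)+t\,r(Z_1)\bigr)\ge(1-t)S_{\rm mix}(r(Z_0))+tS_{\rm mix}(r(Z_1))$, resting on feasibility of the convex combination for $Z_t$ and maximality of $r(Z_t)$. Your Step 2 also makes explicit something the paper only asserts ("the equality holds if and only if $t=0$ or $t=1$"): that $r(Z_0)\neq r(Z_1)$ follows from $Z_0\neq Z_1$ by injectivity of $F_I$, with strictness then coming from the separable, positively weighted form of $S_{\rm mix}$ and of $V$.
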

 \begin{proof}
 Let \(Z_1, Z_2\) be interior points of \(P\). For \(0\leq t\leq 1\), \((1-t)r(Z_1)+tr(Z_2)\) is an interior point of \(R^+\), and we have
 \begin{align*}
 (1-t)S_{\rm mix}(r(Z_1))+tS_{\rm mix}(r(Z_2)) \leq S_{\rm mix}((1-t)r(Z_1)+tr(Z_2)),
 \end{align*} 
 since \(S_{\rm mix}(r)\) is a concave function of \(r\). The equality holds if and only if \(t=0\) or \(t=1\). Since \((1-t)r(Z_1)+tr(Z_2)\in R^+ \cap Q^{-1}(F_I((1-t)Z_1+tZ_2))\) and  \(r((1-t)Z_1+tZ_2)\) is defined so as to maximize \(S_{\rm mix}\) for the point \((1-t)Z_1+tZ_2\), we have
 \begin{align*}
 S_{\rm mix}((1-t)r(Z_1)+tr(Z_2)) \leq S_{\rm mix}(r((1-t)Z_1+tZ_2)).
 \end{align*}
 By the above two inequalities, it has been proven that \(S_{\rm mix}(r(Z))\) is a strictly concave function. The strict convexity of \(V(\tilde{r}(Z))\) can be proven in almost the same way.
 \end{proof}
 Note that we have denoted \(S_{\rm mix}(r(Z))\) and \(V(\tilde{r}(Z))\) by simply \(S_{\rm mix}(Z)\) and \(V(Z)\), respectively, in the main part of the manuscript. In the remainder of this supplementary material, we will use the simplified notation unless there is a risk of confusion.
 \begin{claim}
 \label{s-claim11}
 Let \(M_{\rm min}\in \mathbf{R}\) be the minimum value of \(V(Z)\) in the interior of \(P\). If \(M>M_{\rm min}\), then the level hypersurface
 \begin{align*}
 L_M= \{Z\in P| V(Z)=M\}
 \end{align*}
 of \(V(Z)\) is a compact \((N+1)^2-5\)-dimensional differential manifold. Furthermore, for a non-zero vector \(\zeta\in \mathbf{R}^{(N+1)^2-4}\), the function defined on \(L_M\)
 \begin{align*}
 f(Z) = \langle \zeta, Z \rangle \qquad(Z\in L_M)
 \end{align*}
 has a unique maximum point. The hypersurface \(L_M\) is homeomorphic to an \((N+1)^2-5\)-dimensional sphere
 \begin{align*}
 S^{(N+1)^2-5} = \{(x_1,\cdots, x_{(N+1)^2-4})\in \mathbf{R}^{(N+1)^2-4}| x_1^2+\cdots+ x_{(N+1)^2-4}^2=1\}.
 \end{align*}
 \end{claim}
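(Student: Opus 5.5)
We prove the three assertions in turn, using that \(V\) is smooth and strictly convex on \({\rm int}P\) and that \(V(Z)\to+\infty\) as \(Z\) tends to \(\partial P\) from inside. The blow-up is the first thing to establish. Let \(Z_n\to Z_\infty\in\partial P\) with \(Z_n\in{\rm int}P\). By compactness of \(R^+\), a subsequence of \(\tilde r(Z_n)\) converges to some \(r_\infty\in R^+\) with \(Q(r_\infty)=F_I(Z_\infty)\). By Claim 5, together with the identification \(P\cong H^+\cap F_I(\mathbf{R}^{(N+1)^2-4})\), which lets us pass from \(\partial P\) to \(\partial H^+\), the point \(r_\infty\) lies in \(\partial R^+\). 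Hence some coordinate \(r_{i_0j_0k_0}(Z_n)\to 0\). Since every coordinate is bounded by \(1\), we have \(-\log r_{ijk}\ge 0\), so \(V(Z_n)\ge -\frac12 w_{j_0}\log \tilde r_{i_0j_0k_0}(Z_n)\to+\infty\). It follows that the sublevel set \(K_M=\{Z\in{\rm int}P\mid V(Z)\le M\}\) is closed in \(\mathbf{R}^{(N+1)^2-4}\). It is also bounded, because \(P\) is bounded, being the image of the compact set \(R^+\) under a linear left-inverse of \(F_I\). So \(K_M\) is compact and convex, and \(L_M=\partial K_M\) lies in \({\rm int}P\).

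\textbf{Manifold structure and the sphere.} The minimum \(M_{\min}\) is attained, since \(V\) is continuous and coercive near \(\partial P\). By strict convexity it is attained at a unique point \(Z_*\), and \(\nabla V(Z)\neq 0\) for every \(Z\neq Z_*\). Indeed, for a convex differentiable function a critical point is a global minimizer, and the minimizer is unique. Every point of \(L_M\) with \(M>M_{\min}\) is therefore a regular point of \(V\), and the regular value theorem shows that \(L_M\) is a smooth hypersurface of dimension \((N+1)^2-5\). It is compact because it is closed inside the compact set \(K_M\). For the homeomorphism, take \(Z_*\) as center, so that \(V(Z_*)<M\) and \(Z_*\) is interior to \(K_M\). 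For each unit vector \(\theta\), the map \(s\mapsto V(Z_*+s\theta)\) is strictly convex and has minimum at \(s=0\). It is therefore strictly increasing for \(s>0\), and it tends to \(+\infty\) before the ray leaves \(P\). Hence it crosses the level \(M\) at exactly one \(s(\theta)>0\). The implicit function theorem gives continuity of \(s(\theta)\), because \(\frac{d}{ds}V(Z_*+s\theta)>0\) at the crossing. The map \(\theta\mapsto Z_*+s(\theta)\theta\) is then a continuous bijection from \(S^{(N+1)^2-5}\) onto \(L_M\), and since the sphere is compact it is a homeomorphism.

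\textbf{Unique maximizer of the linear function.} \(f\) is continuous on the compact set \(L_M\), so it attains a maximum there. A maximizer of \(f\) over the convex set \(K_M\) lies on \(\partial K_M=L_M\), because \(\zeta\neq0\) and an interior point can always be pushed in the direction \(\zeta\). Hence \(\max_{L_M}f=\max_{K_M}f\). Suppose \(Z_1\neq Z_2\) in \(L_M\) both attain this maximum. Their midpoint satisfies \(V<M\) by strict convexity, so it lies in the interior of \(K_M\). It also attains the maximum of \(f\) over \(K_M\), and we have just seen that this is impossible at an interior point. Hence the maximizer is unique.

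The main obstacle is the boundary blow-up step. It requires relating \(\partial P\), taken inside \(\mathbf{R}^{(N+1)^2-4}\), to \(\partial H^+\), taken inside \(H\), because \(F_I\) embeds into a lower-dimensional slice of \(H\). The relevant tool is Assumption 1: it says the slice meets the interior of \(H^+\), so the relative boundary of the convex set \(H^+\) intersected with the slice coincides with the slice of \(\partial H^+\). I would verify this with a short convexity argument. Take a point in the relative interior of the slice, join it by a segment to a point that is interior to \(H^+\), and observe that all points on this segment except its far endpoint are interior to \(H^+\).
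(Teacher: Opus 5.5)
Your argument is correct. It differs from the paper's mainly in how the sphere is obtained.

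\textbf{Shared steps.} The paper gets the manifold structure as you do: strict convexity leaves the minimizer as the only critical point, and the regular value theorem applies. It then deduces compactness of \(L_M\) only from boundedness of \(P\), and asserts nonemptiness from ``\(V\) takes arbitrarily large values''. Both need \(V\to+\infty\) at \(\partial P\), so that \(L_M\) is closed, and the paper never proves this. Your blow-up paragraph supplies what the paper leaves implicit.

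For uniqueness of the maximizer, the paper writes \(L_M\) locally as the graph of a strictly convex function over its tangent hyperplane. It argues that two maximizers would have normals parallel to \(\zeta\) with \(V\le M\) along the joining segment, contradicting strict convexity of these local graphs. Your midpoint argument is the same idea, stated more cleanly.

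\textbf{Different step: the homeomorphism.} The paper uses the Gauss map \(Z\mapsto\nabla V(Z)/|\nabla V(Z)|\). Its inverse is the support map sending \(\zeta\) to the unique maximizer of \(\langle\zeta,\cdot\rangle\) on \(L_M\). So the sphere statement is derived from the uniqueness statement, and it directly justifies Step 2 of the microcanonical method: every point of \(L_M\) is reached by maximizing exactly one direction \(W\). Your radial projection from \(Z_*\) is more elementary and independent of the maximizer claim. It needs only monotonicity along rays, the blow-up, and the implicit function theorem, at the price of not exhibiting that correspondence.

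\textbf{Two small remarks.} First, the ``main obstacle'' you identify is not one. The blow-up only needs \(\partial P\subset F_I^{-1}(\partial H^+)\), which is immediate: if \(F_I(Z_\infty)\) were interior to \(H^+\) in \(H\), then continuity of \(F_I\) and \(F_I(\mathbf{R}^{(N+1)^2-4})\subset H\) would put a whole ball around \(Z_\infty\) inside \(P\). Assumption 1 is needed for the converse inclusion, namely \(F_I(Z)\in{\rm int}\,H^+\) for \(Z\in{\rm int}\,P\). That is what Claims 6--7 use to make \(V\) well defined on \({\rm int}\,P\); it plays no role in the blow-up.

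Second, your compactness argument gives \(V(Z_n)\to+\infty\) only along a subsequence. Applying it to every subsequence finishes the step.
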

 \begin{proof}
 The set \(L_M\) is a nonempty set since \(V(Z)\) can take an arbitrarily large value in the interior of \(P\). Because the function \(V(Z)\) is strictly convex, it has a unique critical point in the interior of \(P\), at which \(V(Z)\) attains the minimum value \(M_{\rm mix}\). Therefore, \(L_M\) does not include any critical points of \(V(Z)\), which yields that \(L_M\) is an \((N+1)^2-5\)-dimensional differential manifold. Since \(P\subset \mathbf{R}^{(N+1)^2-4}\) is bounded, the set \(L_M\) is compact.
 
 At an arbitrary point \(Z_0\in L_M\), the vector \(\nu_0=\nabla V(Z_0)\) is a normal vector of \(L_M\) in \(\mathbf{R}^{(N+1)^2-4}\). The tangent space \(T_{Z_0}L_M\) of \(L_M\) at \(Z_0\) is an \((N+1)^2-5\)-dimensional hyperplane which includes \(Z_0\) and is orthogonal to \(\nu_0\). By taking an orthogonal coordinate system \((z_1,\cdots,z_{(N+1)^2-5})\) on \(T_{Z_0}L_M\) and a coordinate \(z_{(N+1)^2-4}\) in the direction of \(-\nu_0\), an orthogonal coordinate system \((z_1,\cdots,z_{(N+1)^2-5},z_{(N+1)^2-4})\) of \(\mathbf{R}^{(N+1)^2-4}\) is obtained. Using this coordinate system, the set \(L_M\) in a neighborhood of \(Z_0\) can be realized as the graph of a strictly convex function \(z_{(N+1)^2-4}= g_{Z_0}(z_1,\cdots,z_{(N+1)^2-4})\). We use this fact to prove the second part of the claim.
 
 Since \(f(Z)\) is a continuous function on the compact set \(L_M\), it attains the minimum value in \(L_M\). Assume that there exist two distinct points \(Z_1\) and \(Z_2\) at which \(f(Z)\) attains the minimum value. By taking the above local coordinate system and the strictly convex function for each of \(Z_1, Z_2\), we obtain that both \(\nabla V(Z_0)\) and \(\nabla V(Z_1)\) must be parallel to \(\zeta\) and have the same direction. From the convexity of \(V(Z)\), \(V(Z)\leq V(Z_0)=V(Z_1)\) for \(Z\) on the line segment from \(Z_0\) to \(Z_1\). However, this cannot occur since the functions \(g_{Z_0}\) and \(g_{Z_1}\) that represent \(L_M\) in a neighborhood of the two points are both strictly convex. Thus, the maximum point of \(f(Z)\) is unique. 
 
 We can now construct a homeomorphism from \(L_M\) to \(S^{(N+1)^2-5}\). Let \(\nu : L_M \rightarrow S^{(N+1)^2-5}\) be a map that sends \(Z\in L_M\) to the (outward) unit normal vector \(\nu(Z)\in S^{(N+1)^2-5}\) of \(L_M\) at \(Z\). In addition, let \(\mu : S^{(N+1)^2-5}\rightarrow L_M\) be a map that sends \(\zeta \in S^{(N+1)^2-5}\) to the point \(Z=\mu (\zeta)\in L_M\) which maximizes \(f(Z)=\langle \zeta,Z\rangle \) on \(L_M\). Now both \(\nu \circ \mu : S^{(N+1)^2-5}\rightarrow S^{(N+1)^2-5}\) and \(\mu \circ \nu : L_M\rightarrow L_M\) are confirmed to be identity maps. Thus, \(\mu\) and \(\nu\) are bijections and inverse maps of each other. Therefore, \(\nu\) is a homeomorphism from \(L_M\) to \(S^{(N+1)^2-5}\) since \(\nu\) is a bijection from a compact space \(L_M\) to a Hausdorff space \(S^{(N+1)^2-5}\).
 \end{proof}
 
 The last part of Claim \ref{s-claim11} states that the level hypersurface \(L_M\) can be parameterized by a sphere. This fact guarantees a quasi-global search for the critical points, although it is nearly impossible to find the feasible region \(P\) explicitly.
 
 \section{On the implementation of the gradient method}
 In the gradient method for the canonical problem, the macroscopic state \(\{r_{ijk}\}\) is considered to be a function of the expansion coefficients \(Z\) of the macroscopic vorticity field through the subproblem (SP). In this section, we give details on how to solve the subproblem (SP) numerically and also describe how to compute the gradient of the mixing entropy \(S_{\rm mix}\). 
 
 \subsection{Numerical solution to the subproblem and reduction of computational cost}
 The subproblem is solved by using Newton's method. Let \(A_{ij}, B_k, C_{ij}\) be the functions which define the constraints of (SP):
 \begin{align*}
  &A_{ij}= \sum_{k=1}^K Q_k r_{ijk} - g_{ij}(Z)\qquad&((i,j)\neq (I,J)),\\
  &B_k = \sum_{i,j} w_j r_{ijk} -\frac{1}{2\pi }S_k \qquad&(k=1,\cdots, K),\\
  &C_{ij} = \sum_{k=1}^K r_{ijk} -1\qquad&((i,j)\neq (I,J)),
 \end{align*}
 where 
 \begin{align*}
  g_{ij}(Z) = \sum_{n=1}^N \sum_{m=-n}^n \hat{\zeta}_{m,n} Y_{m,n}(\lambda_i,\mu_j).
 \end{align*}
 Let 
 \begin{align*}
  L= S_{\rm mix} + \frac{1}{2} \sum_{k=1}^K (b_k +1) B_k+\frac{1}{2} \sum_{(i,j)\neq (I,J)} w_j (a_{ij} A_{ij}+c_{ij}C_{ij})
 \end{align*}
 be the Lagrangian. Here, \(a_{ij},b_k,c_{ij}\) are the multipliers that correspond to \(A_{ij},B_k,C_{ij}\), respectively. From \(\partial L /\partial r_{ijk}=0\), we obtain
 \begin{align}
  r_{ijk} = \exp (Q_k a_{ij} + b_k + c_{ij} ). \label{s-SP_rijk}
 \end{align}
 Here, for simplicity, we let \(a_{IJ}=c_{IJ}=0\). By substituting these \(r_{ijk}\)'s into the constraints \(A_{ij}=0, B_{k}=0, C_{ij}=0\), we obtain a system of \(K+2(IJ-1)\) nonlinear equations for the same number of multipliers. We can obtain the solution to (SP) by solving the system of equations using Newton's method. 
 
 In the implementation of the above procedure, we can reduce the computational cost by making use of an algebraic feature of the system of equations, as we describe in the following: Let \(x=(a_{11},\cdots,a_{I,J-1},b_1,\cdots,b_K,c_{11},\cdots,c_{I,J-1})\) be a tuple of multipliers, and let 
 \begin{align*}
  F(x) = (A_{11}(x),\cdots,A_{I,J-1}(x),B_{1}(x),\cdots,B_{K}(x),C_{11}(x),\cdots,C_{I,J-1}(x))
 \end{align*}
 be a vector-valued function which consists of the functions defining the constraints. Let \(JF(x)\) denote the Jacobian matrix of \(F:\mathbf{R}^{K+2(IJ-1)}\rightarrow \mathbf{R}^{K+2(IJ-1)}\) at \(x\in \mathbf{R}^{K+2(IJ-1)}\). The algorithm of Newton's method is summarized as follows: 
 \begin{itemize}
 \item Choose an initial estimate \(x_0\) of the solution.
 \item Let \(x_k\) be the \(k\)-th estimate of the solution. By solving the equation \(JF(x_k) \Delta x = -F(x_k)\), obtain the \((k+1)\)-th estimate \(x_{k+1}\) by \(x_{k+1}= x_k +\Delta x\).
 \item The sequence \(x_0,\cdots, x_k,\cdots\) converges to a solution of \(F(x)=0\).
 \end{itemize}
 Thus, we need to solve linear equations which have \(JF(x_k)\) as the coefficient matrix for solving (SP). Let us introduce the following two vectors:
 \begin{align}
  \Delta x = (\Delta a_{11},\cdots,&\Delta a_{I,J-1},\Delta b_1,\cdots,\Delta b_K,\Delta c_{11},\cdots, \Delta c_{I,J-1}),\nonumber
 \end{align}
 \begin{align}
   \Delta F = (\Delta A_{11}&,\cdots,\Delta A_{I,J-1},\Delta B_1,\cdots,\Delta B_K,\Delta C_{11},\cdots, \Delta C_{I,J-1}).\nonumber
 \end{align}
 Then the linear equation \(JF(x) \Delta x=\Delta F\) can be written as 
 \begin{align}
  s_{ij}\Delta a_{ij}   + \sum_{k=1}^K Q_k r_{ijk} \Delta b_k + t_{ij} \Delta c_{ij}= \Delta A_{ij}\qquad((i,j)\neq (I,J)),\label{s-Del_Aij}\\
 \sum_{(i,j)\neq(I,J)} w_jQ_k r_{ijk} \Delta a_{ij} + \sum_{i,j} w_j r_{ijk} \Delta b_k +\sum_{(i,j)\neq (I,J)} w_k r_{ijk} \Delta c_{ij} \nonumber\\
 \qquad\qquad\qquad\qquad\qquad\qquad\qquad= \Delta B_k\qquad(k=1,\cdots,K),
 \label{s-Del_Bk}\\
  t_{ij} \Delta a_{ij} + \sum_{k=1}^K r_{ijk} \Delta b_{k} + u_{ij} \Delta c_{ij} = \Delta C_{ij}\qquad((i,j)\neq (I,J)),\label{s-Del_Cij}
 \end{align}
 where 
 \begin{align*}
  s_{ij} = \sum_{k=1}^K Q_k^2 r_{ijk},\qquad t_{ij} = \sum_{k=1}^K Q_k r_{ijk},\qquad u_{ij}=\sum_{k=1}^K r_{ijk},
 \end{align*}
 and the \(r_{ijk}\)'s are determined from \(x\) by \eqref{s-SP_rijk}. The number of unknown variables of the equations \eqref{s-Del_Aij}--\eqref{s-Del_Cij} is \(K+2(IJ-1)\), which has order \(O(N^2)\). Thus, the numerical cost of solving the linear equation is \(O(N^6)\). In fact, however, it is not wise to solve the equation directly. By transforming \eqref{s-Del_Aij} and \eqref{s-Del_Cij}, we obtain
 \begin{align}
  \Delta a_{ij} = \frac{1}{D_{ij}}&\Biggl[t_{ij} \left(\Delta C_{ij} -\sum_{k=1}^K r_{ijk} \Delta b_{k}\right) -u_{ij} \left(\Delta A_{ij} -\sum_{k=1}^K Q_k r_{ijk} \Delta b_k\right)\Biggr],\label{s-Delaij}
 \end{align}
 \begin{align}
  \Delta c_{ij} = \frac{1}{D_{ij}}&\Biggl[t_{ij} \left(\Delta A_{ij} -\sum_{k=1}^K Q_k r_{ijk} \Delta b_{k}\right) -s_{ij} \left(\Delta C_{ij} -\sum_{k=1}^K  r_{ijk} \Delta b_k\right)\Biggr],\label{s-Delcij}
 \end{align}
 where \(D_{ij}=t_{ij}^2 - s_{ij}u_{ij}\). Note that \(D_{ij}\leq 0\) follows from the Cauchy-Schwartz inequality and that the equality does not hold under the assumption that \(Q_k\neq Q_l\) for some \(k\neq l\). Substituting \eqref{s-Delaij} and \eqref{s-Delcij} into \eqref{s-Del_Bk}, we obtain a system of linear equations of only the \(\Delta b_k\)'s. Since the number of unknown variables is now \(K\), the cost for solving this system of linear equations is \(O(N^3)\) provided that \(K\) is \(O(N)\). Thus, we can reduce the computational cost of solving (SP) by using the above reduced system of linear equations rather than the original system \(JF(x)\Delta x = \Delta F\).
 
 \subsection{Computing the gradient of the mixing entropy}
 We have shown that the mixing entropy \(S_{\rm mix}\) as a function of the expansion coefficients \(Z\) is differentiable in the interior of the feasible region \(P\). In this subsection, we describe how to numerically calculate the derivatives of \(S_{\rm mix}\).
 
 Let \(x_0\in \mathbf{R}^{K+2(IJ-1)}\) be the tuple of multipliers that corresponds to the solution to (SP) at \(Z=Z_0\). Thus, \(x_0\) satisfies \(F(x_0)=0\). In addition, we assume that the Jacobian matrix \(JF(x_0)\) of \(F\) at \(x=x_0\) is regular. This assumption is not artificial, since, if \(x_0\) can be calculated by Newton's method, then \(JF(x_0)\) should have an inverse matrix. By the inverse function theorem, there exists a function \(G(y)\) defined in the neighborhood of \(y=0\) such that \(F(x)\) and \(G(y)\) are the inverse functions of each other. The Jacobian matrix \(JG(0)\) of \(G\) at \(y=0\) is the inverse matrix of \(JF(x_0)\). The derivative of \(S_{\rm mix}\) about \(y\) at \(y=0\) is given by
 \begin{align*}
  \frac{\partial S_{\rm mix}}{\partial y_l} = -\frac{1}{2}\sum_{i,j,k} w_j (1+\log r_{ijk})  \sum_{p=1}^L\frac{\partial r_{ijk}}{\partial x_p}\frac{\partial x_p}{\partial y_l},
 \end{align*}
 where \(L=K+2(IJ-1)\). The components of \(x,y\) are written as \(x_l,y_l\), respectively, and \(\partial G_p/\partial y_l\) is symbolically denoted by \(\partial x_p/\partial y_l\). Recalling that \(r_{ijk} = \exp(Q_k a_{ij}+b_k + c_{ij})\), we have
 \begin{align*}
 \sum_{p=1}^L\frac{\partial r_{ijk}}{\partial x_p}\frac{\partial x_p}{\partial y_l}=r_{ijk}\frac{\partial}{\partial y_l}(Q_k a_{ij}+b_k +c_{ij}).
 \end{align*}
 If \(Z\) is changed to \(Z_0+\Delta Z\) (\(\Delta Z\): small), the nonlinear equations that produce the solution to (SP) become
 \begin{align*}
 A_{ij}(x)=g_{ij}(\Delta Z),\qquad B_k(x)=0,\qquad C_{ij}(x)=0.
 \end{align*}
 Hence, the derivative of \(S_{\rm mix}\) about the \(m\)-th component \(Z_m\) of \(Z\) is given by
 \begin{align}
 \frac{\partial S_{\rm mix}}{\partial Z_m} &= \sum_{l=1}^L\frac{\partial S_{\rm mix}}{\partial y_l}\frac{\partial y_l}{\partial Z_m}= \sum_{(i,j)\neq(I,J)}\frac{\partial S_{\rm mix}}{\partial A_{ij}}\frac{\partial g_{ij}}{\partial Z_m}\nonumber\\
  &= -\frac{1}{2}\sum_{i,j,k}\left[w_j (1+\log r_{ijk})r_{ijk} \sum_{(i',j')\neq(I,J)}\frac{\partial}{\partial A_{i'j'}}(Q_k a_{ij}+b_k+c_{ij})\frac{\partial g_{i'j'} }{\partial Z_m}\right].
 \end{align}
 Here, 
 \begin{align*}
 \sum_{(i',j')\neq(I,J)}\frac{\partial}{\partial A_{i'j'}}(Q_k a_{ij}+b_k+c_{ij})\frac{\partial g_{i'j'} }{\partial Z_m}
 \end{align*}
 can be obtained from the solution to the linear equation \(JF(x_0) \Delta x = \Delta F\) for the case of 
 \begin{align*}
 \Delta A_{ij} = \frac{\partial g_{ij}}{\partial Z_m},\qquad\Delta B_k=0,\qquad\Delta C_{ij}=0,
 \end{align*}
 because
 \begin{align*}
 \sum_{i',j'} \frac{\partial a_{ij}}{\partial A_{i'j'}}\frac{\partial g_{i'j'}}{\partial Z_m},\qquad\sum_{i',j'} \frac{\partial b_k}{\partial A_{i'j'}}\frac{\partial g_{i'j'}}{\partial Z_m},\qquad\sum_{i',j'} \frac{\partial c_{ij}}{\partial A_{i'j'}}\frac{\partial g_{i'j'}}{\partial Z_m}
 \end{align*}
 are the components of the solution to the linear equation. Computation of the gradient of \(S_{\rm mix}\) can be performed efficiently using the procedure  described above.
 
 \section{Arguments on \(\overline{q}\)-\(\overline{\psi}\) relations}
 For a statistical equilibrium with non-zero inverse temperature \(\beta\), for some constants \(\Omega_1,\Omega_2,\Omega_3\) and some function \(f\), the macroscopic vorticity field \(\overline{q}\) and the macroscopic stream function \(\overline{\psi}\) are linked by the relation
 \begin{align}
    \overline{q}(\lambda,\mu)=f(\overline{\psi}(\lambda,\mu)+\Omega_1\mu+\Omega_{2}\sqrt{1-\mu^2}\cos \lambda + \Omega_3 \sqrt{1-\mu^2}\sin \lambda).\label{s-qpsirel_theory}
 \end{align}
 The proof is nearly parallel to the discussion by \cite{robert1991statistical} or \cite{herbert2012statistical}. The equilibrium macroscopic state \(r_k(\lambda,\mu)\,(k=1,\cdots,K)\) is a critical point of the mixing entropy
 \begin{align*}
 S_{\rm mix} := -\frac{1}{4\pi}\sum_{k=1}^K\int_S r_k(x) \log r_k(x) {\rm d} S
 \end{align*}
 under the constraints 
 \begin{align*}
& \int_S r_k(x) {\rm d} S =S_k \qquad(k=1,\cdots,K),\\
& \sum_{k=1}^K r_k(x) = 1\qquad(\forall x \in S),\\
& M_1:= \frac{1}{4\pi}\int_S \overline{q} \mu {\rm d} S =M_1^{\rm ini},\\
& M_2:=\frac{1}{4\pi}\int_S \overline{q} \sqrt{1-\mu^2} \cos \lambda {\rm d} S =M_2^{\rm ini},\\
& M_3:=\frac{1}{4\pi}\int_S \overline{q} \sqrt{1-\mu^2} \sin \lambda {\rm d} S =M_3^{\rm ini},\\
& E := -\frac{1}{4\pi}\int_S \frac{1}{2}\overline{\psi} \overline{q} {\rm d} S =E_0.
\end{align*}
 Thus, the macroscopic state is a critical point of the Lagrangian
 \begin{align*}
 L&=S_{\rm mix} - \frac{1}{4\pi}\sum_{k=1}^K \alpha_k \left(\int_S r_k {\rm d} S\right) -\beta E \\
 &\qquad - \gamma_1 M_1 -\gamma_2 M_2 -\gamma_3 M_3-\frac{1}{4\pi}\int_S \delta(\lambda,\mu) \sum_{k=1}^K r_k(\lambda,\mu) {\rm d} S
 \end{align*}
 for some multipliers \(\alpha_k\,(k=1,\cdots,K),\,\beta,\,\gamma_i\,(i=1,2,3),\,\delta (\lambda,\mu)\). Therefore, each \(r_k\) can be written as
 \begin{align*}
  r_k(\lambda,\mu) = \frac{1}{Z}\exp \left(-\alpha_k -Q_k(-\beta\overline{\psi} +\gamma_1 \mu +\gamma_2 \sqrt{1-\mu^2} \cos\lambda +\gamma_3 \sqrt{1-\mu^2} \sin \lambda)\right), \\
  Z = \sum_{k=1}^K \exp \left(-\alpha_k -Q_k(-\beta\overline{\psi} +\gamma_1 \mu +\gamma_2 \sqrt{1-\mu^2} \cos\lambda +\gamma_3 \sqrt{1-\mu^2} \sin \lambda)\right).
 \end{align*}
 Hence, when \(\beta\neq 0\), 
 \begin{align*}
 \overline{q}(\lambda,\mu) = \sum_{k=1}^K Q_k r_k(\lambda,\mu)= \frac{1}{\beta Z} \frac{{\rm d} Z}{{\rm d}\overline{\psi}}.
 \end{align*}
 By defining 
 \begin{align*}
 &f(X)= \frac{1}{\beta g(X)}g'(X),\\
 &g(X) = \sum_{k=1}^K \exp \left(-\alpha_k + \beta Q_k X\right),\\
 &\Omega_i = -\frac{\gamma_i}{\beta}\qquad(i=1,2,3),
 \end{align*}
 we have relation \eqref{s-qpsirel_theory}.
 
 Next, we show that \(\Omega_2\) and \(\Omega_3\) must be zero in \eqref{s-qpsirel_theory} when \(M_1^{\rm ini}\neq 0\) and \(M_2^{\rm ini}=M_3^{\rm ini}=0\). If \(\Omega_1=\Omega_2=\Omega_3=0\), we have nothing to prove. Thus, we assume that at least one of \(\Omega_i\,(i=1,2,3)\) is not zero. Assume that the sphere is embedded in \(xyz\)-space, and let the point \((\lambda,\mu)\) on the sphere correspond to \(x=\sqrt{1-\mu^2}\cos \lambda,\,y=\sqrt{1-\mu^2}\sin\lambda,\,z=\mu\). Taking the orthogonal matrix \(P=(p_{ij})\in SO(3)\) such that
 \begin{align}
    \left(
	\begin{array}{c}
	0\\0\\ \Omega' 
	\end{array}\right)
    =
    \left(
    \begin{array}{ccc}
    p_{11} & p_{12} & p_{13}\\
    p_{21} & p_{22} & p_{23}\\
    p_{31} & p_{23} & p_{33}
    \end{array}
    \right)\left(
    \begin{array}{c}
    \Omega_2\\ \Omega_3 \\ \Omega_1
    \end{array}
    \right),\label{s-omega_transform}
 \end{align}
 we define a new orthogonal system \((x',y',z')\) by the rotation of coordinates represented by \(P\), i.e.,
 \begin{align*}
	\left(\begin{array}{c}
	x'\\y'\\ z'      
	\end{array}\right)
    =
    \left(\begin{array}{ccc}
    p_{11} & p_{12} & p_{13}\\
    p_{21} & p_{22} & p_{23}\\
    p_{31} & p_{23} & p_{33}
    \end{array}\right)
    \left(\begin{array}{c}
    x\\ y \\ z
    \end{array}\right).
 \end{align*}
 For the new coordinate system, we define the longitude-latitude coordinate system on the sphere \((\lambda',\mu')\). We can confirm the relation
 \begin{align*}
 \Omega'\mu' = \Omega_1 \mu + \Omega_2 \sqrt{1-\mu^2}\cos\lambda + \Omega_3 \sqrt{1-\mu^2}\sin\lambda.
 \end{align*}
 Defining vectors \(M={}^t(M_2, M_3, M_1)\) and \(M'= {}^t(M'_2,M'_3,M'_1)\), where
 \begin{align*}
 M'_1 = \int_S \overline{q}\mu' {\rm d} S,\\
 M'_2 = \int_S \overline{q} \sqrt{1-\mu'^2} \cos\lambda' {\rm d} S,\\
 M'_3 = \int_S \overline{q}\sqrt{1-\mu'^2} \sin\lambda' {\rm d} S,\\
 \end{align*}
 we have \(M' = PM\) (note the order of the index). Now, we can consider the macroscopic
 vorticity field \(\overline{q}\) to be rotating at an angular velocity \(\Omega'\) about the \(z'\)-axis. Thus, it is a solution of the Euler equation having the form \(\overline{q}(\lambda',\mu',t)= F(\lambda' - \Omega' t ,\mu')\). Hence,
 \begin{align*}
\frac{{\rm d} M'_1}{{\rm d} t} = \int_{-1}^{1}\int_{0}^{2\pi} (-\Omega')\frac{\partial F}{\partial \lambda'}(\lambda'-\Omega' t,\mu')\mu' {\rm d}\lambda' {\rm d}\mu' =0.
\end{align*}
Using integration by parts, we have
 \begin{align*}
 \frac{{\rm d} M'_2}{{\rm d} t} &= \int_{-1}^1 \int_0^{2\pi} (-\Omega')\frac{\partial F}{\partial \lambda'}(\lambda'-\Omega' t,\mu')\sqrt{1-\mu'^2}\cos \lambda' {\rm d}\lambda' {\rm d}\mu' \\
 &= \Omega' \int_{-1}^1 \int_{0}^{2\pi} F(\lambda'-\Omega't,\mu')\frac{\partial}{\partial\lambda'}(\sqrt{1-\mu'^2}\cos \lambda') {\rm d}\lambda' {\rm d}\mu'\\
 &= -\Omega' \int_{-1}^1 \int_{0}^{2\pi} F(\lambda'-\Omega't,\mu')\sqrt{1-\mu'^2}\sin \lambda' {\rm d}\lambda' {\rm d}\mu' = -\Omega' M'_3.
 \end{align*}
 Similarly,
 \begin{align*}
 \frac{dM'_3}{dt} = \Omega' M'_2.
 \end{align*}
 From the conservation law of angular momentum, \(M'_2=M'_3=0\) is necessary. \(M={}^t PM'\) yields the relations \(M_2 = p_{31}M'_1\) and \(M_3=p_{32}M'_1\). \(M'_1\neq 0\) (since \(M\neq 0\)) and \(M_2=M_3=0\) yield \(p_{31}=p_{32}=0\). Since \(P\) is an orthogonal matrix, \(p_{33}=1,\,p_{13}=p_{23}=0\). We have \(p_{11}\Omega_2 + p_{12}\Omega_3=0,\,p_{21}\Omega_2+p_{22}\Omega_3 =0\) from \eqref{s-omega_transform}. Meanwhile, the matrix 
 \begin{align*}
 \left(\begin{array}{cc}
 p_{11} & p_{12}\\
 p_{21} & p_{22}
 \end{array}\right)
 \end{align*}
 is regular since \(\det P=1\neq 0\). Therefore, \(\Omega_2=\Omega_3=0\) and \(\Omega'=\Omega_1\) follow.

%% file: reflist_abb.bib
@article{bouchet2008simpler,
  title={Simpler variational problems for statistical equilibria of the 2D Euler equation and other systems with long range interactions},
  author={Bouchet, Freddy},
  journal={Physica D},
  volume={237},
  number={14-17},
  pages={1976--1981},
  year={2008},
  publisher={Elsevier}
}

@article{bouchet2002emergence,
  title={Emergence of intense jets and Jupiter's Great Red Spot as maximum-entropy structures},
  author={Bouchet, Freddy and Sommeria, Jo{\"e}l},
  journal={J. Fluid Mech.},
  volume={464},
  pages={165--207},
  year={2002},
  publisher={Cambridge University Press}
}

@article{bouchet2005classification,
  title={Classification of phase transitions and ensemble inequivalence, in systems with long range interactions},
  author={Bouchet, Freddy and Barre, Julien},
  journal={J. Stat. Phys.},
  volume={118},
  number={5},
  pages={1073--1105},
  year={2005},
  publisher={Springer}
}

@article{bouchet2012statistical,
  title={Statistical mechanics of two-dimensional and geophysical flows},
  author={Bouchet, Freddy and Venaille, Antoine},
  journal={Phys. Rep.},
  volume={515},
  number={5},
  pages={227--295},
  year={2012},
  publisher={Elsevier}
}

@article{chavanis1996classification,
  title={Classification of self-organized vortices in two-dimensional turbulence: the case of a bounded domain},
  author={Chavanis, Pierre-Henri and Sommeria, Joel},
  journal={J. Fluid Mech.},
  volume={314},
  pages={267--297},
  year={1996},
  publisher={Cambridge University Press}
}

@book{durran2010numerical,
  title={Numerical methods for fluid dynamics: With applications to geophysics},
  author={Durran, Dale R},
  volume={32},
  year={2010},
  publisher={Springer Science \& Business Media}
}

@article{herbert2012statistical,
  title={Statistical mechanics of quasi-geostrophic flows on a rotating sphere},
  author={Herbert, Corentin and Dubrulle, B{\'e}rengere and Chavanis, Pierre-Henri and Paillard, Didier},
  journal={J. Stat. Mech-Theory E.},
  volume={2012},
  number={05},
  pages={P05023},
  year={2012},
  publisher={IOP Publishing}
}

@article{herbert2013additional,
  title={Additional invariants and statistical equilibria for the 2D Euler equations on a spherical domain},
  author={Herbert, Corentin},
  journal={J. Stat. Phys.},
  volume={152},
  number={6},
  pages={1084--1114},
  year={2013},
  publisher={Springer}
}

@article{ishioka1994non,
  title={Non-linear evolution of a barotropically unstable circumpolar vortex},
  author={Ishioka, Keiichi and Yoden, Shigeo},
  journal={J. Meteorol. Soc. Jpn. Ser. II},
  volume={72},
  number={1},
  pages={63--80},
  year={1994},
  publisher={Meteorological Society of Japan}
}

@article{ishioka1996numerical,
  title={Numerical methods of estimating bounds on the non-linear saturation of barotropic instability},
  author={Ishioka, Keiichi and Yoden, Shigeo},
  journal={J. Meteorol. Soc. Jpn. Ser. II},
  volume={74},
  number={2},
  pages={167--174},
  year={1996},
  publisher={Meteorological Society of Japan}
}

@article{ishioka1990instability,
  title={The stability of zonal flows on a sphere},
  author={Ishioka, Keiichi and Yoden, Shigeo},
  journal={Tsukumo Earth Science},
  number={24},
  pages={p46--61},
  year={1990},
  publisher={}
}

@article{ishioka1998,
  title={},
  author={Ishioka, Keiichi},
  journal={RIMS Kokyuroku},
  volume={1029},
  pages={164--175},
  year={1998},
  publisher={Research Institute for Mathematical Sciences (RIMS)}
}

@article{ishioka2013proof,
  title={A proof for the equivalence of two upper bounds for the growth of disturbances from barotropic instability},
  author={Ishioka, Keiichi},
  journal={J. Meteorol. Soc. Jpn. Ser. II},
  volume={91},
  number={6},
  pages={843--850},
  year={2013},
  publisher={Meteorological Society of Japan}
}

@article{joyce1973negative,
  title={Negative temperature states for the two-dimensional guiding-centre plasma},
  author={Joyce, Glenn and Montgomery, David},
  journal={J. Plasma Phys.},
  volume={10},
  number={1},
  pages={107--121},
  year={1973},
  publisher={Cambridge University Press}
}

@article{montgomery1974statistical,
  title={Statistical mechanics of “negative temperature” states},
  author={Montgomery, David and Joyce, Glenn},
  journal={Phys. Fluids},
  volume={17},
  number={6},
  pages={1139--1145},
  year={1974},
  publisher={American Institute of Physics}
}

@article{kraichnan1975statistical,
  title={Statistical dynamics of two-dimensional flow},
  author={Kraichnan, Robert H},
  journal={J. Fluid Mech.},
  volume={67},
  number={1},
  pages={155--175},
  year={1975},
  publisher={Cambridge University Press}
}

@article{kraichnan1980two,
  title={Two-dimensional turbulence},
  author={Kraichnan, Robert H and Montgomery, David},
  journal={Rep. Prog. Phys.},
  volume={43},
  number={5},
  pages={547},
  year={1980},
  publisher={IOP Publishing}
}

@article{mcwilliams1984emergence,
  title={The emergence of isolated coherent vortices in turbulent flow},
  author={McWilliams, James C},
  journal={J. Fluid Mech.},
  volume={146},
  pages={21--43},
  year={1984},
  publisher={Cambridge University Press}
}

@article{miller1990statistical,
  title={Statistical mechanics of Euler equations in two dimensions},
  author={Miller, Jonathan},
  journal={Phys. Rev. Lett.},
  volume={65},
  number={17},
  pages={2137},
  year={1990},
  publisher={APS}
}

@article{miller1992statistical,
  title={Statistical mechanics, Euler’s equation, and Jupiter’s Red Spot},
  author={Miller, Jonathan and Weichman, Peter B and Cross, MC},
  journal={Phys. Rev. A},
  volume={45},
  number={4},
  pages={2328},
  year={1992},
  publisher={APS}
}

@article{onsager1949statistical,
  title={Statistical hydrodynamics},
  author={Onsager, Lars},
  journal={Il Nuovo Cimento (1943-1954)},
  volume={6},
  number={2},
  pages={279--287},
  year={1949},
  publisher={Springer}
}

@article{prieto2001analytical,
  title={Analytical predictions for zonally symmetric equilibrium states of the stratospheric polar vortex},
  author={Prieto, Ricardo and Schubert, Wayne H},
  journal={J. Atmos. Sci.},
  volume={58},
  number={18},
  pages={2709--2728},
  year={2001},
  publisher={American Meteorological Society}
}

@article{robert1991statistical,
  title={Statistical equilibrium states for two-dimensional flows},
  author={Robert, Raoul and Sommeria, Joel},
  journal={J. Fluid Mech.},
  volume={229},
  pages={291--310},
  year={1991},
  publisher={Cambridge University Press}
}

@article{robert1992relaxation,
  title={Relaxation towards a statistical equilibrium state in two-dimensional perfect fluid dynamics},
  author={Robert, Raoul and Sommeria, Jo{\"e}l},
  journal={Phys. Rev. Lett.},
  volume={69},
  number={19},
  pages={2776},
  year={1992},
  publisher={APS}
}

@article{robert1991maximum,
  title={A maximum-entropy principle for two-dimensional perfect fluid dynamics},
  author={Robert, Raoul},
  journal={J. Stat. Phys.},
  volume={65},
  number={3},
  pages={531--553},
  year={1991},
  publisher={Springer}
}

@article{schubert1999polygonal,
  title={Polygonal eyewalls, asymmetric eye contraction, and potential vorticity mixing in hurricanes},
  author={Schubert, Wayne H and Montgomery, Michael T and Taft, Richard K and Guinn, Thomas A and Fulton, Scott R and Kossin, James P and Edwards, James P},
  journal={J. Atmos. Sci.},
  volume={56},
  number={9},
  pages={1197--1223},
  year={1999}
}

@article{shepherd1988rigorous,
  title={Rigorous bounds on the nonlinear saturation of instabilities to parallel shear flows},
  author={Shepherd, Theodore G},
  journal={J. Fluid Mech.},
  volume={196},
  pages={291--322},
  year={1988},
  publisher={Cambridge University Press}
}

@book{stein2016introduction,
  title={Introduction to Fourier Analysis on Euclidean Spaces (PMS-32), Volume 32},
  author={Stein, Elias M and Weiss, Guido},
  year={2016},
  publisher={Princeton university press}
}

@article{sommeria1991final,
  title={Final equilibrium state of a two-dimensional shear layer},
  author={Sommeria, J and Staquet, C and Robert, R},
  journal={J. Fluid Mech.},
  volume={233},
  pages={661--689},
  year={1991},
  publisher={Cambridge University Press}
}

@article{tanabe1974algorithm,
  title={An algorithm for constrained maximization in nonlinear programming},
  author={Tanabe, KUNIO},
  journal={J. Oper. Res. Soc. Jpn.},
  volume={17},
  number={4},
  pages={184--201},
  year={1974}
}

@article{thess1994inertial,
  title={Inertial organization of a two-dimensional turbulent vortex street},
  author={Thess, A and Sommeria, J and J{\"u}ttner, B},
  journal={Phys. Fluids},
  volume={6},
  number={7},
  pages={2417--2429},
  year={1994},
  publisher={American Institute of Physics}
}

@article{turkington1996statistical,
  title={Statistical equilibrium computations of coherent structures in turbulent shear layers},
  author={Turkington, Bruce and Whitaker, Nathaniel},
  journal={SIAM J. Sci. Comput.},
  volume={17},
  number={6},
  pages={1414--1433},
  year={1996},
  publisher={SIAM}
}

@article{yasuda2017new,
  title={A new interpretation of vortex-split sudden stratospheric warmings in terms of equilibrium statistical mechanics},
  author={Yasuda, Yuki and Bouchet, Freddy and Venaille, Antoine},
  journal={J. Atmos. Sci.},
  volume={74},
  number={12},
  pages={3915--3936},
  year={2017},
  publisher={American Meteorological Society}
}

@article{yoden1993numerical,
  title={A numerical experiment on two-dimensional decaying turbulence on a rotating sphere},
  author={Yoden, Shigeo and Yamada, Michio},
  journal={J. Atmos. Sci.},
  volume={50},
  number={4},
  pages={631--644},
  year={1993}
}

@article{venaille2011oceanic,
  title={Oceanic rings and jets as statistical equilibrium states},
  author={Venaille, Antoine and Bouchet, Freddy},
  journal={J. Phys. Oceanogr.},
  volume={41},
  number={10},
  pages={1860--1873},
  year={2011}
}

@article{turkington2001statistical,
  title={Statistical equilibrium predictions of jets and spots on Jupiter},
  author={Turkington, Bruce and Majda, Andrew and Haven, Kyle and DiBattista, Mark},
  journal={P. NATL. ACAD. Sci. USA.},
  volume={98},
  number={22},
  pages={12346--12350},
  year={2001},
  publisher={National Acad Sciences}
}

@article{michel1994statistical,
  title={Statistical mechanical theory of the great red spot of Jupiter},
  author={Michel, Julien and Robert, Raoul},
  journal={J. Stat. Phys.},
  volume={77},
  number={3},
  pages={645--666},
  year={1994},
  publisher={Springer}
}

@article{chavanis2005statistical,
  title={Statistical mechanics of geophysical turbulence: application to jovian flows and Jupiter’s great red spot},
  author={Chavanis, Pierre-Henri},
  journal={Physica D},
  volume={200},
  number={3-4},
  pages={257--272},
  year={2005},
  publisher={Elsevier}
}

@article{ishioka2018,
  title={A new recurrence formula for efficient computation of spherical harmonic transform},
  author={Ishioka, Keiichi},
  journal={J. Meteor. Soc. Japan},
  volume={96},
  number={},
  pages={241--249},
  year={2018},
  publisher={J-STAGE}
}
